\newtheorem{theorem}{Theorem}
\theoremstyle{plain}
\newtheorem{corollary}[theorem]{Corollary}
\newtheorem{definition}[theorem]{Definition}
\newtheorem{example}[theorem]{Example}
\newtheorem{lemma}[theorem]{Lemma}
\newtheorem{proposition}[theorem]{Proposition}
\newtheorem{remark}[theorem]{Remark}
\newtheorem{assumption}[theorem]{Assumption}
\numberwithin{equation}{section}
\numberwithin{theorem}{section}
\newcommand{\m}{\text{-}}
\newcommand{\Q}{\mathbb{Q}}
\newcommand{\p}{\mathbb{P}}
\newcommand{\cQ}{\mathcal{Q}}
\newcommand{\cl}{\text{cl}}
\newcommand{\conv}{\text{conv}}
\DeclareMathOperator{\esssup}{ess\ sup}
\DeclareMathOperator{\essinf}{ess\ inf}
\newcommand{\I}{\mathbb{R} \backslash \{0\}}
\begin{document}
\title{Robustness of delta hedging in a jump-diffusion model}
\author[F. Bosserhoff and M. Stadje]{Frank Bosserhoff*, Mitja Stadje*}
\address[]
{*Institute of Insurance Science and Institute of Financial Mathematics, Ulm University \newline%
\indent Helmholtzstrasse 20, 89081 Ulm, Germany}%
\email[]{\Letter\ frank.bosserhoff@uni-ulm.de, mitja.stadje@uni-ulm.de}%

\date{\today}
\subjclass[2010]{91G10; 91G80; 97M30} %
\keywords{Delta hedge; robustness; misspecification; good-deal; jump-diffusion}%

\begin{abstract} 
Suppose an investor aims at Delta hedging a European contingent claim $h(S(T))$ in a jump-diffusion model, but incorrectly specifies the stock price's volatility and jump sensitivity, so that any hedging strategy is calculated under a misspecified model. When does the erroneously computed strategy approximate the true claim in an appropriate sense? If the misspecified volatility and jump sensitivity dominate the true ones, we show that following the misspecified Delta strategy does super-replicate $h(S(T))$ in expectation among a wide collection of models. We also show that if a robust pricing operator with a whole class of models is used, the corresponding hedge is dominating the contingent claim under each model in expectation. Such a hedging error is also called a \emph{good-deal} or a $\rho$\emph{-arbitrage}. In the pure Poisson case, we establish that an overestimation of the jump sensitivity results in an almost sure one-sided hedge. Moreover, in general the misspecified price of the option dominates the true one if the volatility and the jump sensitivity are overestimated. Our results rely on proving stochastic flow properties of the jump-diffusion and the convexity of the contingent claim's value function.
\end{abstract}
\maketitle

\section{Introduction}

In the scientific literature on pricing and hedging of contingent claims as well as in practice, one oftentimes presumes against better knowledge that some stochastic model mimics the developments at the stock market appropriately.  However, even if an investor is aware of the model type (for example Markovian diffusion, jump-diffusion, infinite activity pure jump process, etc.), functions such as the drift, the volatility as well as the jump sensitivity are to be specified. Determining an allocation rule based on misspecified model parameters and trading in the \emph{true} stock may result in a severe violation of the hedging objective. As Delta strategies yield a perfect hedge in complete markets, applying them in incomplete markets is tempting as well and often done in practice. This gives rise to studying their general robustness properties. In this paper we closely look at the performance of Delta strategies in jump-diffusion financial markets and provide sufficient conditions under which a Delta hedge yields a hedging error of some European contingent claim $h(S(T))$ which is either zero or a \emph{good-deal} (also called a $\rho$-arbitrage). Denote the set of financial positions which are acceptable by $\mathcal{A}$ and the set of hedging oppertunies (i.e., of admissible portolios for a self-financing strategy with zero initial capital) by $\mathcal{V}$. In the theory of good-deal pricing, the set $\mathcal{A}$ is interpreted to besides zero contain all financial positions which are ``too good to be true for a price of zero''. Then we can define a risk of a financial position $X$, say $\rho(X)$, as the minimal  capital amount needed to finance a hedging strategy such that the position becomes acceptable, i.e.,
\begin{equation}
\label{gooddeal}
\rho(X):=\inf\{m\in\mathbb{R}|\exists P_T \in \mathcal{V}:\,\,\, m+P_T+X\in \mathcal{A}\}.
\end{equation}
A good-deal arises if the price of $X$ is not in the interval $[-\rho(X),\rho(-X)]$ as then one may buy or sell $X$ and create a hedging portfolio with an overall initial gain (i.e., with a negative price) while obtaining a terminal value in $\mathcal{A}$.

 Good-deals or good-deal bounds have been introduced in \cite{bernardo2000gain}, \cite{gdb}, \cite{jaschke2001coherent}, and \cite{vcerny2002theory}. They have then been extended in various works, see for instance \cite{staum2004fundamental}, \cite{slinko}, \cite{cherny2008pricing}, \cite{arai2011good}, \cite{arai2014convex}, \cite{bielecki2015dynamic}, \cite{kratschmer2018optimal}, \cite{bion2020fully}, and are also called ``stricly acceptable opportunities'' in \cite{carr2001pricing}. They are called $\rho$-arbitrage in \cite{armstrong2019ineffectiveness}. Good-deal bounds have been introduced to narrow the no-arbitrage bounds (which rely on a.s. riskless profits) arguing that although not being arbitrage opportunities in the classical sense, good-deals are ``too good to be true'' and therefore should disappear in a competititve market. 
 \par
In addition to assuming a deterministic interest rate and that the misspecified stock price process is Markov, the fundamental assumption enabling the variety of results obtained in this paper is the convexity of the payoff function $h,$ which allows to prove that the option price function is convex in the current stock level. For a diffusion setting, \cite{kjs} show this property employing the theory of stochastic flows; \cite{hobson1998volatility} provides a simplification using coupling techniques and \cite{ekstrom2005superreplication} investigate convexity properties when the claim depends on several underlying assets.\par
 For example, the convexity of the option value function enables us to deduce an ordering result. To be more precise, we show that if the model volatility and jump sensitivity systematically overestimate the true ones, the model option price dominates the corresponding market option price. A general comparison result for the solution of one-dimensional stochastic differential equations (SDEs) can be found in \cite{peng2006necessary}. Ordering results by deriving sufficient conditions for the convexity of Euler schemes, generalizations to multi-dimensional special semimartingales and path-dependent options are to be found in \cite{bergenthum2006comparison,bergenthum2007comparison,bergenthum2008comparison}. Extensions to more general discretization schemes and applications to Bermudan option prices are discussed in \cite{pages2016convex}. Predictable representation results are used in \cite{arnaudon2008convex}. \cite{hobson2010comparison} employs coupling arguments to draw comparisons among option prices in various stochastic volatility models, see also \cite{criens2019couplings}. For convex ordering results with pathwise \emph{It\^{o} calculus} see \cite{kopfer2019comparison}.  A general overview of the impact of model uncertainty on pricing of contingent claims is provided in \cite{cont2006model}. \par
However, contrary to most of the above literature, we are mainly interested in the robustness of Delta hedging strategies. We show that the hedging error of the self-financing Delta hedging portfolio in an incomplete market is a good-deal or a $\rho$-arbitrage with respect to robustified expectations containing the benchmark models as soon as the misspecified volatility and jump sensitivity dominate the true ones. 
In local volatility models a $\Delta$-strategy yields an a.s. perfect hedge; consequently, \cite{kjs} call a Delta strategy robust if the physical Delta strategy is an a.s. superhedge for the claim as soon as the model volatility systematically overestimates the market volatility so that the hedging error not only yiels a good-deal but actually an arbitrage opportunity. \cite{schiedchef} establish the robustness of the Delta hedging strategy for general path-dependent options in local volatility models. They show that a sufficient condition for the robustness of the Delta strategy in every local volatility model is the directional convexity of the payoff function. \cite{gapeev2011robust} investigate  the robust hedging problem  when log-returns of the stock price are Gaussian and self-similar and the investor is not sure whether the market is efficient. In incomplete markets perfect hedging is not possible and consequently besides Delta hedging other methods like superhedging, mean-variance hedging, utility indifference pricing, or quantile hedging are considered. Except for superhedging, which is rather expensive, all these methods allow for a hedging error.

Throughout this paper, we consider an investor modeling the stock price based on L\'evy jump processes, but also allowing for a Brownian component. Such models are known to be more appropriate for mimicking the true stock price development as observed at the stock exchange than pure Brownian models, see for instance \cite{cont} as well as \cite{ct} and references therein. A good overview of the theory of L\'evy processes can be found in \cite{applebaum}, \cite{bertoin} and \cite{sato}, while \cite{barndorff} and \cite{kyprianou} focus on its applications in finance.\par
This paper is structured as follows. In Section 2 the basic model is provided. In Section 3 the main result is Theorem \ref{thm_convex} stating the convexity of the contingent claim's price. The major part of Section 3 is devoted to its proof. Section 4 shows that the Delta hedge leads to a good deal for a risk measure based on a robustification of the expectation, if the true volatility and jump sensitivity are dominated. It is also shown that the misspecified option's price is larger than the true one. Section 5 starts by introducing general robust pricing operators and discusses several examples. After proving a non-smooth version of \emph{It\^{o}'s lemma} for the case of finite jump activity, it is established that in each model the hedging error induced by the Delta strategy is a submartingale, which allows deducing robustness and general good-deal properties. In Section 6, we consider a.s. superhedges and argue that under Markovian assumptions the payoff is not necessarily linear only if the jump sensitivity is independent of the jump size and the stock price process is only driven by either the jump process \emph{or} the diffusion component. For the latter case the robustness of the replicating strategy is exemplified in the pure Poisson case. We remark that all our robustness results are stated for the case of a systematic overestimation of the true volatility and jump sensitivity, but they immediately generalize to an underestimation and lead to corresponding subhedges. 

\ \\

\textbf{Notation.} Denote by $\mathbb{R}_{+}$ the positive real numbers. For any $T > 0$ and a function $f:[0,T] \times \mathbb{R} \to \mathbb{R}$, we write $f \in C^{i,j}$ if $f$ is $i$ (resp. $j$) times continuously differentiable w.r.t. the $i$th (resp. $j$th) variable. Further, for $t \in (0,T)$ and $f: [0,T] \times \mathbb{R} \times \mathbb{R} \to \mathbb{R}$ such that $f \in C^{1,2,\cdot}$, we define $\dot{f}(t,\cdot,\cdot) := \frac{\partial}{\partial t}f(t,\cdot,\cdot)$ and for any $s \in \mathbb{R}$ we denote $f'(\cdot,s,\cdot) := \frac{\partial}{\partial s}f(\cdot,s,\cdot).$ For any function $f: \mathbb{R} \rightarrow \mathbb{R}$, we may write $f'_{+}(x) = \lim_{y \downarrow x} \frac{f(x)-f(y)}{x-y}$ for the right-hand derivative and  $f'_{-}(x) = \lim_{y \uparrow x} \frac{f(x)-f(y)}{x-y}$ for the left-hand derivative of $f$ at $x$ provided they exist. Denote the Borel $\sigma$-algebra of the set $\mathcal{X}$ by $\mathcal{B}(\mathcal{X}).$ On some probability space $(\Omega, \mathcal{F}, \mathbb{P}),$ equalities and inequalities between random variables are understood to hold $\mathbb{P}$-a.s.; two random variables are identified if they are equal a.s. We write $L^{p}$ for the space of $\mathbb{R}$-valued $\mathcal{F}$-measurable random variables $X$ such that $||X||_{L^{p}}:=\left(\mathbb{E}[|X|^{p}]\right)^{1/p} < \infty$, for $p \in [1,\infty).$ The equivalence between any two probability measures $\mathbb{P}$ and $\mathbb{Q}$ is denoted by $\mathbb{P} \sim \mathbb{Q}.$  For some semimartingale $Y$, we write $\mathcal{E}(Y)$ for its stochastic exponential. The minimum of two numbers $a,b \in \mathbb{R}$ is denoted by $a \wedge b.$ We write $sgn$ for the sign function.

\section{Model Setup} 
Consider a continuous-time setting with a horizon $T>0$. Let $(\Omega, \mathcal{F}, \mathbb{P})$ be a probability space that is equipped with a standard one-dimensional Brownian motion $W=(W(t))_{t \in [0,T]}$ and a Poisson random measure $J(dt,dz)$ on $[0,T] \times \I,$ being independent of $W,$ with respective intensity measure $\vartheta(dz)dt.$ Denote its compensated version by $\tilde{J}(dt,dz) = J(dt,dz) - \vartheta(dz)dt.$ Let $(\mathcal{F}_{t})_{t \in [0,T]}$ be the right-continuous completion of the filtration generated by $W$ and $J$. \par Throughout the paper we assume that two assets are continuously traded in a frictionless financial market. One of them is the money market whose price at any time $t \in [0,T]$ we denote by $M(t)$ with
\begin{equation}
M(t) = e^{\int_{0}^{t} r(u) du},
\label{eq:moneymarket}
\end{equation} 
for some deterministic interest rate process $r$ satisfying $\int_{0}^{T} |r(u)|\ du < \infty.$ The other asset is denoted by $S$ and henceforth regarded as the \emph{true} stock price process whose realization is displayed at the stock exchange. Its price process $S=(S(t))_{t \in [0,T]}$ satisfies the SDE
\begin{equation}
\frac{dS(t)}{S(t\m)} = r(t) \ dt + \sigma(t) \ dW(t) + \int_{\I} \eta(t,z)\ \tilde{J}(dt,dz),
\label{eq:r_stock}
\end{equation}
\noindent
with $S(0\m) = S(0) > 0.$ The processes $\sigma$ and $\eta$ are assumed to be $(\mathcal{F}_{t})_{t \in [0,T]}$-predictable, $\sigma$ is non-negative and satisfies $\int_{0}^{T} \sigma(t)^{2} \ dt < \infty$ a.s., while $\eta$ is strictly larger than $-1$ and satisfies $\int_{0}^{T} \int_{\I} \eta(t,z)^{2}\ \vartheta(dz)\ dt < \infty$. We denote the jump of $S$ at time $t$ by $\Delta S(t,z) = S(t\m) \eta(t,z).$ In \eqref{eq:r_stock} the mean rate of return is equal to the interest rate $r(t)$, therefore under $\mathbb{P}$ the discounted version of $S$ is a local martingale. As we are mainly interested in the calculation of prices of contingent claims in this work, we restrict to risk-neutral modeling and do not consider the statistical probability measure of $S$. We note, however, that under the physical measure only the drift in \eqref{eq:r_stock} and \eqref{eq:m_stock} below would change which does not have any consequences on hedging and pricing regardless of possible misspecification. The measure $\mathbb{P}$, under which $S$ in \eqref{eq:r_stock} is specified, can be thought of as reference risk-neutral measure. 
We impose the following assumption:

\begin{assumption}
We assume that the local martingale $\tilde{S} := S/M$ is a square-integrable martingale, that is, $\tilde{S} = (\tilde{S}(t))_{t \in [0,T]}$ is a martingale and it holds that
$\mathbb{E}[\tilde{S}(t)^{2}] < \infty, \ \ t \in [0,T].$
\label{ass_martingale}
\end{assumption} 
\noindent
In order to consider options written on $S$, we define payoff functions as follows:

\begin{definition}
A \emph{payoff function} is a convex function $h: \mathbb{R}_{+} \to \mathbb{R}$ having bounded one-sided derivatives, that is
$
|h'_{\pm}(x)| \leq L,\ x > 0,
$
for some positive constant $L$.
\label{def_payoff}
\end{definition}
\noindent
In the sequel, $h$ refers to a non-further specified arbitrary payoff function. A \emph{European contingent claim} is a non-path-dependent contract paying $h(S(T))$ at time $T.$ As the financial market defined by \eqref{eq:moneymarket} - \eqref{eq:r_stock} is generically incomplete, $h(S(T))$ is not necessarily perfectly replicable. This gives rise to the study of possible hedging strategies. We call a bounded predictable process $y=(y(t))_{t \in [0,T]}$ a \emph{self-financing trading strategy}, and the induced portfolio process $P^{y} = (P^{y}(t))_{t \in [0,T]}$ described by the SDE

\begin{equation}
dP^{y}(t) = P^{y}(t) r(t)\ dt + y(t) [dS(t) - r(t)S(t)\ dt], \ P^{y}(0)>0,
\label{eq:portfolio}
\end{equation}
\noindent
whose solution is actually given by

\begin{equation}
P^{y}(t) = M(t) \left[P^{y}(0) + \int_{0}^{t} y(u\m)\ d\tilde{S}(u) \right],\ \ t \in [0,T],
\label{eq:sup_portfolio}
\end{equation}
\noindent
is the \emph{hedging portfolio}. Changes in the value of the portfolio process defined by \eqref{eq:portfolio} are caused only by movements in the assets' price processes and trading gains. In particular, no money is inserted or withdrawn. Due to Assumption \ref{ass_martingale} the process $P^{y}/M$ is a martingale. 

Suppose an investor seeking to hedge $h(S(T))$ knows that the dynamics of $S$ is driven by a Brownian motion supplemented by jumps, but in her pricing and hedging model incorrectly specifies the volatility process and the jump sensitivity. Define the misspecified stock price process $S_{m}^{x} = (S_{m}^{x}(t))_{t \in [0,T]}$ as solution of the SDE

\begin{equation}
\frac{dS_{m}^{x}(t)}{S_{m}^{x}(t\m)} = r(t) \ dt + \gamma(t,S_{m}^{x}(t))\ dW(t) + \int_{\I} \tilde{\gamma}(t,S_{m}^{x}(t\m),z) \ \tilde{J}(dt,dz).
\label{eq:m_stock}
\end{equation}
The dependence on the initial price $S_{m}^{x}(0) = x > 0$ is expressed by the superscript $x.$ When referring to the realized price at some time $t > 0$, we are going to add it to the superscript. Denote the jump of $S_{m}^{x}$ at time $t$ by $\Delta S_{m}^{x}(t,z) = S_{m}^{x}(t\m) \tilde{\gamma}(t,S_{m}^{x}(t\m),z).$  The functions $\gamma$ and $\tilde{\gamma}$ are respectively the misspecified volatility and jump sensitivity. These functions are presumed to be random only through their dependence on the stock price $S_{m}^{x}.$ The subscript $m$ indicates the stock price with the \emph{misspecified} volatility and jump sensitivity. This paper  investigates the impact of a systematic overestimation of the latter on Delta hedging strategies of $h(S(T)).$ To be more precise, we assume that

\begin{equation}
\sigma(t) \leq \gamma(t,S(t)) \ \text{and}\ \text{sgn}(\tilde{\gamma}(t,\tilde{S}_{m}^{x}(t),z) - \eta(t,z)) = \text{sgn}(\eta(t,z)),
\label{eq:cond_mon}
\end{equation}
$d\mathbb{P} \times dt$ and $d\mathbb{P} \times dt \times \vartheta(dz)$-a.s. The second part of the previous condition obviously means that a positive jump sensitivity is always systematically overestimated and a negative one underestimated, i.e., 
\begin{align*}
&\tilde{\gamma}(t,S(t),z) \geq \eta(t,z), \text{if}\ \eta(t,z) \geq 0, \\ & \tilde{\gamma}(t,S(t),z) \leq \eta(t,z), \text{if}\ \eta(t,z) < 0.
\end{align*}
\noindent
Weaker assumptions are possible but are not considered for the ease of exposition. Under \eqref{eq:cond_mon} robustness properties are established in this paper. In order to enable this we need the following assumption:
\begin{assumption} Consider the process $S_{m}^{x}$ defined by \eqref{eq:m_stock}.
\begin{enumerate}[(i)]
	\item Assume that $\gamma:[0,T] \times \mathbb{R}_{+} \rightarrow \mathbb{R}$ is continuous and bounded from above. Defining $\rho(t,s) := s\gamma(t,s),$ suppose that $\rho'(t,s)$ is continuous in $(t,s)$ and locally Lipschitz continuous and bounded in $s \in \mathbb{R}_{+}$, uniformly in $t \in [0,T].$
	\item Assume that $\tilde{\gamma}:[0,T] \times \mathbb{R}_{+} \times \I \rightarrow \mathbb{R}$ is continuous, bounded from above and $\tilde{\gamma}(t,s,z) > -1.$ Defining $\tilde{\rho}(t,s,z) := s\tilde{\gamma}(t,s,z),$ suppose that $\tilde{\rho}'(t,s,z)$ is continuous in $(t,s,z)$, locally Lipschitz continuous and bounded in $s \in \mathbb{R}_{+}$, uniformly in $t \in [0,T]$, $\tilde{\rho}'(t,s,z) > -1+\epsilon$, for some $ \epsilon > 0,$ and that there exists a constant $L > 0$ such that
	\begin{align*}
	&\int_{\I}(\tilde{\rho}(t,s_{1},z) - \tilde{\rho}(t,s_{2},z))^{2}\ \vartheta(dz) \leq L\cdot |s_{1}-s_{2}|^{2},\\
	&\int_{\I} \tilde{\rho}'(t,s,z)^{2}\ \vartheta(dz) \leq L. 
\end{align*}	
\end{enumerate}
\label{ass_vol}
\end{assumption} 
\noindent
Assumption \ref{ass_vol}(i) is also used in \cite{kjs} and is a standard one. Assumption \ref{ass_vol}(ii) is novel; we remark that boundedness and Lipschitz conditions are not unusual in the SDE literature with jumps. The condition that $\tilde{\rho}'(t,s,z) > -1+\epsilon$ is needed to derive the monotonicity of the stochastic flows in the initial value, which is needed to show convexity of the contingent claim's price. A counterexample is provided below showing that if $\tilde{\rho}'(t,s,z) \leq -1,$ the effect of a jump could be larger than the value of the stock before the jump and would be powerful enough to flip the sign, and thus reverse the ordering. This would lead to a stock price process which is not monotone in the initial value.
\par Denote the discounted version of $S_{m}^{x}$ by $\tilde{S}_{m}^{x} := S_{m}^{x}/M.$ Define

$$\mathcal{Q}_{em} := \{\Q \sim \mathbb{P}| \tilde{S}_{m}^{x}\ \text{is a martingale w.r.t.}\ \Q\},$$
i.e., $\mathcal{Q}_{em}$ is the set of all equivalent martingale measures (EMMs) of $\tilde{S}_{m}^{x}$ (see Lemma \ref{lemma_girsanov} for a detailed characterization). For some subset $\mathcal{M} \subseteq \mathcal{Q}_{em}$, we call a stochastic process $X$ an $\mathcal{M}$-(sub/super-)martingale if it is a (sub/super-)martingale w.r.t. all measures $\mathbb{Q} \in \mathcal{M}.$
%

\section{Convexity of European contingent claim value}
The convexity of the European contingent claim value is the main tool in the proofs in subsequent sections. We formally define it under the reference measure $\mathbb{P}$ as follows:
\begin{definition}
The \emph{misspecified value} at time $t$ of the European contingent claim with payoff function $h$ is 
\begin{equation*}
v_{m}(t,x) := \mathbb{E} \left[h(S_{m}^{t,x}(T))\ e^{-\int_{t}^{T} r(u) du} \right], \ \ t \in [0,T],\ x > 0.
\end{equation*}
\label{def_price}
\end{definition}
\noindent
The next theorem is formulated for the time-zero misspecified price only. The generalization to arbitrary $(t,s) \in [0,T] \times \mathbb{R}_{+}$ is immediate.

\begin{theorem}
Consider the process $S_{m}^{x}$ described by \eqref{eq:m_stock}, suppose that Assumption \ref{ass_martingale} and Assumption \ref{ass_vol} are satisfied. Then the European contingent claim value $v_{m}(x):=v_{m}(0,x)$ is convex in $x$. 
\label{thm_convex}
\end{theorem}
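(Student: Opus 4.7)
The plan is a double regularization argument combined with stochastic flow techniques, extending the paradigm of \cite{kjs} and \cite{hobson1998volatility} from the diffusion to the jump-diffusion setting. First I would approximate $h$ by a sequence of $C^{\infty}$ convex payoff functions $h_n$ with uniformly bounded one-sided derivatives, obtained by mollifying $h$ (standard mollification preserves convexity and the Lipschitz bound from Definition \ref{def_payoff}). By Assumption \ref{ass_martingale} and the standard $L^2$-estimates that Assumption \ref{ass_vol} affords on \eqref{eq:m_stock}, $\sup_{x \in K} \mathbb{E}[S_m^x(T)^2] < \infty$ for every compact $K \subset \mathbb{R}_+$, so dominated convergence gives $v_m^{(n)}(x) \to v_m(x)$ pointwise; since pointwise limits of convex functions are convex, it suffices to treat smooth $h_n$. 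To make sense of the \emph{second} variation of the stochastic flow, I would likewise approximate $\rho$ and $\tilde\rho$ by $C^2$ (in $s$) functions $\rho_k, \tilde\rho_k$ that still satisfy uniform versions of Assumption \ref{ass_vol}, letting $S_m^{x,k}$ denote the resulting SDE solutions; standard SDE stability yields $S_m^{x,k}(T) \to S_m^x(T)$ in $L^2$, reducing the problem to the doubly smooth case.

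In the doubly smooth regime, classical stochastic flow theory (e.g.\ Protter Ch.~V, Bichteler--Jacod) gives that $x \mapsto S_m^{x,k}(t)$ is a.s.\ $C^2$, with first variation $Y^{x,k}(t) := \partial_x S_m^{x,k}(t)$ solving the linear SDE
\[
dY^{x,k}(t) = r(t) Y^{x,k}(t\m)\,dt + \rho_k'(t, S_m^{x,k}(t)) Y^{x,k}(t\m)\,dW(t) + \int_{\I} \tilde\rho_k'(t, S_m^{x,k}(t\m), z) Y^{x,k}(t\m)\,\tilde J(dt,dz),
\]
with $Y^{x,k}(0)=1$. Since $\tilde\rho_k' > -1+\epsilon$, $Y^{x,k}$ is the stochastic exponential of an $L^2$-martingale and is strictly positive. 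Differentiating under the expectation then gives
\[
(v_m^{(n,k)})''(x) = e^{-\int_0^T r(u)du}\,\mathbb{E}\bigl[h_n''(S_m^{x,k}(T))\,(Y^{x,k}(T))^2 + h_n'(S_m^{x,k}(T))\,Z^{x,k}(T)\bigr],
\]
where $Z^{x,k}$ is the corresponding second variation. The first term is manifestly non-negative since $h_n$ is convex; the second is the real difficulty because $Z^{x,k}$ can change sign.

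To handle the $h_n' Z^{x,k}$ term I would turn to the PIDE. Feynman--Kac gives that $v_m^{(n,k)} \in C^{1,2}$ solves the Kolmogorov backward PIDE associated with \eqref{eq:m_stock} with terminal data $h_n$. Differentiating this PIDE once in $x$ produces a Kolmogorov equation for $p := (v_m^{(n,k)})'$ corresponding to an auxiliary jump-diffusion (drift modified by $\rho_k\rho_k'$ and jump intensity reweighted by $(1+\tilde\rho_k')$), and differentiating again yields a linear PIDE for $w := (v_m^{(n,k)})''$ whose principal part is again of Kolmogorov type with bounded zero-order coefficient and terminal condition $w(T,\cdot) = h_n'' \geq 0$. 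A Feynman--Kac representation for this $w$-PIDE then writes $w$ as a non-negative expectation of $h_n''$ weighted by a positive exponential kernel, so $w \geq 0$ and $v_m^{(n,k)}$ is convex. Passing to the limits $k \to \infty$ and $n \to \infty$, both of which preserve convexity, completes the proof.

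The main obstacle is the Feynman--Kac / positivity step for the $w$-PIDE. After the second differentiation, the integral term produces, among others, the non-local contribution $\int_{\I}\tilde\rho_k''(t,x,z)\,[p(t,x+\tilde\rho_k)-p(t,x)]\,\vartheta(dz)$, which couples $w$ to $p = (v_m^{(n,k)})'$ and is \emph{not} obviously the generator of a Markov process on its own. Verifying that the resulting integro-differential operator nevertheless admits a probabilistic representation with non-negative weights (equivalently, satisfies a maximum principle), together with justifying the interchange of two-fold differentiation with the expectation through uniform moment bounds on $S_m^{x,k}, Y^{x,k}$ and $Z^{x,k}$ implied by Assumption \ref{ass_vol}, constitutes the bulk of the technical work.
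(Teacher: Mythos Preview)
Your strategy diverges from the paper's and, as you yourself flag, runs into a genuine obstacle at the second-derivative step. When you differentiate the Kolmogorov PIDE twice in $x$, the non-local term produces the source
\[
\int_{\I} \tilde\rho_k''(t,x,z)\,\bigl[p(t,x+\tilde\rho_k)-p(t,x)\bigr]\,\vartheta(dz),
\]
and nothing in Assumption~\ref{ass_vol} controls the sign of $\tilde\rho''$; your regularization makes it exist but does not make it non-negative. Hence the $w$-PIDE is \emph{not} a Kolmogorov equation for $w$ with non-negative terminal data and vanishing source, so neither a Feynman--Kac representation nor a maximum principle delivers $w\ge 0$. Even if you already knew $p$ to be monotone (which is what you are trying to prove), the product $\tilde\rho_k''\cdot[p(t,x+\tilde\rho_k)-p(t,x)]$ still has no sign. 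This is precisely where the jump case differs from the pure-diffusion argument of \cite{kjs}, in which twice differentiating the Black--Scholes operator does close up into a linear parabolic equation in $w$ alone.

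The paper sidesteps the difficulty by never going to second order. It first establishes pathwise comparison ($x<y\Rightarrow S_m^x(T)\le S_m^y(T)$, using $\tilde\rho'>-1+\epsilon$), then computes only the first variation $\xi_m^x(T)=M(T)^{-1}\partial_x S_m^x(T)$, which is the stochastic exponential of a BMO martingale with jumps $>-1+\epsilon$ and therefore a strictly positive true martingale by Kazamaki. Writing $v'_{m,\pm}(x)=\mathbb{E}\bigl[h'_\pm(S_m^x(T))\,\xi_m^x(T)\bigr]$ and using $\xi_m^x(T)$ as a Radon--Nikodym density $d\mathbb{P}^x/d\mathbb{P}$, one obtains $v'_{m,\pm}(x)=\mathbb{E}\bigl[h'_\pm(\bar S_m^x(T))\bigr]$ for an auxiliary process $\bar S_m^x$ that obeys the same comparison principle. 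Since $h'_\pm$ is non-decreasing, $v'_{m,\pm}$ is non-decreasing and $v_m$ is convex. No mollification of $h$ or of the coefficients is needed, and $\tilde\rho''$ never appears. The natural repair of your route is to stop at the first derivative and perform this Girsanov step rather than pushing to $(v_m)''$.
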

\begin{proof}The proof is conducted in six steps: \\

\noindent
\textsc{Step 1} is to prove: if $0 < x < y$, then  $S_{m}^{x}(T) \leq S_{m}^{y}(T).$\\
 Define $\tau_{1} := \inf\{t \geq 0: S_{m}^{x}(t) > S_{m}^{y}(t)\} \wedge T$ and by contradiction suppose $\tau_{1} \wedge T <T.$ Consequently, it holds that $S^{x}_{m}(\tau_{1}\text{-}) < S^{y}_{m}(\tau_{1}\text{-})$ and in addition
	\begin{align*}
	&S^{x}_{m}(\tau_{1}\text{-}) + \Delta S_{m}^{x}(\tau_{1}\m,z) > S^{y}_{m}(\tau_{1}\text{-}) + \Delta S_{m}^{y}(\tau_{1}\m,z) \\
	\Rightarrow\ &S^{x}_{m}(\tau_{1}\text{-}) - S^{y}_{m}(\tau_{1}\text{-}) > \Delta S_{m}^{y}(\tau_{1}\m,z) - \Delta S_{m}^{x}(\tau_{1}\m,z) =   \tilde{\rho}(\tau_{1},S^{y}_{m}(\tau_{1}\text{-}),z) - \tilde{\rho}(\tau_{1},S^{x}_{m}(\tau_{1}\text{-}),z). 
	\end{align*}
	Observe that
	\begin{align*}
	&\tilde{\rho}(\tau_{1},S^{y}_{m}(\tau_{1}\text{-}),z) - \tilde{\rho}(\tau_{1},S^{x}_{m}(\tau_{1}\text{-}),z) = \int_{S^{x}_{m}(\tau_{1}\text{-})}^{S^{y}_{m}(\tau_{1}\text{-})} \tilde{\rho}'(\tau_{1},s,z)\ ds >   (-1+\epsilon) (S^{y}_{m}(\tau_{1}\text{-}) - S^{x}_{m}(\tau_{1}\text{-})),
	\end{align*}
	so in total we obtain 
	\begin{align*}
	 0 > \epsilon\ (S^{y}_{m}(\tau_{1}\text{-}) - S^{x}_{m}(\tau_{1}\text{-})),
	\end{align*}
	which is obviously a contradiction. Thus, we conclude that $\tau_{1} \wedge T = T.$ This yields that $\tau_{2} := \inf\{t \geq 0: S_{m}^{x}(t) \geq S_{m}^{y}(t)\} \wedge T = \inf\{t \geq 0: S_{m}^{x}(t) = S_{m}^{y}(t)\} \wedge T.$ If $\tau_{2} \wedge T = \tau_{2},$ then strong uniqueness for \eqref{eq:m_stock} (cf. \cite{oksjump}, Theorem 1.19) implies that $S^{x}_{m}(t) = S^{y}_{m}(t)$ for all $t \in [\tau_{2},T].$ To sum up, we see that $S_{m}^{x}(T) \leq S_{m}^{y}(T).$\\
	
\noindent
\textsc{Step 2} is to note that if $0<x<y,$ then by convexity of $h$ and \textsc{Step 1} it holds that
\begin{align}
&h(S_{m}^{y}(T)) - h(S_{m}^{x}(T)) \leq h'_{+}(S_{m}^{y}(T)) \cdot (S_{m}^{y}(T)-S_{m}^{x}(T)),\ \label{eq:bound1}  \\
&h(S_{m}^{y}(T)) - h(S_{m}^{x}(T)) \geq h'_{+}(S_{m}^{x}(T)) \cdot (S_{m}^{y}(T) - S_{m}^{x}(T)). \label{eq:bound2}
\end{align}

\noindent
\textsc{Step 3} is to prove: if $x,y> 0, x\neq y,$ then $\phi(t) := \mathbb{E}\left[\left(\frac{S_{m}^{y}(t)-S_{m}^{x}(t)}{M(t)}\right)^{2} \right] \leq 3(y-x)^{2} e^{6L^{2}T}, \ t\in [0,T]$ and $L > 0.$\\
We find that
\begin{align*}
\phi(t) &
\leq3(y-x)^{2} + 3 \mathbb{E} \left(\int_{0}^{t} \left(\frac{\rho(u,S_{m}^{y}(u))-\rho(u,S_{m}^{x}(u))}{M(u)}\right)^{2} du \right) \\
&\ + 3 \mathbb{E} \left(\int_{0}^{t} \int_{\I} \left(\frac{\tilde{\rho}(u,S_{m}^{y}(u),z)-\tilde{\rho}(u,S_{m}^{x}(u),z)}{M(u)}\right)^{2} \vartheta(dz)du \right) \\
&\leq 3(y-x)^{2} + 6L^{2} \int_{0}^{t} \phi(u) \ du,
\end{align*}
whereby the first inequality follows from an elementary inequality and \emph{It\^{o}'s isometry}  while the second is justified by \textit{Lipschitz continuity} and \emph{Tonelli's Theorem}. Next, \emph{Gr\"{o}nwall's inequality} implies
\begin{equation}
\phi(t) \leq 3(y-x)^{2} e^{6L^{2}t}, \ t \in [0,T],
\label{eq:UI}
\end{equation}
and observing that the right-hand side of \eqref{eq:UI} is increasing in $t$ gives the claim.\\

\noindent
\textsc{Step 4} is to prove: if $x,y > 0, x\neq y$, then 
\begin{align}
v'_{m,+}(x) &= \lim_{y \downarrow x} \frac{v_{m}(y) - v_{m}(x)}{y-x} = \mathbb{E}\left[h'_{+}(S_{m}^{x}(T)) \xi_{m}^{x}(T) \right],\  \label{eq:right_der} \\
v'_{m,-}(x) &= \lim_{y \uparrow x} \frac{v_{m}(y) - v_{m}(x)}{y-x} = \mathbb{E}\left[h'_{-}(S_{m}^{x}(T)) \xi_{m}^{x}(T) \right],
\label{eq:left_der}
\end{align}
with $\xi_{m}^{x}(T) = \mathcal{E} \left(\int_{0}^{T} \rho'(u,S_{m}^{x}(u\m)) \ dW(u) + \int_{0}^{T} \int_{\I}  \tilde{\rho}'(u,S_{m}^{x}(u\m),z)\ \tilde{J}(du,dz) \right).$\\
Observe that our assumptions allow us to employ the theory of \emph{stochastic flows} for general semimartingales (we use \cite{protter}, Chapter V.7, Theorem $39$ in the sequel; see also \cite{kunita2004stochastic} and the references therein): for almost all $\omega \in \Omega$ the function $x \mapsto S_{m}^{x}(t)$ is continuously differentiable. Phrased differently, there exists $\mathcal{N}_{1}$ with $\mathbb{P}(\mathcal{N}_{1}) =0$ such that for all $\omega \in \Omega \setminus \mathcal{N}_{1}$ the function $D_{m}^{x}(t) := (\partial / \partial x) S_{m}^{x}(t)$ is defined. We only consider such $\omega$ in the sequel. Then $D_{m}^{x}(t)$ solves the SDE given by
$$dD_{m}^{x}(t) = D_{m}^{x}(t\text{-}) \left[r(t)\ dt + \rho'(t,S_{m}^{x}(t)) \ dW(t) + \int_{\I} \tilde{\rho}'(t,S_{m}^{x}(t),z)\ \tilde{J}(dt,dz)\right],$$
with $D_{m}^{x}(0) = 1.$ An application of \emph{It\^{o}'s formula} then yields 
\begin{align}
D_{m}^{x}(t) &= \xi_{m}^{x}(t)M(t) \label{eq:flow1}, \\
\xi_{m}^{x}(t) &= \mathcal{E} \left(\int_{0}^{t} \rho'(u,S_{m}^{x}(u\m)) \ dW(u) + \int_{0}^{t} \int_{\I}  \tilde{\rho}'(u,S_{m}^{x}(u\m),z)\ \tilde{J}(du,dz) \right) \label{eq:xi}.
\end{align}
We only argue for \eqref{eq:right_der} since \eqref{eq:left_der} is established analogously. Observe that
\begin{align*}
\limsup_{y \downarrow x} \frac{v_{m}(y) - v_{m}(x)}{y-x} & \stackrel{\eqref{eq:bound1}}{\leq} \limsup_{y \downarrow x} \mathbb{E} \left[h'_{+}(S_{m}^{y}(T)) \frac{S_{m}^{y}(T) - S_{m}^{x}(T)}{M(T)(y-x)} \right]\\
& =  \mathbb{E} \left[\limsup_{y \downarrow x} h'_{+}(S_{m}^{y}(T)) \frac{S_{m}^{y}(T) - S_{m}^{x}(T)}{M(T)(y-x)} \right]= \mathbb{E}[h'_{+}(S_{m}^{x}(T)) \xi_{m}^{x}(T)],
\end{align*}
whereby the \emph{uniform integrability} used in the first equality is implied by \textsc{Step 3} and the last equality holds since $h$ has bounded one-sided right-continuous derivatives and because of \eqref{eq:flow1}. Conversely, it follows similarly from \eqref{eq:bound2} that

$$\liminf_{y \downarrow x} \frac{v_{m}(y) - v_{m}(x)}{y-x} = \mathbb{E}[h'_{+}(S_{m}^{x}(T)) \xi_{m}^{x}(T)].$$

\noindent
We conclude that \eqref{eq:right_der} holds.\\

\noindent
\textsc{Step 5} is to prove: $\xi_{m}^{x} = (\xi_{m}^{x}(t))_{t \in [0,T]}$ given by \eqref{eq:xi} is a positive martingale. \\
Since $\rho'$ and $\tilde{\rho}'$ are bounded in $s$ (uniformly in $t$ respectively in $(t,z)$), the process $$\left(\int_{0}^{t} \rho'(u,S_{m}^{x}(u\m)) \ dW(u) + \int_{0}^{t} \int_{\I}  \tilde{\rho}'(u,S_{m}^{x}(u\m),z)\ \tilde{J}(du,dz)\right)_{t \in [0,T]}$$ is a martingale of bounded mean oscillation under $\mathbb{P},$ also referred to as BMO($\mathbb{P}$)-martingale\footnote{For a martingale $X$ with c\`adl\`ag paths, denote the quadratic variation by $[X,X]$ and define the $\mathcal{H}^{p}$ norm of $X$ for any $p\geq 1$ by $||X||_{\mathcal{H}^{p}} := \mathbb{E}\left[[X,X]_{T}^{p/2}\right]^{1/p}.$ A martingale $X$ is said to be of \emph{BMO} if it is in $\mathcal{H}^{2}$ and there exists a constant $c$ such that for any stopping time $\tau \leq T$ it holds that $\mathbb{E}\left[(X_{T}-X_{\tau\m})^{2} | \mathcal{F}_{\tau}\right] \leq c^{2}$ a.s. (see \cite{protter}, Chapter IV.4 for further details).}. Moreover, as $\tilde{\rho}' > -1+\epsilon$,  \emph{Kazamaki's criterion} (see  \cite{kazamaki}) yields the claim. \\

\noindent
\textsc{Step 6} is to prove: $v_{m,\pm}'$ is non-decreasing.\\
Define a new probability measure $\mathbb{P}^{x}$ on $(\Omega, \mathcal{F})$ by $d\mathbb{P}^{x} / d\mathbb{P} = \xi_{m}^{x}(T).$ According to \textsc{Step 5}, we can apply \emph{Girsanov's theorem} (cf. Lemma \ref{lemma_girsanov}) to deduce that
\begin{equation*}
W^{x}(t) = W(t) - \int_{0}^{t} \rho'(u,S_{m}^{x}(u)) \ du
\end{equation*}
is a $\mathbb{P}^{x}$- Brownian motion and 
\begin{equation*}
\tilde{J}^{x}(dt,dz) = \tilde{J}(dt,dz) - \tilde{\rho}'(t,S_{m}^{x}(t),z)\ \vartheta(dz)dt
\end{equation*}
is a $\mathbb{P}^{x}$- compensated Poisson random measure. In particular, the $\mathbb{P}^{x}$-compensator $\vartheta^{x}(dz)dt$ of $J(dt,dz)$ is given by
$$\vartheta^{x}(dz)dt :=\left(\tilde{\rho}'(t,S_{m}^{x}(t),z)+1\right) \vartheta(dz)dt.$$
Consequently, we can rewrite \eqref{eq:m_stock} as 
\begin{align*}
dS_{m}^{x}(t) &= S_{m}^{x}(t) r(t)\ dt + \rho(t,S_{m}^{x}(t)) \rho'(t,S_{m}^{x}(t))\ dt + \rho(t,S_{m}^{x}(t))\ dW^{x}(t) \\
&\ + \int_{\I} \tilde{\rho}(t,S_{m}^{x}(t\m),z)\ \tilde{J}^{x}(dt,dz) + \int_{\I} \tilde{\rho}(t,S_{m}^{x}(t\m),z)\ \tilde{\rho}'(t,S_{m}^{x}(t\m),z)\ \vartheta(dz)dt,
\end{align*}
with $S_{m}^{x}(0) = x,$ and note that \emph{uniqueness in law} holds for solutions to this SDE (\cite{applebaum}, Chapter 6). Define $\bar{S}_{m}^{x}$ as solution to
\begin{align*}
d\bar{S}_{m}^{x}(t) &= \bar{S}_{m}^{x}(t) r(t)\ dt + \rho(t,\bar{S}_{m}^{x}(t)) \rho'(t,\bar{S}_{m}^{x}(t))\ dt + \rho(t,\bar{S}_{m}^{x}(t))\ dW(t) \\
&\ + \int_{\I} \tilde{\rho}(t,\bar{S}_{m}^{x}(t\m),z)\ \tilde{J}(dt,dz)  + \int_{\I} \tilde{\rho}(t,\bar{S}_{m}^{x}(t\m),z)\ \tilde{\rho}'(t,\bar{S}_{m}^{x}(t\m),z)\ \vartheta(dz)dt,
\end{align*}
with $\bar{S}^{x}_{m}(0) = x.$ Observe that the process $\bar{S}^{x}_{m}$ has the same distribution under $\mathbb{P}$ as the process $S^{x}_{m}$ under $\mathbb{P}^{x}.$ Calculating $v'_{m,\pm}$ under $\mathbb{P}^{x}$, we conclude from \textsc{Step 4} that $v'_{m,+}(x) = \mathbb{E}^{x}[h'_{+}(S_{m}^{x}(T))]$ and $v'_{m,-}(x) = \mathbb{E}^{x}[h'_{-}(S_{m}^{x}(T))]$  and therefore
\begin{equation}
v'_{m,\pm}(x) = \mathbb{E}[h'_{\pm}(\bar{S}^{x}_{m}(T))], \ x > 0.
\label{eq:der_v}
\end{equation}
For $0<x<y$, following the same line of reasoning as in \textsc{Step 1}, one can show that $\bar{S}^{x}_{m}(T) \leq \bar{S}^{y}_{m}(T).$ Since $h'_{\pm}$ is non-decreasing, monotonicity of the expected value implies that $v'_{m,\pm}$ is also non-decreasing, which is equivalent to $v_{m}(x)$ being convex w.r.t. $x$. 
\end{proof}
Note that Theorem \ref{thm_convex} generalizes Theorem 5.2 in \cite{kjs} to jump-diffusions. We conclude from Theorem \ref{thm_convex} that the Delta strategy for the misspecified model always exists and is bounded. The following example shows that the condition $\tilde{\rho}'(t,s,z) > -1+ \epsilon,\ \epsilon > 0,$ enforced in the second part of Assumption \ref{ass_vol} is not only necessary in \textsc{Step 5} of the proof of Theorem \ref{thm_convex}, but that it is also inevitable to establish the monotonicity of the mapping $x \mapsto S_{m}^{x}(t)$ in \textsc{Step 1}, without which convexity would not hold. 

\begin{example} 
Suppose $r \equiv 0, \tilde{\rho}'(t,s,z) = -2$  and the stock price is driven by a compensated homogeneous Poisson process $\tilde{N} = (\tilde{N}(t))_{t \in [0,T]}$ with intensity $\lambda > 0.$ Assuming the constant of integration is equal to $4$, the stock price dynamics is given by
$$dS_{m}^{x}(t) = (-2S_{m}^{x}(t\m)+4) \ d\tilde{N}(t).$$
\noindent
Denote the first jump time of the Poisson process by $\tau := \inf\{t > 0: N(t) =1\}.$ Then we obviously have
\begin{align*}
S_{m}^{x}(\tau) = x + (-2x+4)\cdot (1-\lambda \tau).
\end{align*}
Choosing $\lambda = 0.1$ and defining $B := \{\omega \in \Omega: \tau(\omega) <5\},$ we see that
$$S_{m}^{1}(\tau) \mathbbm{1}_{B} = (3-0.2 \tau) \mathbbm{1}_{B} > 2 = S_{m}^{2}(\tau)\mathbbm{1}_{B},$$
\noindent
i.e., the monotonicity property no longer holds a.s. because $\mathbb{P}(B) > 0.$ Obviously, the parameters can be chosen such that $B$ has probability arbitrarily close to one.
\end{example}

\section{Robustness of the Delta hedge}

In this section we consider an investor intending to approximate $h(S(T))$ by means of its Delta strategy. It is assumed that the investor bases her computation of the Delta strategy on the misspecified model price \eqref{eq:m_stock} under the reference measure $\mathbb{P}$. We first characterize the induced hedging error and subsequently deduce from its characteristics certain robustness properties. Throughout this and the next section we need the following:

\begin{definition}
We say a measure $\mathbb{Q} \in \mathcal{Q}_{em}$ satisfies \emph{Condition (I)} if it holds for every  $(t,s) \in [0,T] \times \mathbb{R}_{+}$ that there exists some constant $L>0$ so that
\begin{align*}
&\int_{\I} \tilde{\rho}(t,s,z)^{2}\ \vartheta_{\mathbb{Q}}(dz) \leq L\cdot(1+|s|^{2}), \quad \text{and}\quad
&S_{m}^{x} \in L^{2}(d\mathbb{Q} \times dt), \ x >0.
\end{align*}
We denote $\mathcal{Q} := \{\mathbb{Q} \in \mathcal{Q}_{em}|\ \mathbb{Q} \ \text{satisfies \emph{Condition (I)}}\}.$
\label{def_mart_Q}
\end{definition}
\noindent

\noindent

Consider for all $x \in \mathbb{R}_{+}$ and $t \in [0,T]$ the partial integro-differential equation (PIDE) given by
\begin{align}
\begin{split}
0 &= \dot{g}(t,x) + r(t)x g'(t,x) + \frac{1}{2} x^{2} \gamma^{2}(t,x)g''(t,x) - r(t) g(t,x)\\
 &\ \ + \int_{\I} \left(g(t,x+x\tilde{\gamma}(t,x,z)) - g(t,x) - x\tilde{\gamma}(t,x,z) g'(t,x) \right) \vartheta(dz),\\
g(x,T) &= h(x).
\end{split}
\label{eq:FK}
\end{align}
\noindent
The following assumption is an extension of the hypotheses needed in Section 6 of \cite{kjs}.

\begin{assumption}
We assume the existence of a classical solution $g:[0,T] \times \mathbb{R}_{+} \to \mathbb{R}$ to the PIDE \eqref{eq:FK} whose derivatives in the second variable are bounded by a polynomial function of $x$, uniformly in $t \in [0,T].$
\label{ass_regularity_V}
\end{assumption}
\noindent
For results on the existence of classical solutions to \eqref{eq:FK} see for instance \cite{bensoussan1982controle} or \cite{contv} (for the purely Brownian case conditions are discussed in \cite{friedman}). The \emph{Feynman-Kac theorem} (cf. \cite{fk} and the references therein) then implies that the solution $g$ to the PIDE \eqref{eq:FK} coincides with the misspecified value of the contingent claim from Definition \ref{def_price}, i.e., 
\begin{equation}
g(t,x) = v_{m}(t,x) = \mathbb{E} \left[h(S_{m}^{t,x}(T))\ e^{-\int_{t}^{T} r(u)\ du} \right],\ t \in [0,T],\ x>0.
\label{eq:mis_price}
\end{equation}
We immediately obtain the following corollary that is necessary to extract trading strategies and uniform bounds from price functions:

\begin{corollary}
Suppose the conditions of Theorem \ref{thm_convex} are satisfied. Then the convex European contingent claim value function $v_{m}$ given by \eqref{eq:mis_price} satisfies
$$|v_{m,\pm}'(x)| \leq ||h_{\pm}'||_{L^{\infty}} = \sup_{y \in \mathbb{R}_{+}} |h_{\pm}'(y)|, \ x > 0.$$
\label{cor_bounded}
\end{corollary}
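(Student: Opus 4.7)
The plan is to deduce the bound directly from the representation of the one-sided derivatives of $v_m$ established during the proof of Theorem \ref{thm_convex}. Specifically, formula \eqref{eq:der_v} from \textsc{Step 6} of that proof gives
\begin{equation*}
v'_{m,\pm}(x) = \mathbb{E}[h'_{\pm}(\bar{S}_m^x(T))], \quad x > 0,
\end{equation*}
where $\bar{S}_m^x$ is the auxiliary process constructed via the Girsanov change of measure $d\mathbb{P}^x/d\mathbb{P} = \xi_m^x(T)$. Since the conditions of Theorem \ref{thm_convex} are assumed to hold, this representation is available at every $x > 0$.

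Given this identity, the estimate is immediate. I would apply the triangle inequality for expectations to obtain
\begin{equation*}
|v'_{m,\pm}(x)| = \bigl|\mathbb{E}[h'_{\pm}(\bar{S}_m^x(T))]\bigr| \leq \mathbb{E}\bigl[|h'_{\pm}(\bar{S}_m^x(T))|\bigr],
\end{equation*}
and then invoke the hypothesis that $h$ is a payoff function in the sense of Definition \ref{def_payoff}, which says precisely that $|h'_{\pm}|$ is bounded on $\mathbb{R}_+$ by a finite constant $L$. In particular $\|h'_{\pm}\|_{L^\infty} = \sup_{y > 0} |h'_{\pm}(y)| < \infty$, and since $\bar{S}_m^x(T) > 0$ a.s., we get $|h'_{\pm}(\bar{S}_m^x(T))| \leq \|h'_{\pm}\|_{L^\infty}$ a.s., so the expectation is likewise dominated by $\|h'_{\pm}\|_{L^\infty}$.

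There is no real obstacle here: the whole content of the corollary is bundled into the representation \eqref{eq:der_v} and the a priori bound on $h'_{\pm}$ built into the definition of a payoff function. The only thing one might want to double-check is that the representation \eqref{eq:der_v} holds for \emph{both} one-sided derivatives -- which is the case since \textsc{Step 4} derives both \eqref{eq:right_der} and \eqref{eq:left_der} symmetrically, and \textsc{Step 6} transfers each of them under $\mathbb{P}^x$ to the unconditional expectation against $\bar{S}_m^x(T)$.
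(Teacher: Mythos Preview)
Your proposal is correct and matches the paper's own proof essentially verbatim: the paper simply states that the claim is immediate from Definition~\ref{def_payoff} and \eqref{eq:der_v}, which is exactly the representation-plus-uniform-bound argument you wrote out.
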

\begin{proof}
The claim is immediate from Definition \ref{def_payoff} and \eqref{eq:der_v}. 
\end{proof}

Note that $v_{m}'$ is an admissible trading strategy. Suppose now the investor follows the Delta strategy $v_{m}' = (v_{m}'(t,\cdot))_{t \in [0,T]}$.  The trading is done in the physical stock $S$, whose price is described by \eqref{eq:r_stock}. Then the corresponding self-financing hedging portfolio $P^{v_{m}'} = (P^{v_{m}'}(t))_{t \in [0,T]}$ solves the SDE

\begin{equation}
dP^{v_{m}'}(t) = P^{v_{m}'}(t)r(t) \ dt + v_{m}'(t,S(t\m)) [dS(t) - r(t)S(t)\ dt], 
\label{eq:hedging_portfolio}
\end{equation}
and the initial capital to set up the hedging portfolio coincides with the initial misspecified price of the claim, i.e., $P^{v_{m}'}(0) = v_{m}(0,x).$ Observe that $P^{v_{m}'}$ is typically not Markov. We formally define the hedging error (in the misspecified model) $e_{m} = (e_{m}(t))_{t \in [0,T]}$ by
\begin{equation}
e_{m}(t) := P^{v_{m}'}(t) - v_{m}(t,S(t)).
\label{eq:hedging_error}
\end{equation}
This hedging error at time $t$ obviously displays the difference between the value of the hedging portfolio and the misspecified claim price upon observing $S(t)$ quoted at the stock exchange. We are particularly interested in the hedging error's value at time $T$ because $v_{m}(T,S(T)) = h(S(T)).$ So $e(T)$ indicates how far off an investor following $P^{v_{m}'}$ is from the claim's payoff due to both misspecification and market incompleteness. The next proposition gives an explicit formula for the discounted hedging error $e_{m}/M = (e_{m}(t)/M(t))_{t \in [0,T]}.$

\begin{proposition}
Suppose Assumption \ref{ass_martingale}, Assumption \ref{ass_vol} and Assumption \ref{ass_regularity_V} are satisfied. Consider the hedging error $e_{m}$ defined by \eqref{eq:hedging_error}. The \textit{discounted hedging error} reads
\begin{align}
\begin{split}
\frac{e_{m}(t)}{M(t)} &= \frac{1}{2} \int_{0}^{t} \frac{1}{M(u)}v''_{m}(u,S(u)) S(u)^{2}[\gamma(u,S(u))^{2}- \sigma(u)^{2}]\ du \\
&\ \ + \int_{0}^{t}\int_{\I} \frac{1}{M(u)}\Big(v_{m}(u,S(u\m)+ S(u\m)\tilde{\gamma}(u,S(u\m),z)) - v_{m}(u,S(u\m) + S(u\m) \eta(u,z))  \\
&\ \ \phantom{+ \int_{0}^{t}\int_{\I} \frac{1}{M(u)}\Big(} + v'_{m}(u,S(u))S(u\m)(\eta(u,z)-\tilde{\gamma}(u,S(u\m),z))\Big)\ \vartheta(dz)du \\
&\ \ - \int_{0}^{t} \int_{\I} \frac{1}{M(u)} \Big(v_{m}(u,S(u\m) + S(u\m)\eta(u,z)) - v_{m}(u,S(u\m)) \\
&\ \ \phantom{- \int_{0}^{t} \int_{\I} \frac{1}{M(u)} \Big(} - S(u\m)\eta(u,z) v'_{m}(u,S(u)) \Big)\ \tilde{J}(du,dz).
\end{split}
\label{eq:disc_error}
\end{align}
\label{hedging_error}
\end{proposition}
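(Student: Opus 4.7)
The plan is to apply It\^{o}'s formula to $v_{m}(t,S(t))/M(t)$ under the true dynamics \eqref{eq:r_stock}, use the misspecified PIDE \eqref{eq:FK} to eliminate the time derivative $\dot{v}_{m}$, and then subtract the result from the discounted portfolio $P^{v_{m}'}(t)/M(t)$ written in the martingale representation form \eqref{eq:sup_portfolio}. The starting observation is that $P^{v_{m}'}(0) = v_{m}(0,x)$, so that after subtraction all initial contributions vanish and the remaining identity is a sum of integrals that we can match to the three terms in \eqref{eq:disc_error}.

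First I would apply It\^{o}'s formula for semimartingales with jumps to $v_{m}(t,S(t))$. Assumption \ref{ass_regularity_V} ensures that $v_{m}\in C^{1,2}$ with derivatives of polynomial growth, which together with Assumption \ref{ass_martingale} makes all the stochastic integrals below legitimate. The expansion produces: a time derivative term $\dot{v}_{m}(t,S(t))\,dt$; a drift-and-diffusion piece $v_{m}'(t,S(t{-}))S(t{-})[r(t)\,dt+\sigma(t)\,dW(t)]$; a quadratic variation term $\tfrac{1}{2}v_{m}''(t,S(t))S(t)^{2}\sigma(t)^{2}\,dt$; the compensated jump integral
\[
\int_{\I}\bigl[v_{m}(t,S(t{-})+S(t{-})\eta(t,z))-v_{m}(t,S(t{-}))\bigr]\tilde{J}(dt,dz);
\]
and the corresponding It\^{o} dt-correction
\[
\int_{\I}\bigl[v_{m}(t,S(t)+S(t)\eta(t,z))-v_{m}(t,S(t))-v_{m}'(t,S(t))S(t)\eta(t,z)\bigr]\vartheta(dz)\,dt.
\]
The key point is that the jump integrand here is driven by the \emph{true} jump sensitivity $\eta$, since we are moving along the path of $S$, not of $S_{m}^{x}$.

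Next I would solve the PIDE \eqref{eq:FK} for $\dot{v}_{m}(t,S(t))+r(t)S(t)v_{m}'(t,S(t))-r(t)v_{m}(t,S(t))$ and substitute the expression (which involves the misspecified $\gamma,\tilde{\gamma}$ terms) into the It\^{o} expansion. Passing to the discounted quantity via $d(v_{m}(t,S(t))/M(t))=M(t)^{-1}dv_{m}(t,S(t))-r(t)M(t)^{-1}v_{m}(t,S(t))\,dt$, the deterministic $r(t)v_{m}$ pieces cancel, the quadratic variation terms combine into the convexity correction $\tfrac12 v_{m}''(u,S(u))S(u)^{2}[\sigma(u)^{2}-\gamma(u,S(u))^{2}]$, and the two jump dt-terms combine into the difference of nonlinear compensators
\[
\bigl[v_{m}(\cdot, S+S\eta)-v_{m}(\cdot, S+S\tilde{\gamma})\bigr]-v_{m}'(\cdot, S)S[\eta-\tilde{\gamma}].
\]
Subtracting this from $d(P^{v_{m}'}(t)/M(t))=M(t)^{-1}v_{m}'(t,S(t{-}))S(t{-})[\sigma(t)\,dW(t)+\int_{\I}\eta(t,z)\tilde{J}(dt,dz)]$ (from \eqref{eq:sup_portfolio} written in differential form), the $dW$ terms cancel exactly because the Delta coefficient is precisely $v_{m}'$, while the $\tilde{J}$ terms combine into $-\int_{\I}M(u)^{-1}[v_{m}(u,S(u{-})+S(u{-})\eta(u,z))-v_{m}(u,S(u{-}))-S(u{-})\eta(u,z)v_{m}'(u,S(u))]\tilde{J}(du,dz)$, and the two remaining dt-terms are exactly those displayed in \eqref{eq:disc_error} (after a sign change in the second one). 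Integrating from $0$ to $t$ and using $P^{v_{m}'}(0)=v_{m}(0,x)$ produces the claim.

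The main obstacle I anticipate is the careful bookkeeping between the two different jump kernels: It\^{o}'s formula along $S$ must carry the true jump size $\eta$, whereas the PIDE that eliminates $\dot{v}_{m}$ carries the misspecified $\tilde{\gamma}$, and the proposition's middle term is precisely the mismatch. A minor technical concern is justifying that the $\tilde{J}$-integral is a genuine (local) martingale increment so that no extra drift survives; this follows from the polynomial growth control in Assumption \ref{ass_regularity_V}, the bound $|v_{m}'|\le \|h'_{\pm}\|_{L^{\infty}}$ from Corollary \ref{cor_bounded}, and Assumption \ref{ass_martingale} on $\tilde{S}$, all of which ensure square-integrability of the jump integrand against $\vartheta(dz)du$.
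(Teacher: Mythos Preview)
Your proposal is correct and follows essentially the same route as the paper: apply It\^{o}'s formula to $v_{m}(t,S(t))$ along the true dynamics, replace $\dot{v}_{m}$ via the misspecified PIDE \eqref{eq:FK}, and subtract from the portfolio dynamics. The only cosmetic difference is that the paper first computes $de_{m}(t)=d(P^{v_{m}'}(t)-v_{m}(t,S(t)))$ and then applies integration by parts to pass to $d(e_{m}(t)/M(t))$, whereas you discount each piece first and then subtract; the algebra is identical.
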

\begin{proof}
Observe that
\begin{align*}
&dv_{m}(t,S(t)) \notag \\
&= \dot{v}_{m}(t,S(t)) \ dt + v_{m}'(t,S(t))\ dS(t) + \frac{1}{2} v_{m}''(t,S(t)) S(t)^{2} \sigma(t)^{2} \ dt \notag \\
&\ + \int_{\I}\left( v_{m}(t,S(t\m) + S(t\m) \eta(t,z)) - v_{m}(t,S(t\m)) - v_{m}'(t,S(t\m)) S(t\m) \eta(t,z)\right) \ J(dt,dz) \notag \\
\begin{split}
&= r(t) v_{m}(t,S(t)) \ dt + v'_{m}(t,S(t)) [dS(t) - r(t)S(t) \ dt] \\
&\ + \frac{1}{2} v''_{m}(t,S(t)) S(t)^{2}[\sigma(t)^{2} - \gamma(t,S(t))^{2}]\ dt  \\
&\ + \int_{\I} \Big( v_{m}(t,S(t\m) + S(t\m)\eta(t,z)) - v_{m}(t,S(t\m)) - S(t\m)\eta(t,z) v'_{m}(t,S(t)) \Big) \tilde{J}(dt,dz)  \\
&\ + \int_{\I} \Big(v_{m}(t,S(t\m) + S(t\m) \eta(t,z)) - v_{m}(t,S(t\m)+ S(t\m)\tilde{\gamma}(t,S(t\m),z))  \\
&\ \phantom{+ \int_{\I} \Big(} + v'_{m}(t,S(t))S(t\m)(\tilde{\gamma}(t,S(t\m),z) - \eta(t,z)) \Big)\ \vartheta(dz)dt,
\label{eq:step_before}
\end{split}
\end{align*}
where the first equality is an application of \emph{It\^{o}'s lemma} and the second one follows from \eqref{eq:FK}. Next we calculate the difference $de_{m}(t) = d(P^{v_{m}'}(t) - v_{m}(t,S(t))).$  This yields
\begingroup
\allowdisplaybreaks 
\begin{align*}
de_{m}(t) &= (\underbrace{P^{v_{m}'}(t) - v_{m}(t,S(t))}_{= e_{m}(t)})\ r(t)\ dt + \frac{1}{2} v_{m}''(t,S(t))S(t)^{2}[\gamma(t,S(t))^{2}- \sigma(t)^{2}]\ dt \\
&\ \ + \int_{\I} \big(v_{m}(t,S(t\m) + S(t\m) \tilde{\gamma}(t,S(t\m),z)) - v_{m}(t,S(t\m) + S(t\m) \eta(t,z))\\
&\ \ \phantom{+ \int_{\I} \big(} - v_{m}'(t,S(t\m))S(t\m) (\tilde{\gamma}(t,S(t\m),z) - \eta(t,z)) \big)\ \vartheta(dz)dt \\
&\ \ - \int_{\I} \big(v_{m}(t,S(t\m)+S(t\m) \eta(t,z)) - v_{m}(t,S(t\m)) - S(t\m) \eta(t,z) v_{m}'(t,S(t\m)) \big) \ \tilde{J}(dt,dz).
\end{align*}
\endgroup
An application of the integration by parts formula to $d(e_{m}(t)/M(t))$ then establishes \eqref{eq:disc_error}.
\end{proof}

The following lemma provides among others valuable insight into monotonicity properties of the (expected) discounted hedging error. 

\begin{lemma}
Consider the process $S$ given by \eqref{eq:r_stock} and let $g$ be some function that is convex in the second component. Assume further that
\begin{equation}
\sigma(t) \leq \gamma(t,S(t)) \ \text{and} \ \emph{sgn}(\tilde{\gamma}(t,S(t),z)-\eta(t,z)) = \emph{sgn}(\eta(t,z)),
\label{eq:cond_vol}
\end{equation}
$d\mathbb{P} \times dt$ and $d\mathbb{P} \times dt \times \vartheta(dz)$-a.s. Then the process $\Pi = (\Pi(t))_{t \in [0,T]}$ defined by
\begin{equation}
\begin{split}
\Pi(t) &:= \frac{1}{2} \int_{0}^{t} g''(u,S(u)) S(u)^{2} [\gamma(u,S(u))^{2} - \sigma(u)^{2}] \ du   \\
&\ \ + \int_{0}^{t} \int_{\I} \Big(g(u,S(u\m) + S(u\m) \tilde{\gamma}(u,S(u\m),z)) - g(u,S(u\m) + S(u\m) \eta(u,z)) \\
&\ \ \phantom{+ \int_{0}^{t} \int_{\I} \Big(} + g'(u,S(u\m)) S(u\m) (\eta(u,z) - \tilde{\gamma}(u,S(u\m),z)) \Big)\ \vartheta(dz) du,
\end{split}
\label{eq:pit}
\end{equation}
is non-decreasing a.s.
\label{lem_mon}
\end{lemma}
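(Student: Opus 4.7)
The plan is to show that both integrands defining $\Pi(t)$ are pointwise non-negative a.s., so that $\Pi$ is non-decreasing as an integral of non-negative quantities against non-negative measures ($du$ and $\vartheta(dz)\,du$).

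For the Lebesgue integrand $\tfrac{1}{2} g''(u, S(u)) S(u)^{2} [\gamma(u,S(u))^{2} - \sigma(u)^{2}]$, convexity of $g$ in the second argument gives $g''(u,\cdot) \geq 0$, while $0 \leq \sigma(u) \leq \gamma(u,S(u))$ implies $\gamma(u,S(u))^{2} \geq \sigma(u)^{2}$. Multiplying by the non-negative factor $S(u)^{2}$ settles this term.

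The bulk of the argument concerns the compensator integrand. Fix $\omega$, $u \in [0,t]$ and $z \in \I$, and abbreviate $s := S(u\m) > 0$, $a := \tilde{\gamma}(u,s,z) > -1$, $b := \eta(u,z) > -1$, and $f(\cdot) := g(u,\cdot)$. Introduce the auxiliary function
\[
F(\alpha) := f(s(1+\alpha)) - f(s) - s\alpha f'(s), \qquad \alpha > -1,
\]
so that the integrand equals $F(a) - F(b)$. Convexity of $f$ on $\mathbb{R}_{+}$ combined with $s > 0$ yields $F(0)=0$ and $F'(\alpha) = s\,[f'(s(1+\alpha)) - f'(s)]$, which is $\leq 0$ on $(-1,0]$ and $\geq 0$ on $[0,\infty)$. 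Hence $F$ attains its global minimum at $\alpha = 0$ and is monotone on either side. The sign condition $\text{sgn}(a - b) = \text{sgn}(b)$ now splits into two cases. If $b \geq 0$, then $a \geq b \geq 0$ and $F$ is non-decreasing on $[0,\infty)$, whence $F(a) \geq F(b)$. If $b < 0$, then $a \leq b < 0$ and $F$ is non-increasing on $(-1,0)$, so again $F(a) \geq F(b)$. In either case the integrand is non-negative.

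The main obstacle is the compensator integrand: identifying the right auxiliary function $F$ so that the sign hypothesis on $\tilde{\gamma} - \eta$ and the convexity of $g(u,\cdot)$ combine cleanly. Once $F$ is isolated, the argument reduces to elementary monotonicity on the two sides of its minimum at $0$. With both integrands non-negative a.s., $\Pi$ is non-decreasing a.s., completing the proof.
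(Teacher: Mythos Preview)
Your proof is correct and follows essentially the same approach as the paper: show that both integrands in \eqref{eq:pit} are pointwise non-negative. The only cosmetic difference is that for the jump term the paper applies the tangent-line inequality $g(x)-g(y)\ge g'(y)(x-y)$ directly at $y=s(1+\eta)$ to obtain the lower bound $s(\tilde\gamma-\eta)\bigl(g'(s(1+\eta))-g'(s)\bigr)\ge 0$, whereas you encode the same convexity information in the monotonicity of the auxiliary function $F$; the two arguments are equivalent.
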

\begin{proof}
The first integral in \eqref{eq:pit} is easily seen to be non-decreasing because $g''(\cdot,\cdot) > 0$ by convexity of $g$ and the left part of the condition \eqref{eq:cond_vol}. Regarding the integrand of the double integral, observe that
\begin{align*}
& g(t,S(t\m) + S(t\m) \tilde{\gamma}(t,S(t\m),z)) - g(t,S(t\m) + S(t\m) \eta(t,z)) \\
&\ + g'(t,S(t\m)) S(t\m) (\eta(t,z) - \tilde{\gamma}(t,S(t\m),z)) \\
&\geq S(t\m) \left(\tilde{\gamma}(t,S(t\m),z) - \eta(t,z) \right) \cdot (g'(t,S(t\m) + S(t\m) \eta(t,z)) - g'(t,S(t\m))) \geq 0, 
\end{align*}
where the first inequality follows from the convexity of $g$ and the second inequality holds because the right part of the condition \eqref{eq:cond_vol} and the fact that $g'$ is monotonically non-decreasing imply that the two brackets always have the same sign inducing a non-negative product. We conclude that the double integral is non-decreasing as well and so is the process $\Pi.$
\end{proof}

We see that Lemma \ref{lem_mon} implies that the first two summands of the discounted hedging error \eqref{eq:disc_error} are non-decreasing. However, the third term cannot be non-decreasing  because it is a martingale. The fact that the hedging error does not exhibit any a.s. monotonicity properties despite a systematic overestimation has a straightforward economic interpretation: due to the market incompleteness, the Delta strategy does neither provide a perfect hedge nor an a.s. sub- or superhedge. Nonetheless, it is possible to deduce that the expected discounted hedging error is a good-deal with respect to a robustified expectation which can be taken with respect to any reference probability measure. To this end, consider the set of EMMs $\mathcal{Q}$ and its characterization in Lemma \ref{lemma_girsanov}. 
Define 
\begin{equation}
\label{hull}
\mathcal{Q}_{0}(\p) := \{\mathbb{Q} \in \mathcal{Q}|\ \vartheta_{\mathbb{Q}}(A) \leq \vartheta(A)=\vartheta_{\mathbb{P}}(A)\ \text{for any}\ A \in \mathcal{B}(\I)\}.
\end{equation}
\noindent
Note that the set of L\'evy measures satisfying the inequality in the definition of $\mathcal{Q}_{0}(\p)$ is convex. Let $\cl(\conv(\mathcal{Q}_{0}(\p)))$ denote the closure of the convex hull of $\mathcal{Q}_{0}( \p).$  The next theorem states that the discounted hedging error is a $\cl(\conv(\mathcal{Q}_{0}(\p)))$-submartingale meaning that it is a submartingale w.r.t. every measure $\mathbb{Q} \in \cl(\conv(\mathcal{Q}_{0}(\p))).$ 

\begin{theorem}
\label{theoone}
Suppose the conditions of Proposition \ref{hedging_error} are satisfied. If 
\begin{equation*}
\sigma(t) \leq \gamma(t,S(t)) \ \text{and} \ \emph{sgn}(\tilde{\gamma}(t,S(t),z)-\eta(t,z)) = \emph{sgn}(\eta(t,z)),
\end{equation*}
$d\mathbb{P} \times dt$ and $d\mathbb{P} \times dt \times \vartheta(dz)$-a.s., then the induced discounted hedging error $e_{m}/M$ specified by \eqref{eq:disc_error} is a $\cl(\conv(\mathcal{Q}_{0}(\p)))$-submartingale. In particular, $\inf_{\mathbb{Q} \in \cl(\conv(\mathcal{Q}_{0}(\p)))} \mathbb{E}_{\mathbb{Q}}\left[\frac{e_{m}(t)}{M(t)}\right] \geq 0$ for all $t \in [0,T]$ and the inequality is strict if the hedging error is not zero.
\label{thm_superhedge_delta}
\end{theorem}

\begin{proof}\ \\
\textsc{Step 1} is to observe that for every $\mathbb{Q} \in \mathcal{Q}$ and some function $f:\mathbb{R}_{+} \to \mathbb{R}$ such that $f'$ is bounded by some constant $K>0$, it holds that
\begin{align}
\begin{split}
\int_{\I}f'(S(t\m))^{2} S(t)^{2} \eta(t,z)^{2} \ \vartheta_{\mathbb{Q}}(dz) \leq K^{2} \int_{\I} S(t)^{2} \tilde{\gamma}(t,S(t),z)^{2}\ \vartheta_{\mathbb{Q}}(dz)  \leq K^{2}L (1+|S(t)|^{2}).
\end{split}
\label{eq:replace}
\end{align}
\noindent
\textsc{Step 2} is to establish the claim of the theorem.\\
Let $\mathbb{Q} \in \mathcal{Q}_{0}(\p)$ arbitrary. Performing a change of measure from $\mathbb{P}$ to $\mathbb{Q}$ in \eqref{eq:disc_error}, the resulting compensated jump integral is consequently defined under $\mathbb{Q},$ see \eqref{eq:girsanov} for the specification of $\tilde{J}_{\mathbb{Q}}(\cdot,\cdot).$ From Corollary \ref{cor_bounded} and \eqref{eq:replace} we see that the process 
\begin{align*}
\Bigg(\int_{0}^{t}\int_{\I} \big(&v_{m}(u,S(u\m)+S(u\m) \eta(u,z)) - v_{m}(u,S(u\m))\\
& - S(u\m) \eta(u,z) v_{m}'(u,S(u\m)) \big) \ \tilde{J}_{\mathbb{Q}}(du,dz)\Bigg)_{t \in [0,T]}
\end{align*}
is a true martingale under each $\mathbb{Q}.$ Thus, fixing $0 \leq w < t \leq T$, we can calculate that
\begin{align}
\begin{split}
&\mathbb{E}_{\mathbb{Q}}\left[\frac{e_{m}(t)}{M(t)} \Big| \mathcal{F}_{w}\right]\\
 &= \frac{e_{m}(w)}{M(w)} + \mathbb{E}_{\mathbb{Q}}\Bigg[ \frac{1}{2} \int_{w}^{t} \frac{1}{M(u)}v''_{m}(u,S(u)) S(u)^{2}[\gamma(u,S(u))^{2}- \sigma(u)^{2}]\ du \\
&\ + \int_{w}^{t}\int_{\I} \frac{1}{M(u)}\Big(v_{m}(u,S(u\m)+ S(u\m)\tilde{\gamma}(u,S(u\m),z)) - v_{m}(u,S(u\m) + S(u\m) \eta(u,z))  \\
&\ \phantom{+ \int_{w}^{t}\int_{\I} \frac{1}{M(u)}\Big(} + v'_{m}(u,S(u))S(u\m)(\eta(u,z)-\tilde{\gamma}(u,S(u\m),z))\\
&\ \phantom{+ \int_{w}^{t}\int_{\I} \frac{1}{M(u)}\Big(} + \theta(u,z) \big(v_{m}(u,S(u\m) + S(u\m)\eta(u,z))\\
&\ \phantom{+ \int_{w}^{t}\int_{\I} \frac{1}{M(u)}\Big(} - v_{m}(u,S(u\m)) - S(u\m)\eta(u,z) v'_{m}(u,S(u)) \big) \Big)\ \vartheta(dz) du\ \Bigg| \mathcal{F}_{w} \Bigg] \geq \frac{e_{m}(w)}{M(w)},
\end{split}
\label{eq:E_error}
\end{align}
\noindent
where the last inequality follows from combining Theorem \ref{thm_convex} with Lemma \ref{lem_mon} and because $\theta(t,z) \geq 0$, $d\mathbb{P} \times dt \times \vartheta(dz)$-a.s., whenever $\mathbb{Q} \in \mathcal{Q}_{0}(\p)$. This implies that $e_{m}/M$ is a $\mathcal{Q}_{0}(\p)$-submartingale. Moreover, we easily conclude from \eqref{eq:E_error} that the hedging error has a non-negative expected value at any time $t \in [0,T]$ under each measure $\mathbb{Q} \in \mathcal{Q}_{0}(\p)$ since choosing $w =0$ yields
\begin{equation}
\mathbb{E}_{\mathbb{Q}}\left[\frac{e_{m}(t)}{M(t)}\Big| \mathcal{F}_{0} \right] \geq \frac{e_{m}(0)}{M(0)} = 0.
\label{eq:em_subm}
\end{equation}
If \eqref{eq:em_subm} holds for every $\mathbb{Q} \in \mathcal{Q}_{0}(\p)$, it clearly also holds for the closure of its convex hull. Finally, note that the left-hand side  in \eqref{eq:E_error} is minimal if $\theta=0$. The inequality then is strict (also if the minimum over $\Q$ is taken), unless $v_m$ is linear, or we do not have a jump part and the missspecified model is actually the correct one. In all these cases the hedging error is zero.
\end{proof}

Compared to the reference measure $\mathbb{P}$, two properties of the measures $\mathbb{Q} \in \mathcal{Q}_{0}(\p)$ are noteworthy. First, under the measures $\mathbb{Q} \in \mathcal{Q}_{0}(\p)$, the process $S_{m}^{x}$ is not necessarily Markov. It is only assumed to be Markov under the measure the Delta hedging strategy is calculated. Second, the measures contained in $\mathcal{Q}_{0}(\p)$ have less probability mass on the jump part. From an investor's point of view this goes along with a risk decrease in jump uncertainty.\par
 As the discounted hedging error starts at zero and is a $\mathcal{Q}_{0}(\p)$-submartingale, it is increasing on average. Furthermore, definition (\ref{gooddeal}) and Theorem \ref{theoone} imply that the hedging error is acceptable with $\mathcal{A}:=\{X\in L^2|\inf_{\mathbb{Q} \in \cl(\conv(\mathcal{Q}_{0}(\p)))} \mathbb{E}_{\mathbb{Q}}[X] \geq 0\}$. If the hedging error is not zero, this yields that 
 \begin{align}
 	\label{good-deal}
 	0&>\sup_{\Q\in \cl(\conv(\mathcal{Q}_0(\p)))}\mathbb{E}_{\mathbb{Q}}\left[-\frac{e_{m}(T)}{M(t)} \right]\nonumber\\
 	&\geq \inf_{P_T\in\mathcal{V}}\sup_{\Q\in \cl(\conv(\mathcal{Q}_0(\p)))}\mathbb{E}_{\mathbb{Q}}\left[-\frac{e_{m}(T)}{M(t)}-P_T \right]=  \rho(e_m(T)/M(T)).
 	 \end{align} The last equation follows from the general connection of a coherent risk measure (i.e., a robust expectation) with its acceptance set, see for instance Proposition 4.6 in \cite{fs}. Therefore, 
 \begin{align*}
 0&\notin[ -\rho(e_m(T)/M(T)),\rho(-e_m(T)/M(T))]\\
 &= [-\rho(-h(S(T)/M(T)+v_m(0,x)),\rho(h(S(T))/M(T)-v_m(0,x))].
 \end{align*} 
 Hence, the hedging portfolio $\tilde{P}^{v_{m}'}$ give rise to a good deal of selling $h(S(T))$ for the price $v_m(0,x)$, and in particular the hedging error itself is a good-deal. If on the other hand the hedging error is zero, we have a perfect hedge under the missspecified model. In this case either $h$ is linear or we do not have a jump part and the model is actually correctly specified.\\

  The acceptance set so far is based on the robust expectation with respect to the hull $\cQ_0( \p)$. The next results show that the measure $ \p$ can be chosen very general (while in Section \ref{robustpricing} we will see that actually whole sets of measures instead of a single reference measure $ \p$ can be considered).  
 By Lemma \ref{lemmagirsanov}, we can write 
 \begin{align*}
 \cQ&=\Big\{\Q^{(\psi,\theta)}\sim \p \big|\gamma(t,\tilde{S}_{m}^{x}(t)) \psi(t) + \int_{\I} \tilde{\gamma}(t,\tilde{S}_{m}^{x}(t\m),z) \theta(t,z) \ \vartheta(dz) = 0,\\ &\hspace{3cm} \mbox{ and }\Q^{(\psi,\theta)}\mbox{ satisfies Condition (I)}  \Big\}.
 \end{align*}
 \begin{corollary}
 	Theorem \ref{theoone} holds for any measure $\mathbb{P}\in \tilde{\cQ}$ with  \begin{align*}
 	\tilde{\cQ}&:=\bigg\{\Q^{(\psi,\theta)}\sim\mathbb{P}\bigg|a_t:=\gamma(t,
 	\tilde{S}_{m}^{x}(t)) \psi(t) + \int_{\I} \tilde{\gamma}(t,\tilde{S}_{m}^{x}(t\m),z) \theta(t,z) \ \vartheta(dz)\\
 	&\hspace{1cm} \mbox{ is deterministic and integrable. Furthermore, }\Q^{(\psi,\theta)}\mbox{ satisfies Condition (I)}\bigg\}.
 		\end{align*}
 	 \end{corollary}
\begin{proof}
By the Girsanov theorem, (see Lemma \ref{lemmagirsanov}), a measure $\Q^{(\psi,\theta)}\in \mathcal{Q}_0(\p)$ induces a stock price process following the SDE (\ref{eq:r_stock}) with deterministic drift $r-a,$ and $(W,J)$ replaced by $(W^{\Q^{(\psi,\theta)}},J^{\Q^{(\psi,\theta)}}).$ Hence, the corollary is a consequence of Theorem \ref{theoone} with $r$ replaced by $r-a.$
	\end{proof}
Recall that Assumption \ref{ass_vol} stipulates boundedness conditions on $\gamma$ and $\tilde{\gamma}$. The next proposition shows that, any probability measure whose stochastic logarithm satisfies certain boundedness conditions can be included in the robust expectation constituting the acceptance set.
\begin{proposition}
	\label{bound}
	Suppose that there exists $\delta>0$ and $\varepsilon>0$ such that $\int_{\I}I_{\varepsilon<\tilde{\gamma}(t,\tilde{S}_{m}^{x}(t\m),z)} (|z|^2\wedge 1) \vartheta(dz)>\delta$, or $\int_{\I}I_{\varepsilon<-\tilde{\gamma}(t,\tilde{S}_{m}^{x}(t\m),z)} (|z|^2\wedge 1) \vartheta(dz)>\delta$ $d\p\times d t$ a.s. Then for any $\Q\sim \mathbb{P}$ satisfying Condition(I),
such that the corresponding $\psi$ and $\tilde{\theta}$ (from the Radon-Nikodym derivative of $\Q$) are BMO processes with $\tilde{\theta}<1-\bar{\delta} <1$ satisfying 
	$$\sup_{0\leq t\leq T} a_t:=\sup_{0\leq t\leq T} ||\gamma(t,\tilde{S}_{m}^{x}(t)) \psi(t) + \int_{\I} \tilde{\gamma}(t,\tilde{S}_{m}^{x}(t\m),z) \tilde{\theta}(t,z) \ \vartheta(dz)||_\infty <\infty,$$
	there exists $\mathbb{P}\in \tilde{\Q}$ such that $\Q^{(\psi,\tilde{\theta})}\in \cQ_0(\p)$. 
\end{proposition}
\begin{proof}
In the proof we use ``+'' or ``-'' depending if $\int_{\I}I_{\varepsilon<\tilde{\gamma}(t,\tilde{S}_{m}^{x}(t\m),z)} (|z|^2\wedge 1) \vartheta(dz)>\delta$ or $\int_{\I}I_{\varepsilon<-\tilde{\gamma}(t,\tilde{S}_{m}^{x}(t\m),z)} (|z|^2\wedge 1) \vartheta(dz)>\delta$. Define 
	\begin{equation*}
	\theta(t,z):=\tilde{\theta}(t,z)-
	\frac{a_t\pm[\gamma(t,\tilde{S}_{m}^{x}(t)) \psi(t) + \int_{\I} \tilde{\gamma}(t,\tilde{S}_{m}^{x}(t\m),z) \tilde{\theta}(t,z) \ \vartheta(dz)]}{\int_{\I}I_{\varepsilon<\pm\tilde{\gamma}(t,\tilde{S}_{m}^{x}(t\m),z)} (|z|^2\wedge 1) \vartheta(dz) }\frac{I_{\varepsilon<\pm\tilde{\gamma}(t,\tilde{S}_{m}^{x}(t\m),z)}(|z|^2\wedge 1)}{\pm\tilde{\gamma}(t,\tilde{S}_{m}^{x}(t\m),z)}.
	\end{equation*}
		By the definition of $a_t$ we have then $\theta\leq \tilde{\theta}$. Furthermore, $(\psi,\theta)$ are BMO processes with $\theta$ bounded away from $1$, see also the Appendix. Thus, $\Q^{(\psi,\tilde{\theta})}\in \cQ_0(\p)$ with $\p:=\Q^{(\psi,\theta)}\in \tilde{\cQ}$. 
	\end{proof}
Using our results on the hedging error, we close this section with a corollary stating that systematic overestimations of the volatility and the jump sensitivity lead to a domination of the true contingent claim price by the misspecified price. The reason is that the overestimations make the option more valuable benefiting from the additional volatility. 

\begin{corollary}
Suppose the conditions of Theorem \ref{thm_superhedge_delta} are satisfied. Then the process\\ $(v_{m}(t,S(t))/M(t))_{t \in [0,T]}$ with $v_{m}$ given by \eqref{eq:mis_price} is a supermartingale. In particular,\\ $v_{m}(0,S(0)) \geq \mathbb{E}[h(S(T))/M(T)].$ So the misspecified model also gives a higher (more conservative) claim price.
\end{corollary}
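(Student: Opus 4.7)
My plan is to decompose $v_m(t,S(t))/M(t)$ into two pieces whose martingale behavior is already understood, namely
\[
\frac{v_m(t,S(t))}{M(t)} = \frac{P^{v_m'}(t)}{M(t)} - \frac{e_m(t)}{M(t)}, \qquad t \in [0,T],
\]
which is just a rearrangement of the definition \eqref{eq:hedging_error} of the hedging error. The task then reduces to showing that the first summand is a $\mathbb{P}$-martingale and the second is a $\mathbb{P}$-submartingale; their difference will automatically be a $\mathbb{P}$-supermartingale.

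For the first summand, I would invoke the representation \eqref{eq:sup_portfolio}, which presents $P^{v_m'}/M$ as the stochastic integral of the bounded predictable integrand $v_m'$ (boundedness guaranteed by Corollary \ref{cor_bounded}) against the square-integrable $\mathbb{P}$-martingale $\tilde{S}$ (Assumption \ref{ass_martingale}); this yields a true $\mathbb{P}$-martingale. For the second summand, I would observe that $\mathbb{P}$ itself lies in $\mathcal{Q}_0$: the process $\tilde{S}_m^x$ is a $\mathbb{P}$-local martingale by construction of \eqref{eq:m_stock}, the integrability hypotheses in Assumption \ref{ass_vol} upgrade it to a genuine martingale, and $\vartheta_{\mathbb{P}} = \vartheta$ trivially satisfies the defining inequality of $\mathcal{Q}_0$. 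Theorem \ref{thm_superhedge_delta} then implies that $e_m/M$ is a $\mathbb{P}$-submartingale.

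Combining these, for $0 \leq s \leq t \leq T$ one has
\[
\mathbb{E}\!\left[\frac{v_m(t,S(t))}{M(t)}\,\Big|\,\mathcal{F}_s\right] = \frac{P^{v_m'}(s)}{M(s)} - \mathbb{E}\!\left[\frac{e_m(t)}{M(t)}\,\Big|\,\mathcal{F}_s\right] \leq \frac{P^{v_m'}(s)}{M(s)} - \frac{e_m(s)}{M(s)} = \frac{v_m(s,S(s))}{M(s)},
\]
establishing the supermartingale property. Specializing to $s = 0$, $t = T$ and using $e_m(0)=0$, $M(0)=1$, together with the terminal condition $v_m(T,S(T)) = h(S(T))$ from \eqref{eq:FK}, delivers the stated inequality $v_m(0,S(0)) \geq \mathbb{E}[h(S(T))/M(T)]$. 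I do not anticipate a genuine obstacle in this argument: the heavy lifting has already been carried out in Theorem \ref{thm_superhedge_delta}, which itself rests on the convexity statement of Theorem \ref{thm_convex} and the monotonicity Lemma \ref{lem_mon}.
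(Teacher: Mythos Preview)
Your argument is correct and follows the paper's own proof essentially verbatim: both decompose $v_m(t,S(t))/M(t)$ via \eqref{eq:hedging_error} into the martingale $P^{v_m'}/M$ minus the $\mathbb{P}$-submartingale $e_m/M$ furnished by Theorem~\ref{thm_superhedge_delta}, and then specialize the resulting supermartingale inequality at $s=0$, $t=T$. Your additional remarks (boundedness of $v_m'$ from Corollary~\ref{cor_bounded}, and the observation that $\mathbb{P}\in\mathcal{Q}_0$ via $\vartheta_{\mathbb{P}}=\vartheta$) simply make explicit what the paper leaves implicit.
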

\begin{proof}
Since $e_{m}/M$ is a submartingale under $\mathbb{P}$ and Assumption \ref{ass_martingale} implies that $P^{v_{m}'}/M$ given by \eqref{eq:hedging_portfolio} is a martingale, we conclude from \eqref{eq:hedging_error} that $(v_{m}(t,S(t))/M(t))_{t \in [0,T]}$ is a supermartingale. Consequently, it holds that
$$v_{m}(0,S(0)) = \frac{v_{m}(0,S(0))}{M(0)} \geq \mathbb{E}\left[\frac{v_{m}(T,S(T))}{M(T)}\right] = \mathbb{E}[h(S(T))/M(T)].$$
\end{proof}


\section{Robust pricing}
\label{robustpricing}
\noindent
In this section we analyze pricing such that the hedging error is a good-deal with an acceptance set based on a robust expectation for \emph{general} subsets  $\mathcal{M} \subseteq \mathcal{Q}$. 
By redefining the function $h$ we normalize the interest rate or the drift to zero for convenience for the remainder of the paper. We define the pricing operator at time $t$ of the European claim $h(S(T))$ in the misspecified model by

\begin{equation}
C_{m}^{\mathcal{M}}(t) := \esssup_{\mathbb{Q} \in \mathcal{M}} \mathbb{E}_{\mathbb{Q}}\left[h(S_{m}^{t,x}(T))\ \big| \mathcal{F}_{t} \right].
\label{eq:gen_C_def}
\end{equation}
\noindent
In the sequel we omit the superscript $\mathcal{M}$ unless there is ambiguity. A non-exhaustive list of examples for the choice of $\mathcal{M}$ leading to a well known robust pricing operator $C_{m}$ is the following:

\begin{enumerate}
	\item \textsc{Complete market: } In a complete market 
	$\cQ$ has just one element, so that $\mathcal{M}:=\{\mathbb{Q}\}$ where $\Q$ is the unique equivalent local martingale measure. 
	\item  \textsc{Sharpe ratio:} Pricing contingent claims by only requiring the absence of arbitrage leads to a relatively wide range of prices; \cite{gdb} outline that prices beyond a certain benchmark correspond to unreasonably good deals. To overcome this, they suggest to narrow the no-arbitrage bounds by imposing bounds on the \emph{Sharpe ratio} in a Brownian setting. In \cite{slinko} these results are extended to L\'evy processes. In this case the set of acceptable positions $\mathcal{A}$ consists of all position with a Sharpe ratio larger than a constant, say $k$. That is
	$$ \mathcal{A}=\Big\{X\in L^2\Big| \frac{\mathbb{E}[X]-\pi(X)}{\sigma(X)}>k\Big\}=\{X\in L^2 |\inf_{\mathbb{Q} \in \mathcal{M}}\mathbb{E}[X]\geq 0\}   ,$$
	where $\pi(X)$ is the observed market price of $X$, and $\mathcal{M}$ is given by
	\begin{align*}
	\mathcal{M} := \left\{\mathbb{Q} \in \mathcal{Q}\Big| \ \psi(t)^{2} + \int_{\I} \theta(t,z)^{2}\ \vartheta(dz) \leq B,\ \text{for Lebesgue-almost all}\ t \in [0,T]\right\}, 
	\end{align*}
	for some constant $B>0$ depending on $k$. The last equation is shown in \cite{gdb} using the Hansen-Jagannathan bound. $C_m(t)$ in \eqref{eq:gen_C_def} is then also called upper good-deal bound and any price at time $t$ above $C_m(t)$ gives rise to a good-deal (i.e., allowing for portfolios with price zero which are in $\mathcal{A}$).\\
\item\textsc{Strictly Acceptable Opportunities:} In \cite{carr2001pricing} the acceptance set $\mathcal{A}$ is given by the set of all financial positions $X$ such that
$$\mathcal{A}=\{X\in L^2 | \mathbb{E}_{\mathbb{Q}_i}[X]\geq 0, \,\,i=1,\ldots,m\} ,  $$ 
for some test measures $\Q_1,\ldots,\Q_m,$ which for instance could be pricing measures based on different marginal utilities of different agents. Note that in an equilibrium model every rational agent is willing to accept at least some scenarios where possible losses can occur. $\mathcal{M}$ in this case is the set of these test measures, and a price higher than \eqref{eq:gen_C_def} leads to a (strictly) acceptable oppertunity. \cite{carr2001pricing} prove fundamental theorems of asset pricing replacing the traditional notion of arbitrage by their notion of acceptability. These results were generalized for instance by \cite{cherny2008pricing}.\\
	\item \textsc{Ball scenarios:} A measure $\mathbb{P}$ qualifies to be a reference risk-neutral measure if it is plausible from an investor's point of view. To obtain an equally reasonable price range of the claim $h(S(T))$ in \eqref{eq:gen_C_def}, one can consider market models $\mathbb{Q} \in \mathcal{Q}$ in a small ball around the reference model $\mathbb{P}.$ Noting that the choice $\psi = \theta = 0$ in \eqref{eq:girsanov} leads to $\mathbb{P}$, one specifies
	\begin{align*}
	\mathcal{M} := \{&\mathbb{Q} \in \mathcal{Q} |\ |\psi(t)| \leq B_{1}, \ |\theta(t,z)| \leq B_{2}, \  d\mathbb{P}\times dt\times \vartheta(dz)\text{-a.s.}\},
	\end{align*}
	for some constants $B_{1},B_{2} > 0$. The set of acceptable position is then defined as
	$$\mathcal{A}=\{X \in L^2|\inf_{\mathbb{Q} \in \mathcal{M}}\mathbb{E}_{\mathbb{Q}}[X]\geq 0\}.$$

	
\end{enumerate} 
Denote by $\mathcal{P}$ the predictable $\sigma$-algebra on $[0,T] \times \Omega$ w.r.t. $(\mathcal{F}_{t}).$
\begin{remark} \label{rem}
If the set $\mathcal{M}$ is of the form
$$\mathcal{M} = \{\mathbb{Q} \in \mathcal{Q}| (\psi(t),\theta(t,\cdot)) \in H(t,\omega)\},$$
for a compact $\mathcal{P} \times \mathcal{B}(\mathbb{R}) \times \mathcal{B}(L^{2}(\vartheta(dz))$-measurable set $H$, there exists $\bar{\mathbb{Q}} \in \mathcal{M}$ such that for all $t \in [0,T]: C_{m}(t,x) = \mathbb{E}_{\bar{\mathbb{Q}}}[h(S_{m}^{t,x}(T))\ | \mathcal{F}_{t}].$ 
\end{remark}

An assumption ensuring the Markovian structure of the robust price under the misspecified model is needed. To this end, we define the set of probability measures $\mathcal{A}$ leading to a Markov process $S_{m}^{x}$ as 
\begin{equation}
\mathcal{U} := \{\mathbb{Q}|\ \mathbb{Q}\ \text{is a probability measure on}\ (\Omega,\mathcal{F})\ \text{and}\  S_{m}^{x}\ \text{is Markov w.r.t.}\ \mathbb{Q}\}.
\label{eq:A}
\end{equation}

\begin{assumption}
We assume that we can restrict ourselves to Markov processes in \eqref{eq:gen_C_def}, i.e.,  $C_{m}(t) = \esssup_{\mathbb{Q} \in \mathcal{M} \cap \mathcal{U}} \mathbb{E}_{\mathbb{Q}}\left[h(S_{m}^{t,x}(T))\ \big| \mathcal{F}_{t} \right].$ 
\label{ass_markov_gen_C}
\end{assumption}
\noindent
Assumption \ref{ass_markov_gen_C} is typically satisfied in the examples above if the stock process is Markov under some benchmark model because in this case the set $H$ of Remark \ref{rem} has a deterministic boundary such that the supremum is attained in some $\mathbb{Q}$ maintaining an initial Markovian structure of $S_m$. Due to Assumption \ref{ass_markov_gen_C}, we can express the essential supremum in \eqref{eq:gen_C_def} as deterministic function of time and $S_{m}^{x}(t)$, i.e., 
\begin{equation}
C_{m}(t) = C_{m}(t,S_{m}^{x}(t)),
\end{equation}
for $C_{m}:[0,T] \times \mathbb{R}_{+} \to \mathbb{R}.$ We deduce another corollary from Theorem \ref{thm_convex}.
\begin{corollary}
Suppose the conditions of Theorem \ref{thm_convex} and Assumption \ref{ass_markov_gen_C} are satisfied. Then the function $C_{m}$ is convex and has bounded one-sided derivatives in the second variable. 
\label{cor_convex}
\end{corollary}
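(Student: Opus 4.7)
The plan is to reduce the statement to a pointwise-in-$\mathbb{Q}$ application of Theorem \ref{thm_convex} and Corollary \ref{cor_bounded}, and then to invoke the fact that both convexity and a uniform Lipschitz bound are preserved under pointwise suprema. To this end, for each $t \in [0,T]$ and each $\mathbb{Q} \in \mathcal{M} \cap \mathcal{A}$, I introduce the auxiliary price
$$v^{\mathbb{Q}}(t,x) := \mathbb{E}_{\mathbb{Q}}\bigl[h(S_{m}^{t,x}(T)) \bigm| \mathcal{F}_{t}\bigr].$$
Because $\mathbb{Q} \in \mathcal{A}$, the process $S_{m}^{x}$ is $\mathbb{Q}$-Markov, so $v^{\mathbb{Q}}(t,\cdot)$ reduces to a deterministic function of the spatial variable alone. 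By Assumption \ref{ass_markov_gen_C} we then have $C_{m}(t,x) = \sup_{\mathbb{Q} \in \mathcal{M} \cap \mathcal{A}} v^{\mathbb{Q}}(t,x)$ as an equality of deterministic functions.

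Next, I will apply Theorem \ref{thm_convex} with $\mathbb{Q}$ taking the role of the reference measure $\mathbb{P}$. Assumption \ref{ass_vol} concerns only the coefficients $\rho$ and $\tilde{\rho}$ and is manifestly measure-independent. The $L^{2}$-integrability of $\tilde{S}_{m}^{x}$ used in \textsc{Step 3} of the proof of Theorem \ref{thm_convex} is exactly the content of Condition (I) from Definition \ref{def_mart_Q}, so it holds under every $\mathbb{Q} \in \mathcal{Q} \supseteq \mathcal{M} \cap \mathcal{A}$. The remaining ingredients in that proof (strong uniqueness, stochastic flow differentiability, Kazamaki's criterion, and the subsequent Girsanov change of measure) rest on structural regularity of the SDE coefficients and on bounds for $\rho'$, $\tilde{\rho}'$, none of which is affected by replacing $\mathbb{P}$ by $\mathbb{Q}$, since Girsanov (Lemma \ref{lemma_girsanov}) only alters the drift and the intensity of the compensator. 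Hence $v^{\mathbb{Q}}(t,\cdot)$ is convex, and Corollary \ref{cor_bounded} applied under $\mathbb{Q}$ yields the estimate $|v^{\mathbb{Q}}_{,\pm}'(t,x)| \leq \|h_{\pm}'\|_{L^{\infty}}$, crucially uniform in $\mathbb{Q}$.

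To finish, I combine these pointwise properties. The pointwise supremum of convex functions is convex, and the pointwise supremum of a family of functions sharing a common Lipschitz constant $L$ is itself $L$-Lipschitz. Applying both observations to the family $\{v^{\mathbb{Q}}(t,\cdot)\}_{\mathbb{Q} \in \mathcal{M} \cap \mathcal{A}}$ with $L := \max(\|h_{-}'\|_{L^{\infty}}, \|h_{+}'\|_{L^{\infty}})$ shows that $C_{m}(t,\cdot)$ is convex and globally $L$-Lipschitz in $x$. For a convex $L$-Lipschitz function, left- and right-hand derivatives exist at every interior point and are automatically bounded in absolute value by $L$, so the second conclusion of the corollary follows.

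The main obstacle I anticipate is verifying that the hypotheses of Theorem \ref{thm_convex} really do transfer intact when one replaces $\mathbb{P}$ by an arbitrary $\mathbb{Q} \in \mathcal{M} \cap \mathcal{A}$: one must check that the Girsanov-modified drift and state-dependent $\mathbb{Q}$-compensator interact benignly with the stochastic-flow and $L^{2}$-estimates of \textsc{Steps 3}--\textsc{5}. This should be routine because those estimates only exploit regularity of $\rho, \tilde{\rho}$ and square-integrability of $\tilde{S}_{m}^{x}$, both of which are measure-robust; nevertheless, spelling it out carefully is the technical heart of the argument.
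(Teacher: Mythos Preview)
The paper does not supply a proof for this corollary; it is stated as an immediate consequence of Theorem \ref{thm_convex}. Your argument---show that each $v^{\mathbb{Q}}(t,\cdot)$ is convex with one-sided derivatives bounded by $\|h'_{\pm}\|_{L^\infty}$, then pass to the pointwise supremum---is precisely the deduction the authors have in mind, and the final step (suprema preserve convexity and a common Lipschitz bound) is correct.

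One remark on the point you yourself flag at the end. Your claim that Assumption \ref{ass_vol} is ``manifestly measure-independent'' is not quite right: the two integral conditions in part (ii),
\[
\int_{\I}(\tilde{\rho}(t,s_{1},z)-\tilde{\rho}(t,s_{2},z))^{2}\,\vartheta(dz)\le L|s_{1}-s_{2}|^{2},
\qquad
\int_{\I}\tilde{\rho}'(t,s,z)^{2}\,\vartheta(dz)\le L,
\]
are written with respect to the $\mathbb{P}$-intensity $\vartheta$, whereas re-running \textsc{Steps 3} and \textsc{5} of Theorem \ref{thm_convex} under $\mathbb{Q}$ requires the same bounds with $\vartheta_{\mathbb{Q}}=(1-\theta)\vartheta$. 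Condition (I) in Definition \ref{def_mart_Q} supplies the growth bound and $L^{2}$-integrability, but not this Lipschitz estimate. In the paper's concrete examples (good-deal bounds, ball scenarios, coherent risk measures with bounded kernels) $\theta$ is uniformly bounded, so $\vartheta_{\mathbb{Q}}\le C\vartheta$ and the transfer is indeed routine; for a completely unrestricted $\mathcal{M}\subseteq\mathcal{Q}$ it would strictly speaking need an additional hypothesis. This is a wrinkle the paper itself leaves implicit, so your proof is at the same level of rigor as the original.
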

\noindent
Observe that the price $C_{m}$ at time $T$ is equal to the value of the claim itself because 
\begin{equation}
C_{m}(T,S_{m}^{T,S_{m}(T)}(T)) = \esssup_{\Q \in \mathcal{M} \cap \mathcal{U}} \mathbb{E}_{\mathbb{Q}}\left[h(S_{m}^{T, S_{m}(T)}(T)) | \mathcal{F}_{T}\right] = h(S_{m}(T)).
\label{eq:cost_final_gen}
\end{equation}
\noindent
The following regularity assumption is needed for later applications of \emph{It\^{o}'s lemma}:

\begin{assumption}
We assume that either $C_{m} \in C^{1,2}$ \emph{or}, if $\gamma \equiv 0$ and $\int_{\I} |z|\ \vartheta(dz) < \infty,$ that $C_{m} \in C^{1,1}$ \emph{or}, if $\gamma \equiv 0$ and  $\vartheta(\I) < \infty,$ that $C_{m}$ is locally Lipschitz in $t$. 
\label{ass_ito_C}
\end{assumption}

\begin{remark}
If $\gamma \equiv 0$ and $\int_{\I} |z|\ \vartheta(dz) < \infty$, then for It\^{o}'s formula (and therefore all results in the sequel) to hold, it is sufficient that $C_{m} \in C^{1,1}$, see for instance Theorem 4.2 in \cite{kyprianou2014fluctuations}.
\end{remark}
\noindent
The next lemma shows that the conditions enforced in the third part of Assumption \ref{ass_ito_C} indeed yield \emph{It\^{o}'s formula}. Recall that $C'_{m,+}$ denotes the right-hand derivative of $C_{m}.$
\begin{proposition}
Suppose that $\vartheta(\I) < \infty$ and $\gamma \equiv 0.$ If $C_{m}$ is continuously differentiable in time, then It\^{o}'s formula holds with $C_{m}' = C'_{m,+}.$
\label{lem_ito}
\end{proposition}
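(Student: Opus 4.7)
The plan is to argue pathwise, using $\vartheta(\I) < \infty$ to reduce to finite activity. Almost surely $J$ has only finitely many atoms in $[0,T] \times \I$; label the corresponding jump times of $S_m^x$ as $0 =: \tau_0 < \tau_1 < \cdots < \tau_N \le T =: \tau_{N+1}$, with marks $z_k$. Since $\gamma \equiv 0$, the path of $S_m^x$ is absolutely continuous on each inter-jump interval $[\tau_k, \tau_{k+1})$, satisfying the ODE
$$\dot{S}_m^x(u) = S_m^x(u)\, r(u) - \int_\I \tilde{\rho}(u, S_m^x(u), z)\, \vartheta(dz),$$
and jumps at $\tau_{k+1}$ by $\tilde{\rho}(\tau_{k+1}, S_m^x(\tau_{k+1}-), z_{k+1})$.

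The core step is a pathwise chain rule on each inter-jump interval. For small $\delta > 0$ I would decompose the increment into a pure-time part and a pure-space part via
$$C_m(u+\delta, S_m^x(u+\delta)) - C_m(u, S_m^x(u)) = \int_u^{u+\delta} \dot{C}_m(v, S_m^x(u+\delta))\, dv + \int_{S_m^x(u)}^{S_m^x(u+\delta)} C'_{m,+}(u, \xi)\, d\xi,$$
using continuity of $\dot{C}_m$ and the exact convex-function identity $C_m(u, b) - C_m(u, a) = \int_a^b C'_{m,+}(u, \xi)\, d\xi$. Summing over a partition of $[\tau_k, \tau_{k+1})$, refining the mesh, and invoking continuity of $\dot{C}_m$ together with right-continuity of $C'_{m,+}(u, \cdot)$ should yield
$$C_m(\tau_{k+1}-, S_m^x(\tau_{k+1}-)) - C_m(\tau_k, S_m^x(\tau_k)) = \int_{\tau_k}^{\tau_{k+1}} \dot{C}_m(u, S_m^x(u))\, du + \int_{\tau_k}^{\tau_{k+1}} C'_{m,+}(u, S_m^x(u))\, \dot{S}_m^x(u)\, du.$$
Telescoping over $k$ and adding the pure-jump increments $C_m(\tau_k, S_m^x(\tau_k)) - C_m(\tau_k, S_m^x(\tau_k-))$, which assemble into $\int_0^t \int_\I [C_m(u, S_m^x(u-) + \tilde{\rho}(u, S_m^x(u-), z)) - C_m(u, S_m^x(u-))]\, J(du, dz)$, then delivers the claimed It\^o formula; using $J = \tilde{J} + \vartheta(dz)\, du$ rewrites it in standard form.

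I expect the main obstacle to be the chain rule in the $s$-variable, because $C'_{m,+}(u, \cdot)$ is only the right-derivative of a convex function and can have discontinuities on a countable (and $u$-dependent) set of kinks. Convexity is crucial: the identity $C_m(u, b) - C_m(u, a) = \int_a^b C'_{m,+}(u, \xi)\, d\xi$ holds exactly regardless of the behaviour at these kinks, so the space-increment is unambiguous. The delicate step is passing from $\int C'_{m,+}(u, \xi)\, d\xi$ to $\int C'_{m,+}(u, S_m^x(u))\, \dot{S}_m^x(u)\, du$ when $\dot{S}_m^x(u) < 0$, since then the left-hand derivative $C'_{m,-}$ would naively appear; to handle this I would invoke Lebesgue's lemma for absolutely continuous functions, which guarantees that $\dot{S}_m^x(u) = 0$ for almost every $u$ at which $S_m^x(u)$ lies in the countable set of kinks of $C_m(u, \cdot)$, making the choice between $C'_{m,+}$ and $C'_{m,-}$ immaterial under the integral sign.
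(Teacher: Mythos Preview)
Your pathwise argument over inter-jump intervals is correct in structure and is essentially the paper's route as well. The paper packages the inter-jump analysis differently: rather than splitting each increment into a time part and a space part, it writes $S_m^x(t) = x\exp\bigl(\int_0^t \lambda(u)\,du + S^J(t)\bigr)$ with $\lambda(t) = -\int_{\I}\tilde{\gamma}(t,S_m^x(t),z)\,\vartheta(dz)$ and $S^J$ the pure-jump log-part, and then sets $f(t,s^J) := C_m\bigl(t,\exp(\int_0^t\lambda + s^J)\bigr)$. Since $S^J$ is piecewise constant, the chain rule in the space variable is absorbed entirely into $\dot f$, and one only needs that $f$ is locally Lipschitz in $t$ to invoke the fundamental theorem of calculus. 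Your explicit two-variable decomposition is more transparent but forces you to confront the kink issue head-on; the paper's change of variables hides it inside the single-variable Lebesgue differentiation theorem.

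One point to tighten: the justification you give for the kink issue is not quite the right lemma. The standard ``$g'(u)=0$ a.e.\ on $\{g(u)\in K\}$'' statement applies to a \emph{fixed} null set $K$, whereas here the kink set $K_u$ of $C_m(u,\cdot)$ varies with $u$. The clean way to close this is to note that $F(u):=C_m(u,S_m^x(u))$ is Lipschitz on each inter-jump interval, hence a.e.\ differentiable; computing the one-sided derivatives of $F$ directly (via your own decomposition) shows that the right derivative picks up $C'_{m,+}$ and the left derivative $C'_{m,-}$ whenever $\dot S_m^x(u)\neq 0$, so existence of $F'(u)$ forces $S_m^x(u)\notin K_u$ at such $u$. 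This yields exactly the a.e.\ identity you need without appealing to a Sard-type result for moving targets.
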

\begin{proof}
Fix $t \in [0,T]$ and define
\begin{align*}
\lambda(t) &:= - \int_{\I} \tilde{\gamma}(t,S_{m}^{x}(t),z)\ \vartheta(dz), \\
S^{J}(t) &:= \int_{0}^{t} \int_{\I} \ln(1+\tilde{\gamma}(u,S_{m}^{x}(u),z)) \ J(du,dz).
\end{align*}
\noindent
Assumption \ref{ass_vol} and the condition $\vartheta(\I) < \infty$ ensure that $\lambda(t)$ is bounded for every $t \in [0,T].$ Suppressing as usual the dependence on $\omega$, we can define some function $f$ by $f(t,s^{J}) = C_{m}\left(t,\exp \left(\int_{0}^{t} \lambda(u)\ du + s^{J} \right)\right).$ Observe that $f$ is locally Lipschitz continuous in $t$. Thus, it is almost everywhere (a.e.) differentiable in $t$ by the \emph{Lebesgue theorem} (as it is absolutely continuous) and equal to the integral of its derivative. Corollary \ref{cor_convex} implies that $C_{m}$ is a.e. differentiable in its second component and since its derivative is equal to $C'_{m,+}$, the derivative of $f$ w.r.t. $t$ is a.e. equal to 
\begin{equation}
\dot{f}(t,S^{J}(t)) = \dot{C}_{m}(t,S_{m}^{x}(t)) + S_{m}^{x}(t) \lambda(t) C'_{m,+}(t,S_{m}^{x}(t)).
\label{eq:f}
\end{equation}
As $\vartheta(\I) < \infty$ and $\gamma \equiv 0$, $J(dt,dz)$ corresponds to a compound Poisson process whose jump times we denote by $(\tau_{n})_{n \geq 1}$ with $\tau_{0} = 0.$ It is
\begingroup
\allowdisplaybreaks
\begin{align*}
&C_{m}(t,S_{m}^{x}(t))\\ &= f(t,S^{J}(t)) \\
& = \sum_{n \geq 1, \tau_{n} \leq t} \left(\int_{\tau_{n-1}}^{\tau_{n}}\dot{f}(u,S^{J}(u\m)) \ du + f(\tau_{n},S^{J}(\tau_{n})) - f(\tau_{n},S^{J}(\tau_{n-1})) \right) \\
&= \int_{0}^{t} \dot{f}(u,S^{J}(u\m)) \ du + \sum_{0 \leq u \leq t} \left(f(u,S^{J}(u)) - f(u,S^{J}(u\m)) \right) \\
&\stackrel{\eqref{eq:f}}{=} \int_{0}^{t} \dot{C}_{m}(t,S_{m}^{x}(t)) \ du + \int_{0}^{t} C'_{m,+}(u,S_{m}^{x}(u\m)) \ dS_{m}^{x}(u)\\
&\ \ + \int_{0}^{t} \int_{\I} \Big(C_{m}(u,S_{m}^{x}(u)) - C_{m}(u,S_{m}^{x}(u\m)) - C'_{m,+}(u,S_{m}^{x}(u\m))\ \Delta S_{m}^{x}(u,z)\Big)\ J(du,dz)  ,
\end{align*}
\endgroup
where the first equation holds because $S^{J}$ is constant between the jumps and by the \emph{Lebesgue theorem} explained before. This gives \emph{It\^{o}'s formula}.
\end{proof}
\begin{remark}
In the sequel we will for the ease of exposition with a slight abuse of notation denote $C_{m}' := C_{m,+}'$ and $C_{m}''\cdot \gamma^{2} = C_{m}''\cdot \sigma^{2}:= 0$ in case $\gamma = \sigma = 0$ even if $C_{m}''$ formally does not exist. 
\end{remark}

Suppose some investor chooses to follow the trading strategy $C_{m}' = (C_{m}'(t,\cdot))_{t \in [0,T]}$ that is computed with respect to the misspecified model and trades in the physical stock $S$. Then the corresponding self-financing hedging portfolio is given by

\begin{equation}
P^{C_{m}'}(t,S(t)) = C_{m}(0,x) + \int_{0}^{t} C_{m}'(u,S(u\m)) \ dS(u).
\label{eq:hedging_P}
\end{equation}
\noindent
We denote by $E^{C_{m}'} = (E^{C_{m}'}(t))_{t \in [0,T]}$ the hedging error induced, and formally define it as

\begin{equation}
E^{C_{m}'}(t) := P^{C_{m}'}(t,S(t)) - C_{m}(t,S(t)).
\label{eq:E_C}
\end{equation}
\noindent
Hence, $E^{C_{m}'}(T)$ gives the difference how far off the terminal value of the hedging portfolio $P^{C_{m}'}$ is from the payoff $h(S(T)).$ The following integrability condition is needed for the proof of Theorem \ref{thm_main} below; we remark that it is for instance fulfilled if $\tilde{\gamma}$ is (uniformly) square-integrable in a neighborhood of zero. 

\begin{assumption}
In addition to Assumption \ref{ass_vol}, suppose that for all $(t,s) \in [0,T] \times \mathbb{R}_{+}$ there exists some $B > 0$ such that
\begin{align*}
\int_{\I} \ln(1+\tilde{\gamma}(t,s,z))^{2}\ \vartheta(dz) \leq B,\ \int_{\I} \tilde{\gamma}(t,s,z) - \ln(1+\tilde{\gamma}(t,s,z)) \ \vartheta(dz) \leq B.
\end{align*}
\label{ass_vol_2}
\end{assumption}
Let us denote $\cQ_0(\mathcal{M}):=\bigcup_{\Q\in \mathcal{M}}\mathcal{Q}_0(\Q)$, see (\ref{hull}).
The following theorem analyzes the hedging error $E^{C_{m}'}$ and states that it is a submartingale w.r.t. each measure $\mathbb{Q} \in \cl(\conv(\cQ_0(\mathcal{M}))$ if the volatility and the jump sensitivity are systematically overestimated. 
\begin{theorem} \label{thm_rob_E}
Suppose Assumption \ref{ass_martingale}, Assumption \ref{ass_markov_gen_C}, Assumption \ref{ass_ito_C} and Assumption \ref{ass_vol_2} are satisfied. Consider the hedging portfolio $P^{C_{m}'}$ given by \eqref{eq:hedging_P} and the corresponding hedging error $E^{C_{m}'}$ specified by \eqref{eq:E_C}. If 
\begin{equation*}
\sigma(t) \leq \gamma(t,S(t)) \ \text{and} \ \emph{sgn}(\tilde{\gamma}(t,S(t),z)-\eta(t,z)) = \emph{sgn}(\eta(t,z)),
\end{equation*}
$d\mathbb{P} \times dt$ and $d\mathbb{P} \times dt \times \vartheta(dz)$-a.s., then the hedging error $E^{C_{m}'}$ is an $\cl(\conv(\cQ_0(\mathcal{M})))$-submartingale. In particular, $\inf_{\mathbb{Q} \in \cl(\conv(\cQ_0(\mathcal{M})))} \mathbb{E}_{\mathbb{Q}}[E^{C_{m}'}(t)] \geq 0$ for all $t \in [0,T].$
\label{thm_main}
\end{theorem}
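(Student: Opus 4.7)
The plan is to apply It\^{o}'s formula to $C_{m}(t,S(t))$ along the true stock \eqref{eq:r_stock} (recall $r\equiv 0$), subtract it from $dP^{C_{m}'}(t)=C_{m}'(t,S(t\m))\,dS(t)$ to obtain $dE^{C_{m}'}(t)$, and identify its drift under each $\mathbb{Q}\in\mathcal{M}$. The regularity needed comes from Assumption \ref{ass_ito_C}: Proposition \ref{lem_ito} covers the compound-Poisson subcase and the convention $C_{m}''\cdot\sigma^{2}=0$ settles $\gamma\equiv\sigma\equiv 0$. Fixing $\mathbb{Q}\in\mathcal{M}$ and rewriting $dS$ via $W^{\mathbb{Q}}$ and $\tilde{J}_{\mathbb{Q}}$ using Lemma \ref{lemma_girsanov}, the Girsanov drift $S(t)\bigl[\sigma(t)\psi(t)+\int_{\I}\eta(t,z)\theta(t,z)\vartheta(dz)\bigr]dt$ of $dS$ appears with the same pre-factor $C_{m}'(t,S(t))$ in both $dP^{C_{m}'}$ and $dC_{m}$, and therefore cancels. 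What remains as the $\mathbb{Q}$-drift of $dE^{C_{m}'}$ is
\begin{equation*}
-\dot C_{m}(t,S(t))-\tfrac{1}{2}\sigma(t)^{2}S(t)^{2}C_{m}''(t,S(t))-\int_{\I}\bigl[C_{m}(t,S(t)+S(t)\eta(t,z))-C_{m}(t,S(t))-C_{m}'(t,S(t))S(t)\eta(t,z)\bigr]\vartheta_{\mathbb{Q}}(dz).
\end{equation*}

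The next step is to bound $-\dot C_{m}$ from below via a PIDE inequality for $C_{m}$. The dynamic programming principle behind the essential supremum in \eqref{eq:gen_C_def}, together with Assumption \ref{ass_markov_gen_C}, yields that for every $\mathbb{Q}\in\mathcal{M}\cap\mathcal{A}$ the process $u\mapsto C_{m}(u,S_{m}^{x}(u))$ is a $\mathbb{Q}$-supermartingale terminating at $h(S_{m}^{x}(T))$ via \eqref{eq:cost_final_gen}. Applying It\^{o} to $C_{m}(u,S_{m}^{x}(u))$ under $\mathbb{Q}$, under which $S_{m}^{x}$ has no drift since $\mathbb{Q}\in\mathcal{Q}_{em}$, and using that the compensator of a supermartingale is non-positive gives the pointwise inequality
\begin{equation*}
\dot C_{m}(t,s)+\tfrac{1}{2}\gamma(t,s)^{2}s^{2}C_{m}''(t,s)+\int_{\I}\bigl[C_{m}(t,s+s\tilde\gamma(t,s,z))-C_{m}(t,s)-C_{m}'(t,s)s\tilde\gamma(t,s,z)\bigr]\vartheta_{\mathbb{Q}}(dz)\le 0,
\end{equation*}
for every $(t,s)\in[0,T]\times\mathbb{R}_{+}$, by varying the Markovian initial condition.

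Substituting this bound at $s=S(t)$ into the expression above for the $\mathbb{Q}$-drift of $E^{C_{m}'}$ and regrouping, the drift is minorised by
\begin{equation*}
\tfrac{1}{2}\bigl[\gamma(t,S(t))^{2}-\sigma(t)^{2}\bigr]S(t)^{2}C_{m}''(t,S(t))+\int_{\I}\bigl[C_{m}(t,S+S\tilde\gamma)-C_{m}(t,S+S\eta)+C_{m}'(t,S)S(\eta-\tilde\gamma)\bigr]\vartheta_{\mathbb{Q}}(dz),
\end{equation*}
which is precisely the integrand of $\Pi$ from Lemma \ref{lem_mon} with $g=C_{m}$ (convex and right-differentiable by Corollary \ref{cor_convex}) and L\'evy measure $\vartheta_{\mathbb{Q}}$. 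The pointwise argument in the proof of Lemma \ref{lem_mon} does not rely on the specific L\'evy measure, so the expression is non-negative under the systematic overestimation hypothesis. Corollary \ref{cor_bounded} and Assumption \ref{ass_vol_2} control the square-integrability of the $\tilde J_{\mathbb{Q}}$-integrand, upgrading the local $\mathbb{Q}$-submartingale to a true one. Since $E^{C_{m}'}(0)=C_{m}(0,x)-C_{m}(0,x)=0$, taking $\mathbb{Q}$-expectations yields $\mathbb{E}_{\mathbb{Q}}[E^{C_{m}'}(t)]\ge 0$ for each $\mathbb{Q}\in\mathcal{M}$, hence the infimum bound.

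The principal obstacle is the passage from the trajectory-wise $\mathbb{Q}$-supermartingale property of $u\mapsto C_{m}(u,S_{m}^{x}(u))$ to the pointwise PIDE inequality at arbitrary $(t,s)$: a priori the inequality is guaranteed only along paths of $S_{m}^{x}$ under $\mathbb{Q}$, whereas the argument above evaluates it at $s=S(t)$, whose law is governed by $\mathbb{P}$. Continuity of the PIDE integrand in $s$ (through Assumption \ref{ass_vol} and the polynomial growth inherited by $C_{m}$ from Assumption \ref{ass_regularity_V} applied in the sup) plus sufficient richness of the support of $S_{m}^{x}$ under $\mathbb{Q}$ bridges this gap. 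A secondary technical difficulty is uniformly accommodating the three regularity regimes of Assumption \ref{ass_ito_C}; the pure-jump case requires replacing the classical It\^{o} formula by Proposition \ref{lem_ito} and interpreting $C_{m}'$ as the right-hand derivative throughout.
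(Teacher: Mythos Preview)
Your overall architecture matches the paper's: apply It\^{o} to $C_{m}$ along $S_{m}^{x}$, exploit the $\mathbb{Q}$-supermartingale property of $u\mapsto C_{m}(u,S_{m}^{x}(u))$ to obtain a pointwise PIDE inequality, then substitute $s=S(t)$ and combine with Lemma~\ref{lem_mon}. You also correctly isolate the crux of the argument as what you call ``the principal obstacle''. However, your resolution of this obstacle is not valid, and this is precisely where the paper invests its real effort.

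You propose to pass from the inequality along paths of $S_{m}^{x}$ to an inequality at arbitrary $s\in\mathbb{R}_{+}$ via ``continuity of the PIDE integrand plus sufficient richness of the support of $S_{m}^{x}$ under $\mathbb{Q}$''. This does not work: for a fixed initial value $x$ there is no reason that the law of $S_{m}^{x}(t)$ under $\mathbb{Q}$ charges all of $\mathbb{R}_{+}$ (think of a pure-Poisson model with a single jump size). The paper instead varies the \emph{initial condition} and uses the stochastic flow $x\mapsto S_{m}^{x}(t)$, proving that for each fixed $(t,\omega)$ this map is onto $\mathbb{R}_{+}$. Monotonicity is already available from Step~1 of the proof of Theorem~\ref{thm_convex}; the nontrivial content is showing $\lim_{x\downarrow 0}S_{m}^{x}(t)=0$ (via the $L^{1}$ martingale property) and $\lim_{x\uparrow\infty}S_{m}^{x}(t)=\infty$ (via a contradiction argument on the explicit stochastic exponential). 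The latter is exactly where Assumption~\ref{ass_vol_2} enters: it bounds the compensator terms $\int\ln(1+\tilde\gamma)^{2}\,\vartheta(dz)$ and $\int[\tilde\gamma-\ln(1+\tilde\gamma)]\,\vartheta(dz)$ so that the exponent cannot drift to $-\infty$ fast enough to defeat the diverging prefactor $x$. Your use of Assumption~\ref{ass_vol_2} for square-integrability of the $\tilde J_{\mathbb{Q}}$-integrand is misplaced; that integrability is handled in the paper by Corollary~\ref{cor_convex} (not Corollary~\ref{cor_bounded}, which concerns $v_{m}$) together with Condition~(I) via the estimate \eqref{eq:replace}. Finally, Assumption~\ref{ass_regularity_V} is not among the hypotheses of the theorem and should not be invoked.
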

\begin{proof} The proof is divided into two steps. Recall that the monotonicity of the mapping $x \mapsto S_{m}^{x}(t)$ for any $t$ established in \textsc{Step 1} of the proof of Theorem \ref{thm_convex} holds up to some set $\mathcal{N}_{1}$ of measure zero. \\

\noindent
\textsc{Step 1} is to show that for fixed $(t,\omega) \in [0,T] \times \Omega \setminus \mathcal{N}_{1}$ it holds that $\lim_{x \downarrow 0} S_{m}^{x}(t) = 0$ and $\lim_{x \uparrow \infty} S_{m}^{x}(t) = \infty.$\\
Regarding the first statement, let $(x_{k})_{k \in \mathbb{N}}$ be a strictly positive sequence satisfying $\lim_{k \to \infty} x_{k} = 0.$ Observe that
$$\lim_{k \to \infty} \mathbb{E}[|S_{m}^{x_{k}}(t)|] = \lim_{k \to \infty} \mathbb{E}[S_{m}^{x_{k}}(t)] = \lim_{k \to \infty} S_{m}^{x_{k}}(0) = \lim_{k \to \infty} x_{k} =0,$$
that is, $S_{m}^{x_{k}}(t)$ converges to $0$ in $L^{1}$. Hence, there exists a subsequence $(x'_{k})_{k \in \mathbb{N}}$ along which $S_{m}^{x'_{k}}(t)$ converges to $0$ a.s. As the mapping $x \mapsto S_{m}^{x}(t)$ is monotonously increasing, the aforementioned subsequence can be assumed to coincide with $(x_{k})_{k \in \mathbb{N}}.$\\

\noindent
Regarding the second statement, let $(x_{k})_{k \in \mathbb{N}}$ be an arbitrary sequence such that $\lim_{k \to \infty} x_{k} = \infty.$ By contradiction assume that 
\begin{equation}
\mathbb{P}\left[\sup_{x > 0} S_{m}^{x}(t) < \infty \right] = \mathbb{P}\left[\lim_{k \to \infty} S_{m}^{x_{k}}(t) < \infty \right] > 0.
\label{eq:contra_P}
\end{equation}
 Note that the previous equality holds by monotonicity. Define
\begin{align*}
B&:=\Big\{\omega \in \Omega \setminus \mathcal{N}_{1} \Big|\ \lim_{k \to \infty}\Big( \int_{0}^{t} \gamma(u,S_{m}^{x_{k}}(u))\ dW(u)\\
&\ \ + \int_{0}^{t} \int_{\I} \ln(1+\tilde{\gamma}(u,S_{m}^{x_{k}}(u\m),z)) \ \tilde{J}(du,dz) \Big)= - \infty\Big\}.
\end{align*}
\noindent
Then we see that
\begingroup
\allowdisplaybreaks
\begin{align*}
&\mathbb{P}\left[\sup_{x > 0} S_{m}^{x}(t) < \infty \right] \\
&=\mathbb{P}\left[\lim_{k \to \infty} S_{m}^{x_{k}}(t) < \infty \right] \\
&=\mathbb{P}\Big[\lim_{k \to \infty}x_{k} \exp\Big(\int_{0}^{t} \gamma(u,S_{m}^{x_{k}}(u))\ dW(u) - \frac{1}{2} \int_{0}^{t} \gamma(u,S_{m}^{x_{k}}(u))^{2}\ du \\
&\ \ + \int_{0}^{t} \int_{\I} \ln(1+\tilde{\gamma}(u,S_{m}^{x_{k}}(u\m),z)) \ \tilde{J}(du,dz)\\
&\ \ - \int_{0}^{t}\int_{\I} \left(\tilde{\gamma}(u,S_{m}^{x_{k}}(u\m),z)-\ln(1+\tilde{\gamma}(u,S_{m}^{x_{k}}(u\m),z)) \right) \ \vartheta(dz)du \Big) < \infty \Big] \\
&\leq \mathbb{P}\Big[\lim_{k \to \infty} \Big(\int_{0}^{t} \gamma(u,S_{m}^{x_{k}}(u))\ dW(u) - \frac{1}{2} \int_{0}^{t} \gamma(u,S_{m}^{x_{k}}(u))^{2}\ du \\
&\ \ + \int_{0}^{t} \int_{\I} \ln(1+\tilde{\gamma}(u,S_{m}^{x_{k}}(u\m),z)) \ \tilde{J}(du,dz)\\
&\ \ - \int_{0}^{t}\int_{\I} \left(\tilde{\gamma}(u,S_{m}^{x_{k}}(u\m),z) - \ln(1+\tilde{\gamma}(u,S_{m}^{x_{k}}(u\m),z))\right) \ \vartheta(dz)du \Big) = - \infty \Big] = \mathbb{P}(B).
\end{align*}
\endgroup
The last equality is justified by Assumption \ref{ass_vol_2}. Observe that

\begin{align*}
0 \leq X_{k} &:= \Big|\int_{0}^{t} \gamma(u,S_{m}^{x_{k}}(u))\ dW(u) + \int_{0}^{t} \int_{\I} \ln(1+\tilde{\gamma}(u,S_{m}^{x_{k}}(u\m),z)) \ \tilde{J}(du,dz) \Big|\cdot \mathbbm{1}_{B} \\
&\leq \Big|\int_{0}^{t} \gamma(u,S_{m}^{x_{k}}(u))\ dW(u) + \int_{0}^{t} \int_{\I} \ln(1+\tilde{\gamma}(u,S_{m}^{x_{k}}(u\m),z)) \ \tilde{J}(du,dz) \Big| \in L^{1}.
\end{align*}
\noindent
Note that again Assumption \ref{ass_vol_2} ensures that the previous term is in $L^{1}.$ Then, by the definition of $B$, it holds that 
$$ \lim_{k \to \infty}
X_{k}=
\begin{cases}
\infty,\ &\text{on}\ B,\\
0, &\text{else.}\end{cases}
$$
It follows from \eqref{eq:contra_P} that $\mathbb{P}(B) > 0,$ so we clearly see that
\begin{equation}
\lim_{k \to \infty} \mathbb{E}[X_{k}] = \infty.
\label{eq:contra}
\end{equation}
However, as
\begingroup
\allowdisplaybreaks
\begin{align*}
&\mathbb{E}[X_{k}^{2}] \\
&=\mathbb{E}\left[\Big|\int_{0}^{t} \gamma(u,S_{m}^{x_{k}}(u))\ dW(u) + \int_{0}^{t} \int_{\I} \ln(1+\tilde{\gamma}(u,S_{m}^{x_{k}}(u\m),z)) \ \tilde{J}(du,dz) \Big|^{2}\cdot \mathbbm{1}_{B} \right] \\
&\leq \mathbb{E}\left[\left(\int_{0}^{t} \gamma(u,S_{m}^{x_{k}}(u))\ dW(u) + \int_{0}^{t} \int_{\I} \ln(1+\tilde{\gamma}(u,S_{m}^{x_{k}}(u\m),z)) \ \tilde{J}(du,dz) \right)^{2}\right] \\
&\leq 2 \mathbb{E}\left[\left(\int_{0}^{t} \gamma(u,S_{m}^{x_{k}}(u))\ dW(u) \right)^{2} \right] +2 \mathbb{E}\left[\left(\int_{0}^{t} \int_{\I} \ln(1+\tilde{\gamma}(u,S_{m}^{x_{k}}(u\m),z)) \ \tilde{J}(du,dz) \right)^{2} \right] \\
&= 2 \mathbb{E}\left[\int_{0}^{t} \gamma(u,S_{m}^{x_{k}}(u))^{2}\ du \right] +2 \mathbb{E}\left[\int_{0}^{t} \int_{\I} \ln(1+\tilde{\gamma}(u,S_{m}^{x_{k}}(u\m),z))^{2} \ \vartheta(dz)du \right] \\
&\leq 2T(B_{1}^{2} + B_{2}^{2}),
\end{align*}
\endgroup
for constants $B_{1},B_{2} > 0,$ the \emph{Cauchy-Schwartz inequality} implies that
$$\mathbb{E}[X_{k}] \leq \sqrt{\mathbb{E}[X_{k}^{2}]} \leq \sqrt{2T(B_{1}^{2} + B_{2}^{2})},$$
which contradicts \eqref{eq:contra}. Hence, we conclude that
$$\mathbb{P}\left[\sup_{x>0} S_{m}^{x}(t) < \infty \right] = 0 \Rightarrow \mathbb{P}\left[\sup_{x>0} S_{m}^{x}(t) = \infty \right] = 1.$$

\noindent
\textsc{Step 2} is to establish the claim of the theorem. An application of \emph{It\^{o}'s lemma}, compensating the jump-integral and \eqref{eq:girsanov} yield

\begin{align}
&C_{m}(t,S_{m}^{x}(t))\notag \\
\begin{split}
&= C_{m}(0,x) + \int_{0}^{t} \dot{C}_{m}(u,S_{m}^{x}(u)) \ du + \int_{0}^{t} C_{m}'(u,S_{m}^{x}(u\m)) \ dS_{m}^{x}(u) \\
&\ \ + \frac{1}{2} \int_{0}^{t} C_{m}''(u,S_{m}^{x}(u)) \ S_{m}^{x}(u)^{2}\ \gamma(u,S_{m}^{x}(u))^{2}\ du\\
&\ \ + \int_{0}^{t} \int_{\I} \Big(C_{m}(u,S_{m}^{x}(u\m) + \Delta S_{m}^{x}(u,z)) - C_{m}(u,S_{m}^{x}(u\m))\\
&\ \ \phantom{+ \int_{0}^{t} \int_{\I} \Big(} - \Delta S_{m}^{x}(u,z) C_{m}'(u,S_{m}^{x}(u\m)) \Big)\ \tilde{J}_{\mathbb{Q}}(du,dz) \\
&\ \ + \int_{0}^{t} \int_{\I} \Big(C_{m}(u,S_{m}^{x}(u\m) + \Delta S_{m}^{x}(u,z))
- C_{m}(u,S_{m}^{x}(u\m))\\
&\ \ \phantom{+ \int_{0}^{t} \int_{\I} \Big(}- \Delta S_{m}^{x}(u,z) C_{m}'(u,S_{m}^{x}(u\m)) \Big) (1-\theta(u,z))\ \vartheta(dz)du.
\end{split}
\label{eq:gen_C}
\end{align}

%
%
\noindent
Note that the compensated jump integral in \eqref{eq:gen_C} gives rise to a true martingale under each measure $\mathbb{Q} \in \mathcal{M}$ due to Corollary \ref{cor_convex} and \eqref{eq:replace}. Now it follows directly from \eqref{eq:gen_C_def} that $C_{m}$ is a supermartingale. Hence, the predictable process $A=(A(t))_{t \in [0,T]}$ given by

\begin{align*}
A(t) &= -\int_{0}^{t} \dot{C}_{m}(u,S_{m}^{x}(u)) \ du - \frac{1}{2} \int_{0}^{t} C_{m}''(u,S_{m}^{x}(u)) \ S_{m}^{x}(u)^{2}\ \gamma(u,S_{m}^{x}(u))^{2}\ du \\
& \ \ - \int_{0}^{t} \int_{\I} \Big(C_{m}(u,S_{m}^{x}(u\m) + \Delta S_{m}^{x}(u,z))
- C_{m}(u,S_{m}^{x}(u\m))\\
&\ \ \phantom{- \int_{0}^{t} \int_{\I} \Big(} - \Delta S_{m}^{x}(u,z) C_{m}'(u,S_{m}^{x}(u\m) \Big) (1-\theta(u,z))\ \vartheta(dz)du
\end{align*}
 must be increasing. Let $0 \leq w < t \leq T.$ We conclude from the \emph{Lebesgue differentiation theorem} (cf. \cite{rudin}, Chapter 7)  that the following inequality holds Lebesgue-a.s.:
\begingroup
\allowdisplaybreaks
\begin{align}
0 &\leq \lim_{w \to t} \frac{A(t)-A(w)}{t-w}\notag \\
&= \lim_{w \to t} \frac{1}{t-w} \Bigg(-\int_{w}^{t} \dot{C}_{m}(u,S_{m}^{x}(u)) \ du - \frac{1}{2} \int_{w}^{t} C_{m}''(u,S_{m}^{x}(u)) \ S_{m}^{x}(u)^{2}\ \gamma(u,S_{m}^{x}(u))^{2}\ du\notag \\
&\ \ - \int_{w}^{t} \int_{\I} \Big(C_{m}(u,S_{m}^{x}(u) +S_{m}^{x}(u)\tilde{\gamma}(u,S_{m}^{x}(u),z))
- C_{m}(u,S_{m}^{x}(u))\notag \\
&\ \ \phantom{- \int_{w}^{t} \int_{\I} \Big(} - S_{m}^{x}(u)\tilde{\gamma}(u,S_{m}^{x}(u),z) C_{m}'(u,S_{m}^{x}(u)) \Big) (1-\theta(u,z))\ \vartheta(dz)du \Bigg)\notag \\
\begin{split}
&= -\Big( \dot{C}_{m}(t,S_{m}^{x}(t)) + \frac{1}{2} C_{m}''(t,S_{m}^{x}(t)) \ S_{m}^{x}(t)^{2}\ \gamma(t,S_{m}^{x}(t))^{2}\\
&\ \  + \int_{\I} \Big(C_{m}(t,S_{m}^{x}(t) + S_{m}^{x}(t)\tilde{\gamma}(t,S_{m}^{x}(t),z)) - C_{m}(t,S_{m}^{x}(t))\\
&\ \ \phantom{+ \int_{\I} \Big(} - S_{m}^{x}(t)\tilde{\gamma}(t,S_{m}^{x}(t),z)\ C_{m}'(t,S_{m}^{x}(t)) \Big) (1-\theta(t,z))\ \vartheta(dz) \Big).
\end{split}
\label{eq:A_gen} 
\end{align}
\endgroup
\noindent

\noindent
Let us remark that the right-hand side of (\ref{eq:A_gen}) is increasing in $\theta$. In particular, (\ref{eq:A_gen}) holds for all $\tilde{\theta}$ corresponding to a measure $\Q^{(\psi,\tilde{\theta})}\in \cQ_0(\mathcal{M}).$
Next, observe that \eqref{eq:A_gen} must hold for every $S_{m}^{x}(t).$ Since we proved in \textsc{Step 1} that the image of the mapping $x \mapsto S_{m}^{x}(t)$ is the entire positive reals, we see that \eqref{eq:A_gen} holds true when replacing $S_{m}^{x}(t)$ by some arbitrary $s \in \mathbb{R}_{+}.$ This implies that we can substitute $S_{m}^{x}(t)$ in particular by the true stock price $S(t).$ Using the definition of $E^{C_{m}'}$ from \eqref{eq:E_C}, we obtain from \eqref{eq:gen_C} for any $\Q^{(\psi,\tilde{\theta})}\in \cQ_0(\mathcal{M})$ that
\begingroup
\allowdisplaybreaks
\begin{align*}
E^{C_{m}'}(t) &= P^{C_{m}'}(t,S(t)) - C_{m}(t,S(t))\\
&= -\int_{0}^{t} \dot{C}_{m}(u,S(u)) \ du - \frac{1}{2} \int_{0}^{t} C_{m}''(u,S(u)) \ S(u)^{2}\ \sigma(u)^{2}\ du \\
&\ \ - \int_{0}^{t} \int_{\I} \Big(C_{m}(u,S(u\m) + \Delta S(u,z))
- C_{m}(u,S(u\m))\\
&\ \ \phantom{- \int_{0}^{t} \int_{\I} \Big(} - \Delta S(u,z) C_{m}'(u,S(u\m)) \Big) (1-\tilde{\theta}(u,z))\ \vartheta(dz)du \\
&\ \ - \int_{0}^{t} \int_{\I} \Big(C_{m}(u,S(u\m) + \Delta S(u,z)) - C_{m}(u,S(u\m)) \\
&\ \ \phantom{- \int_{0}^{t} \int_{\I} \Big(} - \Delta S(u,z) C_{m}'(u,S(u\m)) \Big)\ \tilde{J}_{\mathbb{Q}}(du,dz). 
\end{align*}
\endgroup
\noindent
Fix again $0 \leq w < t \leq T.$ Since $C_{m}'$ is uniformly bounded according to Corollary \ref{cor_convex}, we can deduce from \eqref{eq:replace} that 
\begingroup
\allowdisplaybreaks
\begin{align}
\label{submartingale}
&\mathbb{E}_{\mathbb{Q}}[E^{C_{m}'}(t)|\mathcal{F}_{w}]\nonumber\\
& \geq E^{C_{m}'}(w) - \mathbb{E}_{\mathbb{Q}}\Bigg[\int_{w}^{t} \dot{C}_{m}(u,S(u)) \ du + \frac{1}{2} \int_{w}^{t} C_{m}''(u,S(u)) \ S(u)^{2}\ \gamma(u,S(u))^{2} \ du \nonumber \\
&\ \ + \int_{w}^{t} \int_{\I} \Big(C_{m}(u,S(u\m) + S(u\m) \tilde{\gamma}(u,S(u\m),z))
- C_{m}(u,S(u\m))\nonumber\\
&\ \ \phantom{+ \int_{w}^{t} \int_{\I} \Big(} - S(u\m) \tilde{\gamma}(u,S(u\m),z) C_{m}'(u,S(u\m)) \Big) (1-\tilde{\theta}(u,z))\ \vartheta(dz)du\ \Bigg|\ \mathcal{F}_{w} \Bigg] \geq E^{C_{m}'}(w),
\end{align}
\endgroup
whereby the first inequality follows from Corollary \ref{cor_convex} (giving the convexity of $C_m$) combined with Lemma \ref{lem_mon} and the second one from \eqref{eq:A_gen}. Hence, $E^{C_{m}'}$ is a submartingale under each measure $\mathbb{Q} \in \cQ_0(\mathcal{M}).$ 
\end{proof}

\begin{remark}
In contrast to \cite{kunita2004stochastic}, who uses Kolmogorov's criterion, we show the bijectivity of stochastic flows under the milder assumptions that the derivatives of the coefficients of $S_{m}^{x}$ are locally Lipschitz  (instead of globally Lipschitz) and that the mapping $x \mapsto x + \tilde{\rho}(t,x,z)$ is homeomorphic only on $\mathbb{R}_{+}$ (instead of being homeomorphic on $\mathbb{R}$) for fixed $(t,z) \in [0,T] \times \I.$ Further results on stochastic flows for non-Lipschitz coefficients for multi-dimensional SDEs in a Brownian and a Brownian-Poisson filtration can be found in \cite{zhang2005homeomorphic} and \cite{qiao2008homeomorphism}.  
\end{remark}

Definition (\ref{gooddeal}) and  Theorem \ref{thm_rob_E} imply that the hedging error is acceptable with $$\mathcal{A}:=\bigg\{X\in L^2| \inf_{\mathbb{Q} \in \cl(\conv(\mathcal{Q}_{0}(\mathcal{M})))} \mathbb{E}_{\mathbb{Q}}\left[X \right] \geq 0\bigg\}.$$ 
Furthermore, we note that the right-hand side in (\ref{eq:A_gen} ) is minimal for $\theta$'s with a corresponding $\Q$ being in $\mathcal{M}$. Hence, if we additionally assume that $\mathcal{M}$ is weakly compact, then the infimum in $\inf_{\mathbb{Q} \in \cl(\conv(\mathcal{Q}_{0}(\mathcal{M})))} \mathbb{E}_{\mathbb{Q}}\left[e_m(T) \right] $ is attained. Now if $h$ is linear, or the model is correctly specified and we do not have a jump part, the hedging error is zero and we have a perfect hedge. Otherwise, the inequality in (\ref{submartingale}) is strict for the $\Q\in \mathcal{M}$ attaining the infimum and therefore
\begin{align*}0 & \notin [ -\rho(e_m(T)/M(T)),\rho(-e_m(T)/M(T))]\\
&= [-\rho(-h(S(T)/M(T)+v_m(0,x)),-\rho(h(S(T)/M(T)-v_m(0,x))]     ,\end{align*}
see (\ref{gooddeal}) (or (\ref{good-deal})). In particular, the hedging error is either zero or a good-deal.


\section{Robust superhedging}

In this section we turn to the computation and properties of the superhedging strategy. We first characterize the superhedging price function by comparing it to the theory outlined in the previous section. Subsequently we discuss several special cases in Theorem \ref{thm_superhedge}. We define \emph{the cost of super-replication} at time $t \in [0,T)$, say $\bar{C}_{m}(t),$ by 

\begin{align}
\begin{split}
\bar{C}_{m}(t) := \essinf\Big\{c(t) \in L^{2}(\mathcal{F}_{t}) :&\text{\ there exists a self-financing trading strategy} \ y \ \text{such that} \\
&\mathbb{P}\left(c(t) + \int_{t}^{T} y(u\m)\ dS_{m}^{t,x}(u) \geq h(S_{m}^{t,x}(T)) \right) =1\Big\},
\end{split}
\label{eq:price1}
\end{align}
\noindent
i.e., it is the \emph{least} amount of money needed to setup a superhedge. As $h'$ is uniformly bounded, the superhedging price process $\bar{C}_{m}$ is well-defined. In \cite{kramkov} it is shown that

\begin{equation}
\bar{C}_{m}(t) = \esssup_{\Q \in \mathcal{Q}_{em}} \mathbb{E}_{\mathbb{Q}}\left[h(S_{m}^{t,x}(T))\Big|\mathcal{F}_{t}\right].
\label{eq:kramkov1}
\end{equation}
\noindent
The equality of \eqref{eq:price1} and \eqref{eq:kramkov1} is known as \emph{superhedging duality}. Recall the definition of the set $\mathcal{U}$ in \eqref{eq:A} and put $\mathcal{S} := \mathcal{Q}_{em} \cap \mathcal{U}$. The following assumption has an important impact on robustness properties of the superhedge. 

\begin{assumption}
In \eqref{eq:kramkov1} we assume that $\bar{C}_{m}(t) = \esssup_{\Q \in \mathcal{S}} \mathbb{E}_{\mathbb{Q}}\left[h(S_{m}^{t,x}(T))\Big|\mathcal{F}_{t}\right]$. 
\label{ass_markov_C}
\end{assumption}
\noindent

\noindent
Assumption \ref{ass_markov_C} enables us to write 
\begin{equation}
\bar{C}_{m}(t) = C_{m}^{\mathcal{S}}(t,S_{m}^{x}(t))
\label{eq:markov_eq}
\end{equation}
for some deterministic function $C_{m}^{\mathcal{S}}: [0,T] \times \mathbb{R}_{+} \to \mathbb{R}.$ Corollary \ref{cor_convex} implies that $C_{m}^{\mathcal{S}}$ is convex and has a bounded one-sided derivative in the second component. \par
 We remark that the superhedging price process is \emph{not} a special case induced by the robust pricing operator $C_{m}$ in \eqref{eq:gen_C_def} because 
$$\mathcal{M} \cap \mathcal{U} \subseteq \mathcal{Q} \cap \mathcal{U} \subseteq \mathcal{Q}_{em} \cap \mathcal{U} = \mathcal{S},$$
\noindent
and the set $\mathcal{S}$ in \eqref{eq:markov_eq} is in general larger than the other ones by not being restricted to those EMMs satisfying \emph{Condition (I)} (cf. Definition \ref{def_mart_Q}). The following theorem is a particular consequence of the assumption of the Markov property on the superhedging price. 
\begin{theorem}
Consider the process $S_{m}^{x}$ given by \eqref{eq:m_stock}, suppose Assumption \ref{ass_martingale}, Assumption \ref{ass_vol_2}, Assumption \ref{ass_markov_C} and Assumption \ref{ass_ito_C} (with $C_{m}$ replaced by $C_{m}^{\mathcal{S}}$) are satisfied. Let $\tilde{\gamma}(t,s,z) \neq 0$ for all $(t,s,z) \in [0,T] \times \mathbb{R}_{+} \times \I.$ 

\begin{enumerate}[(i)]
	\item 
	If $\gamma(t,s) > 0$ for all $(t,s) \in [0,T] \times \mathbb{R}_{+}$,
	then the superhedging price function is linear in the second variable, i.e., 
	\begin{equation}
	C_{m}^{\mathcal{S}}(t,S_{m}^{x}(t)) = a(t) + b(t) S_{m}^{x}(t),
	\label{eq:C_m_lin}
	\end{equation}
	for some deterministic functions of time $a$ and $b.$
	\item 
	Suppose $\gamma \equiv 0.$ 
	If for every $(t,s) \in [0,T] \times \mathbb{R}_{+}$ there exist $z_{1},z_{2} \in \I, z_{1} \neq z_{2}$, in the support of $\vartheta$ such that $\tilde{\gamma}(t,s,z_{1}) \neq \tilde{\gamma}(t,s,z_{2})$, then the superhedging price function is linear in the second variable taking the same form as in \eqref{eq:C_m_lin}. 
\end{enumerate}
\label{thm_superhedge}
\end{theorem}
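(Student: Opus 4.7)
My plan is to derive a Hamilton--Jacobi--Bellman equation for the Markov-valued superhedging price $C_m^{\mathcal{S}}$ and combine it with the convexity from Corollary \ref{cor_convex} to force $C_m^{\mathcal{S}}(t,\cdot)$ to be affine in the second argument. The starting point is that, by Assumption \ref{ass_markov_C} and the superhedging duality \eqref{eq:kramkov1}, $C_m^{\mathcal{S}}(t,S_m^x(t))$ is a $\mathbb{Q}$-supermartingale for every $\mathbb{Q}\in\mathcal{S}$.

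First I would apply It\^o's formula (justified by Assumption \ref{ass_ito_C} with $C_m^{\mathcal{S}}$ in place of $C_m$) together with the Girsanov-type decomposition from Lemma \ref{lemma_girsanov}. Parametrizing $\mathbb{Q}\in\mathcal{S}$ by $(\psi,\theta)$ satisfying the martingale constraint $\gamma(t,s)\psi+\int_{\I}\tilde{\gamma}(t,s,z)\theta(t,z)\vartheta(dz)=0$ with $\theta>-1$, the $\mathbb{Q}$-drift of $C_m^{\mathcal{S}}(t,S_m^x(t))$ reads
\begin{equation*}
\mathcal{A}^{\theta}C_m^{\mathcal{S}}(t,s)=\dot{C}_m^{\mathcal{S}}(t,s)+\tfrac{1}{2}\gamma(t,s)^{2}s^{2}\partial_{s}^{2}C_m^{\mathcal{S}}(t,s)+\int_{\I}g(t,s,z)\,(1+\theta(t,z))\,\vartheta(dz),
\end{equation*}
where $g(t,s,z):=C_m^{\mathcal{S}}(t,s(1+\tilde{\gamma}))-C_m^{\mathcal{S}}(t,s)-s\tilde{\gamma}\,\partial_{s}C_m^{\mathcal{S}}(t,s)\ge 0$ by Corollary \ref{cor_convex}. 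The supermartingale condition combined with the essential-supremum definition of $\bar{C}_m$, via the Lebesgue-differentiation step used in \eqref{eq:A_gen}, then gives the HJB identity $\sup_{(\psi,\theta)}\mathcal{A}^{\theta}C_m^{\mathcal{S}}(t,s)=0$ pointwise.

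For part (i), whenever $\gamma(t,s)>0$ the drift $\psi$ can be chosen to absorb any value of $\int\tilde{\gamma}\theta\,\vartheta$, so $\theta$ is free modulo $\theta>-1$ and integrability. Taking $\theta=\lambda\mathbbm{1}_{\{g>0\}}$ and letting $\lambda\to\infty$ would drive the supremum to $+\infty$ unless $g(t,s,\cdot)=0$ for $\vartheta$-almost every $z$. For part (ii) with $\gamma\equiv 0$, the remaining constraint collapses to $\int\tilde{\gamma}\theta\,\vartheta=0$. Exploiting the two distinct values $\tilde{\gamma}(t,s,z_1)\ne\tilde{\gamma}(t,s,z_2)$ on the support of $\vartheta$, I would construct perturbations $\theta$ localized near $z_1,z_2$ with coefficients balanced to preserve the constraint, and apply a Lagrangian/duality analysis (combined with the non-negativity of $g$) to force $g(t,s,\cdot)=0$ at both points and, by propagating the identity across the support of $\vartheta$, $\vartheta$-almost everywhere.

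Once $g(t,s,z)=0$ has been established, the relation $C_m^{\mathcal{S}}(t,s(1+\tilde{\gamma}))=C_m^{\mathcal{S}}(t,s)+s\tilde{\gamma}\,\partial_{s}C_m^{\mathcal{S}}(t,s)$ with $\tilde{\gamma}\ne 0$ says the tangent line to $C_m^{\mathcal{S}}(t,\cdot)$ at $s$ meets the graph at $s(1+\tilde{\gamma}(t,s,z))$; convexity then forces $C_m^{\mathcal{S}}(t,\cdot)$ to coincide with this tangent on the closed interval between the two points. Letting $s$ range over $\mathbb{R}_+$ and using the continuity and uniform boundedness of $\tilde{\gamma}$ from Assumption \ref{ass_vol}, together with the fact that slopes at shared endpoints of successive affine segments must agree by local affineness plus convexity, these segments chain to cover all of $\mathbb{R}_+$, yielding global linearity of $C_m^{\mathcal{S}}(t,\cdot)$ and hence \eqref{eq:C_m_lin}. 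The main obstacles I anticipate are making the blow-up argument in part (ii) fully rigorous under the interaction of the single linear constraint with the lower bound $\theta>-1$ (where one-sided amplification alone may be insufficient if $\tilde{\gamma}$ is one-signed), and carefully propagating the pointwise affine identity from $\{z_1,z_2\}$ to enough of the support of $\vartheta$ to drive the chaining step all the way across $\mathbb{R}_+$.
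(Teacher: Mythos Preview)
Your HJB approach is genuinely different from the paper's, and it does go through for part~(i): once $\gamma>0$ absorbs the martingale constraint, $\theta$ is unbounded above, and the inequality $\mathcal{A}^{\theta}C_m^{\mathcal{S}}\le 0$ alone (you do not even need the full HJB equality) forces $g(t,s,\cdot)=0$ $\vartheta$-a.e., after which your tangent-chaining argument is exactly the paper's Step~1.

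For part~(ii), however, there is a real gap, and it is precisely the one you flag at the end. When $\gamma\equiv 0$, the constraint $\int\tilde{\gamma}\theta\,\vartheta=0$ together with $\theta>-1$ genuinely caps the supremum: if, say, $\tilde{\gamma}(z_1),\tilde{\gamma}(z_2)>0$ with point masses, the admissible set for $(\theta(z_1),\theta(z_2))$ is a bounded segment and the maximal drift is a finite linear combination of $g(z_1),g(z_2)$. The HJB identity then only pins down this combination, not the individual values, so neither a blow-up nor a Lagrangian argument will yield $g\equiv 0$. In other words, the information ``$C_m^{\mathcal{S}}$ is a $\mathbb{Q}$-supermartingale for every $\mathbb{Q}\in\mathcal{S}$'' is strictly weaker than what is needed here.

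The paper extracts the missing information not from the drift but from the \emph{martingale part}. It invokes the optional decomposition theorem to write $C_m^{\mathcal{S}}(t,S_m^x(t))=C_m^{\mathcal{S}}(0,x)+\int_0^t\pi\,dS_m^x-A(t)$ with a single predictable integrand $\pi$ and an increasing $A$, and then matches this against It\^o's formula \emph{via quadratic variation}. The jump part of the quadratic variation forces, for every $z$ in the support of $\vartheta$,
\[
C_m^{\mathcal{S}}\bigl(t,s(1+\tilde{\gamma}(t,s,z))\bigr)-C_m^{\mathcal{S}}(t,s)=s\,\tilde{\gamma}(t,s,z)\,\pi(t,s).
\]
The point is that $\pi$ does not depend on $z$; hence the secant slopes through $(s,C_m^{\mathcal{S}}(t,s))$ at the two image points $s(1+\tilde{\gamma}(z_1))$ and $s(1+\tilde{\gamma}(z_2))$ coincide. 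A short convexity argument (three collinear points on a convex graph) then gives local, and by chaining global, linearity. So the key idea you are missing in~(ii) is to work pathwise with the hedging integrand $\pi$ from the optional decomposition rather than to try to annihilate $g$ through measure perturbations.
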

\begin{proof}The proof is conducted in three steps: \\

\noindent
\textsc{Step 1} is to show that if $g:\mathbb{R}_{+} \to \mathbb{R}$ is convex and differentiable, $\tilde{\gamma}(t,s,z) \neq 0$ for every $(t,s,z) \in [0,T] \times \mathbb{R}_{+} \times \I,$ and  for fixed $(t,z) \in [0,T] \times \I$ it holds for every $s \in \mathbb{R}_{+}$ that
$$g(s+s\tilde{\gamma}(t,s,z)) = g(s) + s\tilde{\gamma}(t,s,z) g'(s),$$
then $g$ is linear. \\
The previous equation states that the tangent in the point $s$ is revisited by $g$ in $s+s\tilde{\gamma}(t,s,z).$ Since $g$ is convex, its graph lies above all of its tangents. Thus, we conclude that $g$ is equal to its tangent in $s$ on the interval $[s,s+s\tilde{\gamma}(t,s,z)].$ Since $s$ has been arbitrarily chosen, we see that $g$ is linear on each such interval. We are left arguing that $g$ is linear on its entire domain, but this is immediate because $\tilde{\gamma}(t,s,z) \neq 0$ for all $(t,s,z) \in [0,T] \times \mathbb{R}_{+} \times \I$ and $s + s\tilde{\gamma}(\cdot,s,\cdot) \to 0$ for $s \to 0$ and $s+s\tilde{\gamma}(\cdot,s,\cdot) \to \infty$ for $s \to \infty$ as $\tilde{\gamma}$ is bounded and strictly larger than $-1$. \\ 

\noindent

 \noindent
\textsc{Step 2} is to show that if $g:\mathbb{R}_{+} \to \mathbb{R}$ is convex and differentiable, $\tilde{\gamma}(t,s,z) \neq 0$ for all $(t,s,z) \in [0,T] \times \mathbb{R}_{+} \times \I$, and for every $z_{1},z_{2} \in \I$ and $(t,s) \in [0,T] \times \mathbb{R}_{+}$ such that $\tilde{\gamma}(t,s,z_{1}) \neq \tilde{\gamma}(t,s,z_{2})$ it holds that
\begin{equation}
\frac{g(s+s\tilde{\gamma}(t,s,z_{1}))-g(s)}{s\tilde{\gamma}(t,s,z_{1})} = \frac{g(s+s\tilde{\gamma}(t,s,z_{2}))-g(s)}{s\tilde{\gamma}(t,s,z_{2})},
\label{eq:cond_C_conv}
\end{equation}
then $g$ is linear.\\
Assume w.l.o.g. $0 < \gamma(t,s,z_{1}) < \gamma(t,s,z_{2}).$ An easy calculation allows us to deduce from \eqref{eq:cond_C_conv} that 
\begin{equation}
\frac{g(s+s\tilde{\gamma}(t,s,z_{1}))-g(s)}{s\tilde{\gamma}(t,s,z_{1})} = \frac{g(s+s\tilde{\gamma}(t,s,z_{2}))-g(s+s\tilde{\gamma}(t,s,z_{1}))}{s(\tilde{\gamma}(t,s,z_{2})-\tilde{\gamma}(t,s,z_{1}))}.
\label{eq:cond_C_conv_2}
\end{equation}
Observe that
\begin{align*}
\frac{g(s+s\tilde{\gamma}(t,s,z_{1}))-g(s)}{s\tilde{\gamma}(t,s,z_{1})} &= \frac{\int_{s}^{s+s\tilde{\gamma}(t,s,z_{1})}g'(u)\ du}{s\tilde{\gamma}(t,s,z_{1})} \\
& \leq g'(s+s\tilde{\gamma}(t,s,z_{1})) \\
& \leq \frac{\int_{s+s\tilde{\gamma}(t,s,z_{1})}^{s+s\tilde{\gamma}(t,s,z_{2})}g'(u)\ du}{s(\tilde{\gamma}(t,s,z_{2})-\tilde{\gamma}(t,s,z_{1}))} = \frac{g(s+s\tilde{\gamma}(t,s,z_{2}))-g(s+s\tilde{\gamma}(t,s,z_{1}))}{s(\tilde{\gamma}(t,s,z_{2})-\tilde{\gamma}(t,s,z_{1}))},
\end{align*}
where the two inequalities follow because $g$ is convex. As \eqref{eq:cond_C_conv_2} states that all the terms in the previous chain of (in-)equalities are equal, we conclude that
\begin{align*}
g'(u) &= g'(s+s\tilde{\gamma}(t,s,z_{1})) \ \text{for all}\ u \in \ [s,s+s\tilde{\gamma}(t,s,z_{1})], \\
g'(u) &= g'(s+s\tilde{\gamma}(t,s,z_{1})) \ \text{for all}\ u \in \ [s+s\tilde{\gamma}(t,s,z_{1}),s+s\tilde{\gamma}(t,s,z_{2})],
\end{align*}
i.e., $g$ is linear on the interval (the upper and the lower bound of the interval might switch depending on the sign of $\tilde{\gamma}$). Referring to \textsc{Step 1}, one can show by analogous arguments that $g$ is linear on its entire domain.   \\

\noindent
\textsc{Step 3} is to establish the claim of the theorem. An application of \emph{It\^{o}'s lemma} to $C_{m}^{\mathcal{S}}$ yields

\begin{align}
&C_{m}^{\mathcal{S}}(t,S_{m}^{x}(t))\notag \\
\begin{split}
&= C_{m}^{\mathcal{S}}(0,x) + \int_{0}^{t} \dot{C}_{m}^{\mathcal{S}}(u,S_{m}^{x}(u))\ du + \int_{0}^{t} C_{m}^{\mathcal{S}\prime}(u,S_{m}^{x}(u\m))\ dS_{m}^{x}(u)\\
&\ \ + \frac{1}{2} \int_{0}^{t} C_{m}^{\mathcal{S}\prime \prime}(u,S_{m}^{x}(u))S_{m}^{x}(u)^{2} \gamma(u,S_{m}^{x}(u))^{2}\ du \\
&\ \ + \int_{0}^{t}\int_{\I} \Big(C_{m}^{\mathcal{S}}(u,S_{m}^{x}(u\m) + \Delta S_{m}^{x}(u,z)) - C_{m}^{\mathcal{S}}(u,S_{m}^{x}(u\m))\\
&\ \ \phantom{+ \int_{0}^{t}\int_{\I} \Big(} - \Delta S_{m}^{x}(u,z) C_{m}^{\mathcal{S}\prime}(u,S_{m}^{x}(u\m))  \Big)\ \tilde{J}(du,dz)\\
&\ \  + \int_{0}^{t} \int_{\I}\Big(C_{m}^{\mathcal{S}}(u,S_{m}^{x}(u\m) + \Delta S_{m}^{x}(u,z))\\
&\ \ \phantom{+ \int_{0}^{t} \int_{\I}\Big(} - C_{m}^{\mathcal{S}}(u,S_{m}^{x}(u\m)) - \Delta S_{m}^{x}(u,z) C_{m}^{\mathcal{S}\prime}(u,S_{m}^{x}(u\m))  \Big)\ \vartheta(dz)du.
\end{split}
\label{eq:C}
\end{align}

\noindent
The \emph{optional decomposition theorem} (cf. Theorem 1 in \cite{follmer1997optional}) implies the existence of some predictable $S_{m}^{x}$-integrable process $\pi = (\pi(t))_{t \in [0,T]}$ and an increasing adapted process $A=(A(t))_{t \in [0,T]}$ with $A(0) = 0$ such that for every $t \in [0,T]$ we have

\begin{equation}
C_{m}^{\mathcal{S}}(t,S_{m}^{x}(t)) = C_{m}^{\mathcal{S}}(0,x) + \int_{0}^{t} \pi(u\m)\ dS_{m}^{x}(u) - A(t).  
\label{eq:aux_doob}
\end{equation}
Equating \eqref{eq:C} and \eqref{eq:aux_doob} and using that the process $A$ is of finite variation because it is increasing, the calculation of the quadratic variation yields the following implication:

%
\begingroup
\allowdisplaybreaks
\begin{align}
0 &= \int_{0}^{t} \left(C_{m}^{\mathcal{S}\prime}(u,S_{m}^{x}(u\m)) - \pi(u\m)\right)^{2}\ S_{m}^{x}(u)^{2} \gamma(u,S_{m}^{x}(u))^{2} \ du \notag \\
&\ \ + \int_{0}^{t} \int_{\I} \Bigg(\left(C_{m}^{\mathcal{S}\prime}(u,S_{m}^{x}(u\m)) - \pi(u\m)\right) S_{m}^{x}(u) \tilde{\gamma}(u,S_{m}^{x}(u),z) \notag  + C_{m}^{\mathcal{S}}(u,S_{m}^{x}(u\m) + \Delta S_{m}^{x}(u,z))\\
&\ \ \phantom{+ \int_{0}^{t} \int_{\I} \Bigg( (} - C_{m}^{\mathcal{S}}(u,S_{m}^{x}(u\m)) - \Delta S_{m}^{x}(u,z) C_{m}^{\mathcal{S}\prime}(u,S_{m}^{x}(u\m))\Bigg)^{2} \ \vartheta(dz)du \notag \\
\begin{split}
&=  \int_{0}^{t} \left(C_{m}^{\mathcal{S}\prime}(u,S_{m}^{x}(u\m)) - \pi(u\m)\right)^{2}\ S_{m}^{x}(u)^{2} \gamma(u,S_{m}^{x}(u))^{2} \ du\\
&\ \ + \int_{0}^{t} \int_{\I} \Big(C_{m}^{\mathcal{S}}(u,S_{m}^{x}(u\m) + \Delta S_{m}^{x}(u,z)) - C_{m}^{\mathcal{S}}(u,S_{m}^{x}(u\m))\\
&\ \ \phantom{+ \int_{0}^{t} \int_{\I} \Big(} -\pi(u\m) S_{m}^{x}(u) \tilde{\gamma}(u,S_{m}^{x}(u),z) \Big)^{2} \ \vartheta(dz)du. 
\end{split}
\label{eq:long_eq}
\end{align}
\endgroup
\noindent
Recall from \textsc{Step 4} of the proof of Theorem \ref{thm_convex} that the function $x \mapsto S_{m}^{x}(t)$ is differentiable in $x$ up to a set of measure zero $\mathcal{N}_{1}.$ The term in brackets in the last two lines of \eqref{eq:long_eq} is equal to zero for any fixed $\omega \in \Omega$ up to some set of measure zero $\mathcal{N}_{2}.$ Define $\mathcal{N} := \mathcal{N}_{1} \cup \mathcal{N}_{2}.$ In the sequel, we restrict to $\omega \in \Omega \setminus \mathcal{N}.$ Define $\mathbb{T}(\omega) := \{t \in [0,T]| \Delta S_{m}^{x}(t,z,\omega) \neq 0\}.$ \textsc{Step 1} of the proof of Theorem \ref{thm_main} states that the image of the mapping $x \mapsto S_{m}^{x}(t)$ is $\mathbb{R}_{+}.$ Thus, fixing $(t,s,z) \in [0,T] \times \mathbb{R}_{+} \times \I$, it must hold that

\begin{equation}
C_{m}^{\mathcal{S}}(t,s+s\tilde{\gamma}(t,s,z)) - C_{m}^{\mathcal{S}}(t,s) - s\tilde{\gamma}(t,s,z) \pi(t,s) = 0.
\label{eq:key2}
\end{equation}
\noindent
We turn to the discussion of the cases (i) and (ii): 

\begin{enumerate}[(i)]
	\item If $\gamma(t,s) \neq 0$ for all $(t,s) \in [0,T] \times \mathbb{R}_{+}$, we learn from the first line of \eqref{eq:long_eq} that \eqref{eq:key2} holds with $\pi(t,s) = C_{m}^{\mathcal{S}\prime}(t,s).$ Consequently, \textsc{Step 1} implies that $C_{m}^{\mathcal{S}}(t,\cdot)$ is linear in the second variable for every $t \in \mathbb{T}.$ Since the set of jump times  $\mathbb{T}(\omega)$ is dense in $[0,T]$ for every fixed $\omega \in \Omega \setminus \mathcal{N}$ (see \cite{ct}, p.84 for a proof)  and as $C_{m}^{\mathcal{S}}$ is continuous by Assumption \ref{ass_ito_C}, we conclude that $C_{m}^{\mathcal{S}}(t,\cdot)$ is linear in the second argument for every $t \in [0,T].$
	\\
	\item Suppose $\gamma \equiv 0.$ Let $z_{1},z_{2} \in \I,$  such that $\tilde{\gamma}(t,s,z_{1}) \neq \tilde{\gamma}(t,s,z_{2}).$ By continuity of $\tilde{\gamma}$ and \eqref{eq:key2} it must hold that 
	
	$$\frac{C_{m}^{\mathcal{S}}(t,s+ s\tilde{\gamma}(t,s,z_{1}))-C_{m}^{\mathcal{S}}(t,s)}{s\tilde{\gamma}(t,s,z_{1})} = \frac{C_{m}^{\mathcal{S}}(t,s+ s\tilde{\gamma}(t,s,z_{2}))-C_{m}^{\mathcal{S}}(t,s)}{s\tilde{\gamma}(t,s,z_{2})}.$$
	\noindent
	We immediately see that \textsc{Step 2} implies that $C_{m}^{\mathcal{S}}(t,\cdot)$ must be linear in the second component for every $t \in \mathbb{T}.$ By the same argument as in (i) we conclude linearity for every $t \in [0,T].$ 

\end{enumerate}
\end{proof}

\begin{remark}
We remark that Theorem \ref{thm_superhedge} is a robustness result in incomplete markets upon assuming that the superhedging price process is Markov. A perfect hedge is possible due to the induced linearity of the payoff function. Thus, robustness is trivial.  
\end{remark}

Finally, we broach the issue of robustness in a complete market when the stock price exhibits jumps. The following corollary treats the case $\gamma \equiv 0$ and $\vartheta(dz) = \lambda\ \mathbbm{1}_{\{\alpha\}}(dz)$ for some $\lambda > 0$ and $\alpha \in \I.$ Hence, the jump component corresponds to a homogeneous Poisson process. In this case the risk-neutral measure is unique and consequently the set $\mathcal{Q}_{em}$ is a singleton containing $\mathbb{P}$ solely. To make this explicit, we denote the contingent claim price at time $t$ by $C_{m}^{\mathbb{P}}(t,S_{m}^{x}(t)).$  
\begin{corollary}
Suppose all assumptions of Theorem \ref{thm_superhedge} are satisfied, $\gamma \equiv 0$ and $\vartheta(dz) = \lambda\ \mathbbm{1}_{\{\alpha\}}(dz)$ for some $\lambda > 0$ and $\alpha \in \I.$ The replicating Delta-strategy in the misspecified model is given by $\pi = (\pi(t,S_{m}^{x}(t)))_{t \in [0,T]}$ with  
$\pi(t,S_{m}^{x}(t)) = \frac{C_{m}^{\mathbb{P}}(t,S_{m}^{x}(t\m) + \Delta S_{m}^{x}(t,\alpha))-C_{m}^{\mathbb{P}}(t,S_{m}^{x}(t\m))}{\Delta S_{m}^{x}(t,\alpha)}.$
 If in addition
$\emph{sgn}(\tilde{\gamma}(t,\tilde{S}_{m}^{x}(t),\alpha)-\eta(t,\alpha)) = \emph{sgn}(\eta(t,\alpha)),$
$d\mathbb{P} \times dt$-a.s., then following this strategy and trading in the real stock $S$ yields an a.s. super-replication of $h(S(T))$, i.e., 
$$C_{m}^{\mathbb{P}}(0,x) + \int_{0}^{T} \pi(u,S(u\m))\ dS(u) \geq h(S(T)).$$
\label{cor_poisson}
\end{corollary}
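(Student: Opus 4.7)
The plan is to derive the replicating strategy $\pi$ from the martingale representation of $C_m^{\mathbb{P}}(t, S_m^x(t))$ and then to track the discrepancy $e(t) := V(t) - C_m^{\mathbb{P}}(t, S(t))$, where $V(t) := C_m^{\mathbb{P}}(0, x) + \int_0^t \pi(u, S(u-))\, dS(u)$ is the portfolio value when the hedge is executed against the true stock $S$; the goal is to show $e(T) \geq 0$ a.s.

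For the replicating formula, I observe that the misspecified market is complete (only one Poisson process, no Brownian motion), so $\mathcal{Q}_{em} = \{\mathbb{P}\}$ and $C_m^{\mathbb{P}}(t, S_m^x(t))$ is a $\mathbb{P}$-martingale. Applying Proposition \ref{lem_ito} (valid because $\gamma \equiv 0$, $\vartheta(\I) < \infty$, and by Assumption \ref{ass_ito_C} $C_m^{\mathbb{P}} \in C^{1,1}$), compensating the jump integral, and equating the drift to zero yields the PIDE
\begin{equation*}
\dot{C}_m^{\mathbb{P}}(t, s) + \lambda\bigl[C_m^{\mathbb{P}}(t, s(1+\tilde{\gamma}(t,s,\alpha))) - C_m^{\mathbb{P}}(t, s) - C_m^{\mathbb{P}\prime}(t, s)\, s\tilde{\gamma}(t,s,\alpha)\bigr] = 0 \tag{$\ast$}
\end{equation*}
for all $(t, s) \in [0, T] \times \mathbb{R}_+$ (appealing to Step 1 of the proof of Theorem \ref{thm_main} for the fact that $S_m^x(t)$ attains every positive value). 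The surviving martingale part of $C_m^{\mathbb{P}}(t, S_m^x(t))$ equals $\int_0^t [C_m^{\mathbb{P}}(u, S_m^x(u-) + \Delta S_m^x(u, \alpha)) - C_m^{\mathbb{P}}(u, S_m^x(u-))]\, d\tilde{N}(u)$; since $dS_m^x(u) = \Delta S_m^x(u, \alpha)\, d\tilde{N}(u)$, this coincides with $\int_0^t \pi(u, S_m^x(u-))\, dS_m^x(u)$, giving the stated formula for $\pi$.

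For the super-replication, I apply Proposition \ref{lem_ito} to $C_m^{\mathbb{P}}(t, S(t))$, split $\int_0^t C_m^{\mathbb{P}\prime}(u, S(u-))\, dS(u)$ using $dS(u) = S(u-)\eta(u, \alpha)\,[dN(u) - \lambda\, du]$, and combine with the jump integral. The $C_m^{\mathbb{P}\prime}\cdot \Delta S$ terms telescope at each jump of $N$, and substituting $(\ast)$ into $de(t)$ together with the identity $\pi(u, s)\, s\tilde{\gamma}(u, s, \alpha) = C_m^{\mathbb{P}}(u, s(1+\tilde{\gamma}(u, s, \alpha))) - C_m^{\mathbb{P}}(u, s)$ should yield the clean decomposition
$$e(T) = \lambda \int_0^T\! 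S(u)\bigl[\pi(u, S(u)) - C_m^{\mathbb{P}\prime}(u, S(u))\bigr]\bigl[\tilde{\gamma}(u, S(u), \alpha) - \eta(u, \alpha)\bigr]\, du + \sum_{\tau \leq T:\, \Delta N(\tau)=1}\!\biggl\{\frac{A(\tau)\,\eta(\tau,\alpha)}{\tilde{\gamma}(\tau, S(\tau-),\alpha)} - B(\tau)\biggr\},$$
with $A(\tau) := C_m^{\mathbb{P}}(\tau, S(\tau-)(1+\tilde{\gamma}(\tau, S(\tau-),\alpha))) - C_m^{\mathbb{P}}(\tau, S(\tau-))$ and $B(\tau) := C_m^{\mathbb{P}}(\tau, S(\tau-)(1+\eta(\tau,\alpha))) - C_m^{\mathbb{P}}(\tau, S(\tau-))$.

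The main obstacle is showing both summands are pointwise non-negative. The sign condition forces $\tilde{\gamma}$ and $\eta$ to share the same nonzero sign with $|\tilde{\gamma}| \geq |\eta|$, so $\mathrm{sgn}(\tilde{\gamma} - \eta) = \mathrm{sgn}(\tilde{\gamma})$; convexity of $C_m^{\mathbb{P}}$ (Theorem \ref{thm_convex}) gives the classical tangent-versus-secant comparison $\mathrm{sgn}(\pi - C_m^{\mathbb{P}\prime}) = \mathrm{sgn}(\tilde{\gamma})$, so the product in the drift integrand is non-negative. For the jump summand, the inequality $A\eta/\tilde{\gamma} \geq B$ reduces, after a brief case split on $\mathrm{sgn}(\eta)$, to the monotonicity of the secant slope of a convex function with fixed base $s$: because $s(1+\tilde{\gamma})$ lies strictly further from $s$ than $s(1+\eta)$ on the same side, the secant from $s$ to $s(1+\tilde{\gamma})$ dominates the secant from $s$ to $s(1+\eta)$ in absolute slope. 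Summing the two non-negative contributions gives $e(T) \geq 0$ a.s., which is precisely the desired super-replication.
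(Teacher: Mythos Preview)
Your proof is correct, and the decomposition you obtain is valid: a direct computation confirms that the drift of $e$ equals $\lambda S(u)(\tilde{\gamma}-\eta)\bigl[\pi - C_m^{\mathbb{P}\prime}\bigr]$ and that the jump at a Poisson time is $\frac{\eta}{\tilde{\gamma}}A - B$, and your secant-versus-tangent and secant-slope-monotonicity arguments handle both signs of $\eta$ correctly.

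The paper organizes the computation differently. Rather than splitting $e(T)$ into an uncompensated jump sum and a residual drift, it rewrites the portfolio integral $\int_0^T \pi\, dS$ as $\int_0^T C_m^{\mathbb{P}\prime}\, dS$ plus a compensated jump integral, invokes It\^o's formula for $C_m^{\mathbb{P}}(T,S(T))=h(S(T))$, and then obtains the inequality in a single stroke by appealing to Lemma~\ref{lem_mon} (the convexity lemma) to pass from the $\eta$-integrand to the $\tilde\gamma$-integrand; the PIDE \eqref{eq:LD} then shows the resulting bracket vanishes. Your route is more explicit: by working with $dN$ rather than $d\tilde N$ you isolate the pathwise jump contributions and show directly that each one is nonnegative, which makes the a.s.\ nature of the super-replication completely transparent and avoids the need to cite Lemma~\ref{lem_mon}. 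The paper's route, on the other hand, recycles existing machinery and keeps the argument short. Both rely on the same three ingredients --- completeness giving the PIDE $(\ast)$ for all $s>0$ via the surjectivity of the flow, It\^o applied to $C_m^{\mathbb{P}}(\cdot,S(\cdot))$, and convexity of $C_m^{\mathbb{P}}$ in the spatial variable --- so the difference is one of bookkeeping rather than substance.
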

\begin{proof}
We start with \eqref{eq:C} and use \eqref{eq:aux_doob} to prove the corollary. Since the market is complete, the claim $h(S_{m}^{x}(T))$ is perfectly replicable and the process $A$ in \eqref{eq:aux_doob} (allowing for the interpretation of the cash amount that can be withdrawn at time $t$) is then (cf. \eqref{eq:price1}) equal to zero, i.e., 
\begin{align*}
A(t) &= \int_{0}^{t} \dot{C}_{m}^{\mathbb{P}}(u,S_{m}^{x}(u))\ du +\lambda \int_{0}^{t} \Big(C_{m}^{\mathbb{P}}(u,S_{m}^{x}(u\m) + \Delta S_{m}^{x}(u,\alpha)) - C_{m}^{\mathbb{P}}(u,S_{m}^{x}(u\m))\\
&\ \ \phantom{\int_{0}^{t} \dot{C}_{m}^{\mathbb{P}}(u,S_{m}^{x}(u))\ du +\lambda \int_{0}^{t} \Big(}\ - \Delta S_{m}^{x}(u,\alpha) C_{m}^{\mathbb{P}'}(u,S_{m}^{x}(u\m))  \Big)\  du  =0,
\end{align*}
\noindent
for Lebesgue-almost all $t \in [0,T].$ Then the \emph{Lebesgue differentiation theorem} implies that \eqref{eq:A_gen} holds with equality:
\begin{align}
\begin{split}
0 &= \dot{C}_{m}^{\mathbb{P}}(t,S_{m}^{x}(t)) +\lambda \Big(C_{m}^{\mathbb{P}}(t,S_{m}^{x}(t) + S_{m}^{x}(t)\tilde{\gamma}(t,S_{m}^{x}(t),\alpha)) - C_{m}^{\mathbb{P}}(t,S_{m}^{x}(t))\\
&\ \ \phantom{\dot{C}_{m}^{\mathbb{P}}(t,S_{m}^{x}(t)) +\lambda \Big(}\ - S_{m}^{x}(t)\tilde{\gamma}(t,S_{m}^{x}(t),\alpha) C_{m}^{\mathbb{P}'}(t,S_{m}^{x}(t))  \Big).  
\end{split}
\label{eq:LD}
\end{align}
We can replace $S_{m}^{x}(t)$ in \eqref{eq:LD} by the the true stock price $S(t)$ because \eqref{eq:LD} holds for all $s \in \mathbb{R}_{+}$ (see \textsc{Step 2} of the proof of Theorem \ref{thm_main} for detailed arguments).  Since $C_{m}^{\mathbb{P}}(T,S(T)) = h(S(T)),$ trading in the physical stock $S$ yields
\begingroup
\allowdisplaybreaks
\begin{align*}
\begin{split}
&C_{m}^{\mathbb{P}}(0,x) + \int_{0}^{T}\left( \frac{C_{m}^{\mathbb{P}}(u,S(u\m) + \Delta S(u,\alpha))-C_{m}^{\mathbb{P}}(u,S(u\m))}{\Delta S(u,\alpha)}\right)\ dS(u)\notag \\
&=C_{m}^{\mathbb{P}}(0,x) + \int_{0}^{T} C_{m}^{\mathbb{P}'}(u,S(u\m))\ dS(u)\\
&\ \  + \int_{0}^{T} \Big(C_{m}^{\mathbb{P}}(u,S(u\m) + \Delta S(u,\alpha)) - C_{m}^{\mathbb{P}}(u,S(u\m)) - \Delta S(u,\alpha) C_{m}^{\mathbb{P}'}(u,S(u\m))  \Big)\ \tilde{J}(du,dz)\end{split}  \\
& = h(S(T)) - \Bigg( \int_{0}^{T} \dot{C}_{m}^{\mathbb{P}}(u,S(u))\ du \\
&\ \ +\lambda \int_{0}^{T} \Big(C_{m}^{\mathbb{P}}(u,S(u\m) + \Delta S(u,\alpha)) - C_{m}^{\mathbb{P}}(u,S(u\m)) - \Delta S(u,\alpha) C_{m}^{\mathbb{P}'}(u,S(u\m)) \Big)\  du \Bigg)\notag  \\
&\geq h(S(T)) - \Bigg( \int_{0}^{T} \dot{C}_{m}^{\mathbb{P}}(u,S(u)) du +\lambda\int_{0}^{T} \Big(C_{m}^{\mathbb{P}}(u,S(u\m) + S(u\m) \tilde{\gamma}(u,S(u\m),\alpha))\notag  - C_{m}^{\mathbb{P}}(u,S(u\m))\\
&\ \ \phantom{h(S(T)) - \Bigg( \int_{0}^{T} \dot{C}_{m}^{\mathbb{P}}(u,S(u)) du +\lambda\int_{0}^{T} \Big(} - S(u\m) \tilde{\gamma}(u,S(u\m),\alpha) C_{m}^{\mathbb{P}'}(u,S(u\m)) \Big)\ du \Bigg)= h(S(T)), \notag
\end{align*}
\endgroup
where the second equality is justified by \emph{It\^{o}'s formula} and the last inequality follows from the convexity of $C_{m}^{\mathbb{P}}$ in the second component (cf. Corollary \ref{cor_convex}) combined with Lemma \ref{lem_mon}. The final equality is immediate from \eqref{eq:LD}. 
\end{proof}

\setcounter{section}{0}
\setcounter{theorem}{0}
\setcounter{equation}{0}
\renewcommand{\theequation}{\thesection.\arabic{equation}}
\section*{Appendix}
\renewcommand{\thesection}{A}

\begin{lemma}
	\label{lemmagirsanov}
Consider the process $S_{m}^{x}$ given by \eqref{eq:m_stock} w.r.t. the reference measure $\mathbb{P}.$ Then it holds that $\mathbb{Q}<<\p$ (i.e., $\Q$ is absolutely continuous with respect to $\p$) if and only if there exist predictable processes $\psi: [0,T] \rightarrow \mathbb{R}$ and $\theta: [0,T] \times  \I \rightarrow \mathbb{R}$ with $\theta(t,z) < 1,$  
such that the process $\xi = (\xi(t))_{t \in [0,T]}$ with
\begin{align*}
\xi(t) &:= \Big(-\int_{0}^{t} \psi(u) \ dW(u) - \frac{1}{2} \int_{0}^{t} \psi(u)^{2} \ du \\
&\ \ + \int_{0}^{t} \int_{\I} \ln(1-\theta(u,z)) \ \tilde{J}(du,dz)  + \int_{0}^{t} \int_{\I} \left(\ln(1-\theta(u,z)) + \theta(u,z)\right)\ \vartheta(dz)du \Big)
\end{align*}
is a well-defined martingale with $\xi_t=\mathbb{E}_{\mathbb{P}}[\frac{d\Q}{d\p}|\mathcal{F}_t] .$ Define
\begin{align}
\begin{split}
dW_{\mathbb{Q}}(t) &= \psi(t) \ dt + dW(t), \\
\tilde{J}_{\mathbb{Q}}(dt,dz) &= \theta(t,z) \ \vartheta(dz)dt + \tilde{J}(dt,dz).
\end{split}
\label{eq:girsanov}
\end{align} 
Then $W_{\mathbb{Q}}$ is a standard Brownian motion w.r.t $\mathbb{Q}$ and $\tilde{J}_{\mathbb{Q}}(dt,dz)$ is the $\mathbb{Q}$-compensated version of the Poisson random measure $J(dt,dz).$ Finally, $\Q\sim \p$ is in $ \mathcal{Q}_{em}$ if and only if
\begin{equation}
\gamma(t,\tilde{S}_{m}^{x}(t)) \psi(t) + \int_{\I} \tilde{\gamma}(t,\tilde{S}_{m}^{x}(t\m),z) \theta(t,z) \ \vartheta(dz) = 0, \quad d\mathbb{P} \times dt\,\,a.s.
\label{eq:cond_martingale}
\end{equation} 
\label{lemma_girsanov}
\end{lemma}
\begin{proof}
See for instance \cite{oksjump}, Chapter 1.4.
\end{proof}
Hence, every probability measure $\Q$ which is absolutely continuous with respect to $\p$ induces a pair $(\phi,\theta)$. On the other, it follows for instance by \cite{kazamaki} that a pair $(\psi,\theta)$ induces a probablity measure $\Q^{(\psi,\theta)}$ with Radon-Nikodym derivative $\xi_T$ defined above if there exists $\bar{\delta}>0$ such that $\theta<1-\bar{\delta}$, and  $\psi$ and $\theta$ are BMO processes, meaning that there exists $C>0$ such that for all $t$
$$ \mathbb{E}\bigg[\int_t^T\Big[ |\psi(s)|^2 +\int_{\I} |\theta(s,z)|^2\nu (dz) \Big]ds\bigg|\mathcal{F}_t\bigg]\leq C.
$$

\footnotesize
\bibliographystyle{apa}
\bibliography{bib}

\begin{thebibliography}{}

\bibitem[\protect\astroncite{Applebaum}{2009}]{applebaum}
Applebaum, D. (2009).
\newblock {\em L\'evy Processes and Stochastic Calculus}.
\newblock Cambridge University Press, Cambridge.

\bibitem[\protect\astroncite{Arai}{2011}]{arai2011good}
Arai, T. (2011).
\newblock Good deal bounds induced by shortfall risk.
\newblock {\em SIAM Journal on Financial Mathematics}, 2(1):1--21.

\bibitem[\protect\astroncite{Arai and Fukasawa}{2014}]{arai2014convex}
Arai, T. and Fukasawa, M. (2014).
\newblock Convex risk measures for good deal bounds.
\newblock {\em Mathematical Finance}, 24(3):464--484.

\bibitem[\protect\astroncite{Armstrong and
  Brigo}{2019}]{armstrong2019ineffectiveness}
Armstrong, J. and Brigo, D. (2019).
\newblock The ineffectiveness of coherent risk measures.
\newblock {\em arXiv preprint arXiv:1902.10015}.

\bibitem[\protect\astroncite{Arnaudon et~al.}{2008}]{arnaudon2008convex}
Arnaudon, M., Breton, J.-C., and Privault, N. (2008).
\newblock Convex ordering for random vectors using predictable representation.
\newblock {\em {Potential Analysis}}, 29(4):327--349.

\bibitem[\protect\astroncite{Barndorff-Nielsen et~al.}{2012}]{barndorff}
Barndorff-Nielsen, O.~E., Mikosch, T., and Resnick, S.~I. (2012).
\newblock {\em L{\'e}vy Processes: Theory and Applications}.
\newblock Springer Science \& Business Media.

\bibitem[\protect\astroncite{Bensoussan and
  Lions}{1982}]{bensoussan1982controle}
Bensoussan, A. and Lions, J. (1982).
\newblock {\em Contr{\^o}le impulsionnel et in{\'e}quations quasi
  variationnelles}.
\newblock M{\'e}thodes math{\'e}matiques de l'informatique. Dunod.

\bibitem[\protect\astroncite{Bergenthum and
  R{\"u}schendorf}{2006}]{bergenthum2006comparison}
Bergenthum, J. and R{\"u}schendorf, L. (2006).
\newblock Comparison of option prices in semimartingale models.
\newblock {\em Finance and Stochastics}, 10(2):222.

\bibitem[\protect\astroncite{Bergenthum and
  R{\"u}schendorf}{2007}]{bergenthum2007comparison}
Bergenthum, J. and R{\"u}schendorf, L. (2007).
\newblock {Comparison of semimartingales and L{\'e}vy processes}.
\newblock {\em The Annals of Probability}, 35(1):228--254.

\bibitem[\protect\astroncite{Bergenthum and
  R{\"u}schendorf}{2008}]{bergenthum2008comparison}
Bergenthum, J. and R{\"u}schendorf, L. (2008).
\newblock Comparison results for path-dependent options.
\newblock {\em Statistics \& Decisions}, 26(1):53--72.

\bibitem[\protect\astroncite{Bernardo and Ledoit}{2000}]{bernardo2000gain}
Bernardo, A.~E. and Ledoit, O. (2000).
\newblock Gain, loss, and asset pricing.
\newblock {\em Journal of political economy}, 108(1):144--172.

\bibitem[\protect\astroncite{Bertoin}{2009}]{bertoin}
Bertoin, J. (2009).
\newblock {\em L{\'e}vy Processes}.
\newblock Cambridge University Press, Cambridge.

\bibitem[\protect\astroncite{Bielecki et~al.}{2015}]{bielecki2015dynamic}
Bielecki, T.~R., Cialenco, I., and Chen, T. (2015).
\newblock Dynamic conic finance via backward stochastic difference equations.
\newblock {\em SIAM Journal on Financial Mathematics}, 6(1):1068--1122.

\bibitem[\protect\astroncite{Bion-Nadal and Nunno}{2020}]{bion2020fully}
Bion-Nadal, J. and Nunno, G.~D. (2020).
\newblock Fully-dynamic risk-indifference pricing and no-good-deal bounds.
\newblock {\em SIAM Journal on Financial Mathematics}, 11(2):620--658.

\bibitem[\protect\astroncite{Bj{\"o}rk and Slinko}{2006}]{slinko}
Bj{\"o}rk, T. and Slinko, I. (2006).
\newblock Towards a general theory of good-deal bounds.
\newblock {\em Review of Finance}, 10(2):221--260.

\bibitem[\protect\astroncite{Carr et~al.}{2001}]{carr2001pricing}
Carr, P., Geman, H., and Madan, D.~B. (2001).
\newblock Pricing and hedging in incomplete markets.
\newblock {\em Journal of Financial Economics}, 62(1):131--167.

\bibitem[\protect\astroncite{{\v{C}}ern{\`y} and
  Hodges}{2002}]{vcerny2002theory}
{\v{C}}ern{\`y}, A. and Hodges, S. (2002).
\newblock The theory of good-deal pricing in financial markets.
\newblock In {\em Mathematical Finance—Bachelier Congress 2000}, pages
  175--202. Springer.

\bibitem[\protect\astroncite{Cherny}{2008}]{cherny2008pricing}
Cherny, A.~S. (2008).
\newblock Pricing with coherent risk.
\newblock {\em Theory of Probability \& Its Applications}, 52(3):389--415.

\bibitem[\protect\astroncite{Cochrane and Sa\'a-Requejo}{2000}]{gdb}
Cochrane, J.~H. and Sa\'a-Requejo, J. (2000).
\newblock Beyond arbitrage: Good-deal asset price bounds in incomplete markets.
\newblock {\em Journal of Political Economy}, 108(1):79--119.

\bibitem[\protect\astroncite{Cont}{2001}]{cont}
Cont, R. (2001).
\newblock Empirical properties of asset returns: stylized facts and statistical
  issues.
\newblock {\em Quantitative Finance}, 1(2):223--236.

\bibitem[\protect\astroncite{Cont}{2006}]{cont2006model}
Cont, R. (2006).
\newblock Model uncertainty and its impact on the pricing of derivative
  instruments.
\newblock {\em Mathematical Finance}, 16(3):519--547.

\bibitem[\protect\astroncite{Cont and Tankov}{2012}]{ct}
Cont, R. and Tankov, P. (2012).
\newblock {\em Financial Modelling with Jump Processes}.
\newblock Chapman \& Hall, London, 2. rev. edition.

\bibitem[\protect\astroncite{Cont and Voltchkova}{2005}]{contv}
Cont, R. and Voltchkova, E. (2005).
\newblock {Integro-differential equations for option prices in exponential
  L{\'e}vy models}.
\newblock {\em Finance and Stochastics}, 9(3):299--325.

\bibitem[\protect\astroncite{Criens}{2019}]{criens2019couplings}
Criens, D. (2019).
\newblock Couplings for processes with independent increments.
\newblock {\em Statistics \& Probability Letters}, 146:161--167.

\bibitem[\protect\astroncite{Ekstr{\"o}m
  et~al.}{2005}]{ekstrom2005superreplication}
Ekstr{\"o}m, E., Janson, S., and Tysk, J. (2005).
\newblock Superreplication of options on several underlying assets.
\newblock {\em Journal of Applied Probability}, 42(1):27--38.

\bibitem[\protect\astroncite{El~Karoui et~al.}{1998}]{kjs}
El~Karoui, N., Jeanblanc-Picqu{\`e}, M., and Shreve, S.~E. (1998).
\newblock {Robustness of the Black and Scholes formula}.
\newblock {\em Mathematical Finance}, 8(2):93--126.

\bibitem[\protect\astroncite{F{\"o}llmer and
  Kabanov}{1997}]{follmer1997optional}
F{\"o}llmer, H. and Kabanov, Y.~M. (1997).
\newblock {Optional decomposition and Lagrange multipliers}.
\newblock {\em Finance and Stochastics}, 2(1):69--81.

\bibitem[\protect\astroncite{F\"{o}llmer and Schied}{2011}]{fs}
F\"{o}llmer, H. and Schied, A. (2011).
\newblock {\em Stochastic finance: an introduction in discrete time}.
\newblock De Gruyter graduate. De Gruyter, Berlin [u.a.], 3., rev. and extend.
  ed. edition.

\bibitem[\protect\astroncite{Friedman}{1964}]{friedman}
Friedman, A. (1964).
\newblock {\em Partial Differential Equations of Parabolic Type}.
\newblock Prentice-Hall, Englewood Cliffs, N.J.

\bibitem[\protect\astroncite{Gapeev et~al.}{2011}]{gapeev2011robust}
Gapeev, P.~V., Sottinen, T., and Valkeila, E. (2011).
\newblock {Robust replication in H-self-similar Gaussian market models under
  uncertainty}.
\newblock {\em Statistics \& Decisions}, 28(1):37--50.

\bibitem[\protect\astroncite{Hobson}{1998}]{hobson1998volatility}
Hobson, D.~G. (1998).
\newblock Volatility misspecification, option pricing and superreplication via
  coupling.
\newblock {\em The Annals of Applied Probability}, 8(1):193--205.

\bibitem[\protect\astroncite{Hobson}{2010}]{hobson2010comparison}
Hobson, D.~G. (2010).
\newblock Comparison results for stochastic volatility models via coupling.
\newblock {\em Finance and Stochastics}, 14(1):129.

\bibitem[\protect\astroncite{Jaschke and
  K{\"u}chler}{2001}]{jaschke2001coherent}
Jaschke, S. and K{\"u}chler, U. (2001).
\newblock Coherent risk measures and good-deal bounds.
\newblock {\em Finance and Stochastics}, 5(2):181--200.

\bibitem[\protect\astroncite{Kazamaki}{1979}]{kazamaki}
Kazamaki, N. (1979).
\newblock {Transformation of $H^{p}$-martingales by a change of law}.
\newblock {\em Probability Theory and Related Fields}, 46(3):343--349.

\bibitem[\protect\astroncite{K{\"o}pfer and
  R{\"u}schendorf}{2019}]{kopfer2019comparison}
K{\"o}pfer, B. and R{\"u}schendorf, L. (2019).
\newblock Comparison of path-dependent functionals of semimartingales.
\newblock {\em arXiv preprint arXiv:1908.10076}.

\bibitem[\protect\astroncite{Kramkov}{1996}]{kramkov}
Kramkov, D.~O. (1996).
\newblock Optional decomposition of supermartingales and hedging contingent
  claims in incomplete security markets.
\newblock {\em Probability Theory and Related Fields}, 105(4):459--479.

\bibitem[\protect\astroncite{Kr{\"a}tschmer
  et~al.}{2018}]{kratschmer2018optimal}
Kr{\"a}tschmer, V., Ladkau, M., Laeven, R.~J., Schoenmakers, J.~G., and Stadje,
  M. (2018).
\newblock Optimal stopping under uncertainty in drift and jump intensity.
\newblock {\em Mathematics of operations research}, 43(4):1177--1209.

\bibitem[\protect\astroncite{Kromer et~al.}{2015}]{fk}
Kromer, E., Overbeck, L., and R{\"o}der, J. (2015).
\newblock {Feynman--Kac for functional jump diffusions with an application to
  Credit Value Adjustment}.
\newblock {\em Statistics \& Probability Letters}, 105(1):120--129.

\bibitem[\protect\astroncite{Kunita}{2004}]{kunita2004stochastic}
Kunita, H. (2004).
\newblock {Stochastic differential equations based on L{\'e}vy processes and
  stochastic flows of diffeomorphisms}.
\newblock In Rao, M., editor, {\em Real and stochastic analysis}, chapter~5,
  pages 305--373. Birkh\"auser, Boston, USA.

\bibitem[\protect\astroncite{Kyprianou et~al.}{2006}]{kyprianou}
Kyprianou, A., Schoutens, W., and Wilmott, P. (2006).
\newblock {\em Exotic Option Pricing and Advanced L{\'e}vy Models}.
\newblock John Wiley \& Sons.

\bibitem[\protect\astroncite{Kyprianou}{2014}]{kyprianou2014fluctuations}
Kyprianou, A.~E. (2014).
\newblock {\em Fluctuations of L{\'e}vy processes with Applications:
  Introductory Lectures}.
\newblock Springer Science \& Business Media.

\bibitem[\protect\astroncite{{\O}ksendal and Sulem}{2005}]{oksjump}
{\O}ksendal, B.~K. and Sulem, A. (2005).
\newblock {\em Applied Stochastic Control of Jump Diffusions}.
\newblock Springer, Berlin; Heidelberg.

\bibitem[\protect\astroncite{Pag{\`e}s}{2016}]{pages2016convex}
Pag{\`e}s, G. (2016).
\newblock Convex order for path-dependent derivatives: a dynamic programming
  approach.
\newblock In {\em S{\'e}minaire de Probabilit{\'e}s XLVIII}, pages 33--96.
  Springer.

\bibitem[\protect\astroncite{Peng and Zhu}{2006}]{peng2006necessary}
Peng, S. and Zhu, X. (2006).
\newblock Necessary and sufficient condition for comparison theorem of
  1-dimensional stochastic differential equations.
\newblock {\em Stochastic Processes and their Applications}, 116(3):370--380.

\bibitem[\protect\astroncite{Protter}{2005}]{protter}
Protter, P.~E. (2005).
\newblock {\em Stochastic Integration and Differential Equations}.
\newblock Springer, Berlin; Heidelberg, 2. edition.

\bibitem[\protect\astroncite{Qiao and Zhang}{2008}]{qiao2008homeomorphism}
Qiao, H. and Zhang, X. (2008).
\newblock {Homeomorphism flows for non-Lipschitz stochastic differential
  equations with jumps}.
\newblock {\em Stochastic Processes and their Applications},
  118(12):2254--2268.

\bibitem[\protect\astroncite{Rudin}{1987}]{rudin}
Rudin, W. (1987).
\newblock {\em Real and Complex Analysis}.
\newblock McGraw-Hill, New York, 3. edition.

\bibitem[\protect\astroncite{Sato}{2013}]{sato}
Sato, K.-i. (2013).
\newblock {\em {L{\'e}vy Processes and Infinitely Divisible Distributions}}.
\newblock Cambridge University Press, Cambridge, rev. edition.

\bibitem[\protect\astroncite{Schied and Stadje}{2007}]{schiedchef}
Schied, A. and Stadje, M. (2007).
\newblock Robustness of delta hedging for path-dependent options in local
  volatility models.
\newblock {\em Journal of Applied Probability}, 44(4):865--879.

\bibitem[\protect\astroncite{Staum}{2004}]{staum2004fundamental}
Staum, J. (2004).
\newblock Fundamental theorems of asset pricing for good deal bounds.
\newblock {\em Mathematical Finance}, 14(2):141--161.

\bibitem[\protect\astroncite{Zhang}{2005}]{zhang2005homeomorphic}
Zhang, X. (2005).
\newblock {Homeomorphic flows for multi-dimensional SDEs with non-Lipschitz
  coefficients}.
\newblock {\em Stochastic Processes and their Applications}, 115(3):435--448.

\end{thebibliography}
\footnotesize

\end{document}